\documentclass[11pt,a4paper]{article}

\usepackage[margin=1in]{geometry}
\usepackage{multirow}
\usepackage{amsmath}
\usepackage{mdframed}
\usepackage{amsthm}
\usepackage{longtable}
\usepackage{comment}
\allowdisplaybreaks
\usepackage{amsfonts}
\usepackage[normalem]{ulem}
\usepackage{amssymb}
\usepackage{bbm}
\usepackage{extarrows}
\usepackage{bm}
\usepackage{array}
\usepackage[backref,linktoc=all]{hyperref}
\hypersetup{colorlinks,linkcolor=red,citecolor=blue,urlcolor=gray}
\usepackage[capitalise]{cleveref}
\usepackage{enumitem}
\usepackage[dvipsnames]{xcolor}
\usepackage[caption=false]{subfig}

\title{Tight Revenue Gaps among Multi-Unit Mechanisms\footnote{A preliminary version of this work~\cite{JJLZ21} appears in Proceedings of the 22nd ACM Conference on Economics and Computation (EC'21). The proof for the ``Anonymous Reserve vs.\ Anonymous Pricing'' problem is omitted there yet is included in the current full version.

\phantom{${}^\ast$} {\bf Funding:} The first author is supported by NSF IIS-1838154, NSF CCF-1703925, NSF CCF-1814873 and NSF CCF-1563155. The second/fourth authors are supported by NSF CAREER award CCF-1844887. The third author is supported by Science and Technology Innovation 2030 – ``New Generation of Artificial Intelligence'' Major Project No.(2018AAA0100903), NSFC grant 61922052 and 61932002, Innovation Program of Shanghai Municipal Education Commission, Program for Innovative Research Team of Shanghai University of Finance and Economics (IRTSHUFE) and the Fundamental Research Funds for the Central Universities.}}


\author{
Yaonan Jin\thanks{Department of Computer Science, Columbia University, New York, NY 10027 USA (yj2552@columbia.edu).}
\and 
Shunhua Jiang\thanks{Department of Computer Science, Columbia University, New York, NY 10027 USA (sj3005@columbia.edu).}
\and Pinyan Lu\thanks{Institute for Theoretical Computer Science, Shanghai University of Finance and Economics, Shanghai, People's Republic of China
(lu.pinyan@mail.shufe.edu.cn).}
\and 
Hengjie Zhang\thanks{Department of Computer Science, Columbia University, New York, NY 10027 USA (hz2613@columbia.edu).}
}

\date{}

\theoremstyle{plain}
\newtheorem{theorem}{Theorem}
\newtheorem*{theorem*}{Theorem}
\newtheorem{lemma}{Lemma}
\newtheorem*{lemma*}{Lemma}
\newtheorem*{observation*}{Observation}
\newtheorem{corollary}{Corollary}
\newtheorem{fact}{Fact}
\newtheorem{claim}{Claim}

\theoremstyle{definition}

\newtheorem{example}{Example}

\theoremstyle{remark}
\newtheorem{remark}{Remark}

\crefname{theorem}{Theorem}{Theorems}
\crefname{lemma}{Lemma}{Lemmas}
\crefname{corollary}{Corollary}{Corollaries}
\crefname{fact}{Fact}{Facts}
\crefname{claim}{Claim}{Claims}
\crefname{definition}{Definition}{Definitions}
\crefname{example}{Example}{Examples}
\crefname{remark}{Remark}{Remarks}

\newcommand{\ignore}[1]{}
\usepackage[dvipsnames]{xcolor}
\usepackage{subfig}
\usepackage{graphicx}
\usepackage{tikz}
\usetikzlibrary{arrows, decorations.markings}
\usetikzlibrary{decorations.pathreplacing}

\tikzstyle{vecArrow} = [
thick,
decoration = {
	markings,
	mark = at position 1 with {
		\arrow[semithick]{open triangle 90}
	}
},
double distance = 3pt,
shorten >= 4.5pt,
preaction = {decorate},
postaction = {
	draw,
	line width = 3pt,
	white,
	shorten >= 4.5pt
}
]
\tikzstyle{innerWhite} = [
semithick,
white,
line width = 3pt,
shorten >= 3pt
]

\newcommand{\wh}{\widehat}

\newcommand{\ol}{\overline}

\renewcommand{\hat}{\wh}
\renewcommand{\bar}{\ol}

\DeclareMathOperator*{\E}{\mathbf{E}}

\renewcommand{\d}{\mathrm{d}}   
\newcommand{\lhs}{\mathrm{LHS}} 
\newcommand{\rhs}{\mathrm{RHS}} 

\newcommand{\eqdef}{\stackrel{\textrm{\tiny def}}{=}}
\newcommand{\eps}{\varepsilon}

\renewcommand{\vec}[1]{\ensuremath{\mathbf{#1}}}
\newcommand{\R}{\mathbb{R}}
\newcommand{\RP}{\R_{\geq 0}}
\newcommand{\N}{\mathbb{N}}
\newcommand{\NP}{\N_{\geq 1}}
\newcommand{\Z}{\mathbb{Z}}

\newcommand{\ar}{{\sf AR}}
\newcommand{\ap}{{\sf AP}}
\newcommand{\spm}{{\sf SPM}}
\newcommand{\opt}{{\sf OPT}}
\newcommand{\ear}{{\sf EAR}}

\newcommand{\AnonymousPricing}{\sf Anonymous Pricing}
\newcommand{\AnonymousReserve}{\sf Anonymous Reserve}
\newcommand{\MyersonAuction}{\sf Myerson Auction}
\newcommand{\SequentialPostedPricing}{\sf Sequential Posted Pricing}
\newcommand{\ExAnteRelaxation}{\sf Ex-Ante Relaxation}
\newcommand{\VCGAuction}{{\sf VCG Auction}}

\newcommand{\reg}{\textsc{Reg}}
\newcommand{\tri}{\textsc{Tri}}

\newcommand{\V}{\mathcal{V}}
\newcommand{\Q}{\mathcal{Q}}

\begin{document}
\maketitle


\begin{abstract}
This paper considers Bayesian revenue maximization in the $k$-unit setting, where a monopolist seller has $k$ copies of an indivisible item and faces $n$ unit-demand buyers (whose value distributions can be non-identical). Four basic mechanisms among others have been widely employed in practice and widely studied in the literature: {\sf Myerson Auction}, {\sf Sequential Posted-Pricing}, {\sf $(k + 1)$-th Price Auction with Anonymous Reserve}, and {\sf Anonymous Pricing}. Regarding a pair of mechanisms, we investigate the largest possible ratio between the two revenues (a.k.a.\ the revenue gap), over all possible value distributions of the buyers.

Divide these four mechanisms into two groups: (i)~the discriminating mechanism group, {\sf Myerson Auction} and {\sf Sequential Posted-Pricing}, and (ii)~the anonymous mechanism group, {\sf Anonymous Reserve} and {\sf Anonymous Pricing}. Within one group, the involved two mechanisms have an asymptotically tight revenue gap of $1 + \Theta(1 / \sqrt{k})$. In contrast, any two mechanisms from the different groups have an asymptotically tight revenue gap of $\Theta(\log k)$.
\end{abstract}






\thispagestyle{empty}
\newpage

\setcounter{page}{1}
\newpage
\section{Introduction}
\label{sec:intro}


``Simple vs.\ optimal'' is one of the central themes in Bayesian mechanism design. The revenue-optimal mechanisms are more of theoretical significance, but usually are complicated and are hard to implement. On the other hand, most daily-life mechanisms are much simpler, although sacrificing a (small) amount of revenue. This trade-off motivates the study on how well simple mechanisms can approximate the optimal mechanisms.

For example, consider selling a number of identical copies of a product on Amazon. (This scenario is characterized by the {\em multi-unit} model; see \cref{sec:background} for details.) The seller ideally would like to extract optimal revenues by using the remarkable Myerson's auction \cite{M81}, but often abandons it out of the following and other practical concerns:
\begin{itemize}
    \item Myerson' auction as a centralized {\em auction scheme} requires coordination between the seller and all potential buyers, which is inconvenient or even impossible in most real business.
    
    \item It individually charges the winning buyers {\em different payments} -- such price discrimination incurs fairness issues to the buyers and usually is illegal.
    
    \item It requires a complete profile of all potential buyers' value distributions. In practice, this means a full access to historical transaction records and other personal information, thus incurring privacy concerns to the buyers.
\end{itemize}
Instead, the seller simply posts a price. Each buyer on arrival makes a take-it-or-leave-it decision by himself, depending on whether the price is acceptable (and whether the product has not been sold out). This simple and prevalent mechanism, called {\em anonymous pricing}, clearly settles or at least mitigates the above issues. To justify the legitimacy of anonymous pricing, the remaining consideration is its revenue guarantees against Myerson's auction.



The last two decades have seen extensive progress on the ``simple versus optimal'' trade-off \cite[and the reference therein]{BK96, GHW01, BHW02, GHKKKM05, HR09, A14, CGL14, CGL15, FILS15, DFK16, CFHOV17, AHNPY19, JLTX19, JLQTX19a, JLQ19}. By now we can say that it constitutes a subfield within mechanism design. In this work, we will study this trade-off in the multi-unit model.

\subsection{Background}\label{sec:background}

Let us first review the previous results. In the most basic {\em single-item} model, four fundamental mechanisms among others are widely studied. Denote by $\vec{F} = \{F_{j}\}_{j \in [n]}$ the independent value distributions of buyers $j \in [n]$. These four mechanisms work as follows (see \Cref{subsec:prelim:mech} for the formal definitions).

\begin{itemize}
    \item {\AnonymousPricing} (\ap):
    This mechanism treats all buyers equally by posting a price $p$. 
    On arrival, a buyer will pay this price $p$ and take the item, when his value $b_{j} \sim F_{j}$ is higher than $p$ (and the item is still available). If the seller knows the value distributions $\{F_{j}\}_{j \in [n]}$, she would select a particular price $p$ to maximize her expected revenue among all {\AnonymousPricing} mechanisms.
    
    \item {\SequentialPostedPricing} (\spm):
    This mechanism selects an array of prices $\{p_{j}\}_{j \in [n]}$ and an ordering $\sigma: [n] \mapsto [n]$. The buyers join in the mechanism {\em sequentially} $\sigma(1), \cdots, \sigma(n)$, and each index-$\sigma(j)$ buyer must pay the {\em order-specific} price $p_{j}$ if winning. This discrimination can give better revenue than {\AnonymousPricing}.
    
    \item {\AnonymousReserve} (\ar):
    This is a variant of the {\sf Second-Price Auction}. The seller ignores the buyers whose bids $b_{j}$ are below an anonymous reserve $r$. The winner (which exists only if the highest bid $b_{(1)}$ is above the reserve $r$) is the highest of the remaining buyers, and his payment is the bigger one between the second highest bid $b_{(2)}$ and the reserve $r$.
    
    \item {\MyersonAuction} (\opt):
    A generic auction $\mathcal{A}: \{b_{j}\}_{j \in [n]} \mapsto (\vec{x}, \bm{\pi})$ is a mapping from the bids/values to the allocations $\vec{x} = (x_{j})_{j \in [n]}$ and the payments $\bm{\pi} = (\pi_{j})_{j \in [n]}$. In the single-item case, {\MyersonAuction} is the optimal one among those mappings \cite{M81}. (When the distributions $\{F_{j}\}_{j \in [n]}$ are identical, {\MyersonAuction} degenerates to {\AnonymousReserve}.)
\end{itemize}

\begin{figure}[h]
    \centering
    {\small
    \begin{tikzpicture}[scale = 2]
        \node(n1) at (2.25, 0.8) [draw, thick] {{\opt}: discriminating auction};
        \node(n2) at (0, 0) [draw, thick] {{\spm}: discriminating pricing};
        \node(n3) at (4.5, 0) [draw, thick] {{\ar}: anonymous auction};
        \node(n4) at (2.25, -0.8) [draw, thick] {{\ap}: anonymous pricing};
        \draw[thick, ->, >=triangle 45] (n1.west) to node[anchor = -15] {$[1.34, ~ 1.49]$} (n2.north);
        \draw[thick, ->, >=triangle 45] (n1.east) to node[anchor = -165] {$[2.15, ~ \mathcal{C}^*]$} (n3.north);
        \draw[thick, ->, >=triangle 45] (n1.south) to node[right] {$\mathcal{C}^* \approx 2.62$} (n4.north);
        \draw[thick, ->, >=triangle 45] (n2.south) to node[anchor = 15] {$\mathcal{C}^* \approx 2.62$} (n4.west);
        \draw[thick, ->, >=triangle 45] (n3.south) to node[anchor = 165] {$\pi^{2} / 6 \approx 1.64$} (n4.east);
    \end{tikzpicture}
    }
\caption{Demonstration for the {\bf previous results} in the {\bf single-item} setting with asymmetric regular buyers, where an interval indicates the best known lower/upper bounds, and a number indicates a tight bound. For the references of these results and further discussions, one can refer to \cite[Section~6]{JLQTX19b} and \cite[Chapter~4]{H13}.}
\label{fig:single_item}
\end{figure}
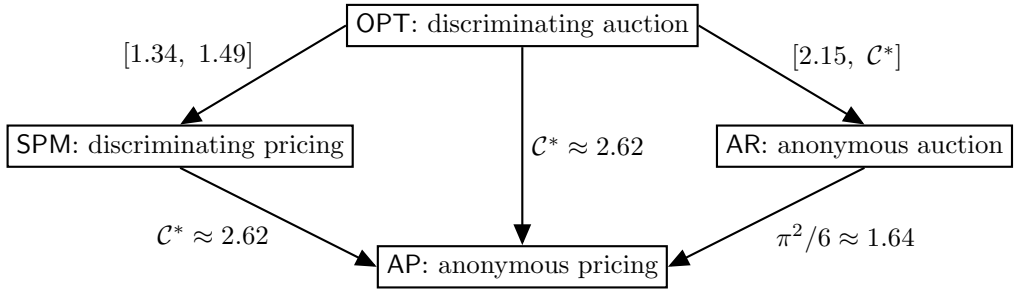

These four mechanisms together form the hierarchy in \Cref{fig:single_item}, where each arrow goes from a more complicated mechanism with higher revenue to a simpler mechanism with lower revenue. There are two notable distinctions among the four mechanisms. 
\begin{itemize}
    \item Anonymity ({\ap} and {\ar}) vs.\ Discrimination ({\spm} and {\opt}). We say a mechanism is {\em discriminating} if, when different buyers become the winner, the required payments can be different. Otherwise we say the mechanism is {\em anonymous}. Intuitively, discrimination gives a mechanism more power to extract revenue.
    
    \item Pricing ({\ap} and {\spm}) vs.\ Auction ({\ar} and {\opt}). In a pricing scheme, the buyers simply make take-it-or-leave-it decisions based on the given prices. In contrast, an auction is an arbitrary mapping from the bids to the allocations and the payments. Auctions can gain higher revenues than pricing schemes by further leveraging the competition among buyers.
\end{itemize}
Because {\spm} is a discriminating pricing scheme and {\ar} is an anonymous auction, they have different powers and are incomparable. Accordingly, there are five comparable mechanism pairs (i.e., the five arrows in \Cref{fig:single_item}).








To understand the relative powers of those mechanisms, the very first question is how large the {\em revenue gap} between any two mechanisms can be. We characterize the revenue gap as the {\em approximation ratio}\footnote{The earlier ``mechanism design for digit goods'' literature \cite{GHW01, BHW02, GHKKKM05, CGL14, CGL15}, due to technical reasons, often uses the term ``competitive ratio'' rather than ``approximation ratio''.} between the two revenues. Formally, for a more complicated mechanism $\mathcal{M}_{1}$ and a simpler mechanism $\mathcal{M}_{2}$, their approximation ratio is given by
\[
    \Re_{\mathcal{M}_1 / \mathcal{M}_2} ~ \eqdef ~ \sup \left\{\frac{\textsc{Rev}_{\mathcal{M}_{1}}(\vec{F})}{\textsc{Rev}_{\mathcal{M}_{2}}(\vec{F})} ~ \bigg| ~ \vec{F} \in \mathcal{F} \right\},
\]
where $\textsc{Rev}_{\mathcal{M}}(\vec{F})$ denotes the revenue from a mechanism $\mathcal{M}$ on an input instance $\vec{F} = \{F_{j}\}_{j \in [n]}$, and the supremum is taken over a certain family of distributions $\vec{F} \in \mathcal{F}$.



For the single-item model, the known results are shown in \Cref{fig:single_item}. Notice that all these revenue gaps are universal constants, and most of them have matching lower and upper bounds.

\vspace{.1in}
\noindent
{\bf From Single Unit to Multiple Units.}
In this work, we focus on the {\em $k$-unit setting}, where the seller has $k \geq 1$ identical copies of an item, and aims to sell them to $n$ unit-demand buyers. This setting is much more realistic and common in real business. Further, it is of intermediate complexity in comparison with the (more restricted) single-item setting and the (more general) multi-item setting.\footnote{In the $k$-unit setting, the $k$ copies are {\em identical}. But in the multi-item setting, the items can be {\em heterogeneous}.} Nonetheless, the ``simple vs.\ optimal'' trade-offs are much less understood in this setting than in the single-item setting.

Since the $k$-unit setting is still a {\em single-parameter} setting, {\MyersonAuction} remains revenue-optimal \cite{M81}. In addition, both of {\AnonymousPricing} and {\SequentialPostedPricing} can be naturally extended to this setting. For {\AnonymousReserve}, the counterpart auction is no longer ``second-price-type'', but is the {\sf $(k+1)$-th Price Auction with Anonymous Reserve}.

\subsection{An overview of our results}
\label{sec:our_results}

In the $k$-unit setting, previously only the revenue gap $\Re_{\opt/\spm}$ between {\opt} and {\spm} is well understood~\cite{Y11,A14}, whereas the other four gaps are widely open. By exploring the relative power of those mechanisms systematically, in this work we establish the (asymptotically) tight ratios of all previously unknown revenue gaps. We formalize our new results as the next two theorems and demonstrate them in \Cref{fig:result}. Here, the regularity assumption is very standard in the mechanism design literature~\cite{M81}.\footnote{Roughly speaking, this assumption means the seller extracts a higher expected revenue from a buyer via a moderate price, compared with an over-high price or an over-low price; see \cref{subsec:prelim:distr} for its definition.}

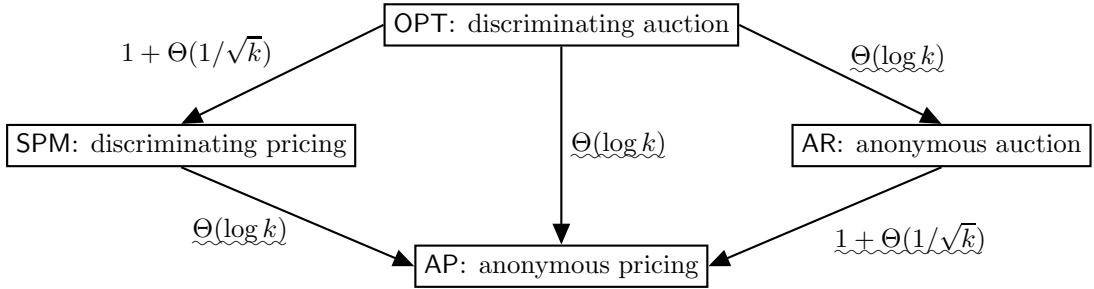
\begin{figure}[h]
\centering
    {\small
    \begin{tikzpicture}[scale = 2]
        \node(n1) at (2.5, 0.8) [draw, thick] {$\opt$: discriminating auction};
        \node(n2) at (0, 0) [draw, thick] {$\spm$: discriminating pricing};
        \node(n3) at (5, 0) [draw, thick] {$\ar$: anonymous auction};
        \node(n4) at (2.5, -0.8) [draw, thick] {$\ap$: anonymous pricing};
        \draw[thick, ->, >=triangle 45] (n1.west) to node[anchor = -15] {$1 + \Theta(1 / \sqrt{k})$} (n2.north);
        \draw[thick, ->, >=triangle 45] (n1.east) to node[anchor = -165] {\uwave{$\Theta(\log{k})$}} (n3.north);
        \draw[thick, ->, >=triangle 45] (n1.south) to node[right] {\uwave{$\Theta(\log{k})$}} (n4.north);
        \draw[thick, ->, >=triangle 45] (n2.south) to node[anchor = 15] {\uwave{$\Theta(\log{k})$}} (n4.west);
        \draw[thick, ->, >=triangle 45] (n3.south) to node[anchor = 165] {\uwave{$1 + \Theta(1 / \sqrt{k})$}} (n4.east);
    \end{tikzpicture}
    }
\caption{Demonstration for the revenue gaps among basic mechanisms in the {\bf $k$-unit} setting, given that the value distributions are regular. Our new results are \uwave{underwaved}. The $1 + \Theta(1 / \sqrt{k})$ approximation result between {\ar} and {\ap} is given in \cref{thm:intro_ar_ap}, and the other three results are given in \cref{thm:intro_ear_ap}.}
\label{fig:result}
\vspace{-.1in}
\end{figure}

\begin{theorem}[Anonymous Reserve vs.\ Anonymous Pricing]
\label{thm:intro_ar_ap}
For the unit-demand buyers $j \in [n]$, in each of the following three settings,\footnote{In the i.i.d.\ regular setting, the tight bound $1 / (1 - k^{k} / (e^{k} k!)) \approx 1 / (1 - 1 / \sqrt{2 \pi k})$ follows from a combination of (upper bound) \cite[Section~4.2]{Y11} and (lower bound) \cite[Section~4.3]{DFK16}. This ``i.i.d.\ regular'' bound and our bound $\Re_{\ar/\ap}(k)$ are different for each $k \in \NP$, but asymptotically are of the same order $= 1 + \Theta(1 / \sqrt{k})$.} the revenue gap $\Re_{\ar/\ap}(k)$ between {\AnonymousReserve} and {\AnonymousPricing} is $\Re_{\ar/\ap}(k) = 1 + k \cdot \int_{0}^{\infty} \frac{T_k(x) \cdot (1 - T_{k + 1}(x))}{(k - \sum_{i \in [k]} T_{i}(x))^{2}} \cdot \d x$, where for each $i \in [k + 1]$ the function $T_{i}(x) \eqdef e^{-x} \cdot \sum_{t \in [0: i - 1]} \frac{1}{t!} \cdot x^{t}$.
\begin{enumerate}[font = {\em \bfseries}]
    \item The asymmetric general setting, where the buyers have independent but not necessarily identical value distributions.
    
    \item The i.i.d.\ general setting, where the value distributions are identical.
    
    \item The asymmetric regular setting, where the value distributions are regular but not necessarily identical.
\end{enumerate}
Asymptotically, this bound is of order $\Re_{\ar/\ap}(k) = 1 + \Theta(1 / \sqrt{k})$.
\end{theorem}

\begin{theorem}[Discriminating Mechanisms vs.\ Anonymous Mechanisms]
\label{thm:intro_ear_ap}
When the unit-demand buyers $j \in [n]$ have independent and regular value distributions, each of the next three revenue gaps is of order $\Theta(\log k)$:
\begin{enumerate}[font = {\em \bfseries}]
    \item The revenue gap $\Re_{\opt/\ap}(k)$ between {\MyersonAuction} and {\AnonymousPricing}.
    
    \item The revenue gap $\Re_{\spm/\ap}(k)$ between {\SequentialPostedPricing} and {\AnonymousPricing}.
    
    \item The revenue gap $\Re_{\opt/\ar}(k)$ between {\MyersonAuction} and {\AnonymousReserve}.
\end{enumerate}
\end{theorem}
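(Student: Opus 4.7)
The three gaps in \cref{thm:intro_ear_ap} are comparable up to constant factors. Indeed, $\spm \le \opt$ yields $\Re_{\spm/\ap} \le \Re_{\opt/\ap}$; $\ap \le \ar$ yields $\Re_{\opt/\ar} \le \Re_{\opt/\ap}$; and \cref{thm:intro_ar_ap} gives $\ar \le (1+O(1/\sqrt k)) \cdot \ap$, whence $\Re_{\opt/\ar} \ge (1-O(1/\sqrt k)) \cdot \Re_{\opt/\ap}$. It therefore suffices to establish one upper bound, $\Re_{\opt/\ap}(k) = O(\log k)$, and one lower bound, $\Re_{\spm/\ap}(k) = \Omega(\log k)$; all six one-sided statements in the theorem then follow.

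\emph{Upper bound.} The starting point is the ex-ante relaxation
\[\opt \;\le\; \ear \;:=\; \max\Bigl\{\sum_{j=1}^n R_j(q_j) \;:\; \textstyle\sum_j q_j \le k,\; q_j \in [0,1]\Bigr\},\]
where $R_j(q) = q \cdot F_j^{-1}(1-q)$ is regular, hence concave in $q$. Let $\{q_j^*\}$ attain this maximum and set $p_j^* = F_j^{-1}(1-q_j^*)$. After discarding buyers whose $p_j^*$ is either too small or too large (losing only a constant factor of $\ear$ by a standard regularity argument), the surviving prices lie in a window of length $\mathrm{poly}(k)$. I would then bucket dyadically, $B_t = \{j : p_j^* \in [2^t, 2^{t+1})\}$, into $O(\log k)$ buckets; by pigeonhole, some bucket $B_{t^\star}$ carries at least $\ear/O(\log k)$ of the ex-ante revenue, with total quantile mass $Q_{t^\star} := \sum_{j \in B_{t^\star}} q_j^*$. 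Posting the anonymous price $p = 2^{t^\star}$, each buyer $j$ with $p_j^* \ge 2^{t^\star}$ buys with probability at least $q_j^*$, so a Chernoff / Poissonization estimate gives expected sold quantity $\Omega(\min\{k, Q_{t^\star}\})$, and hence
\[\ap \;\ge\; 2^{t^\star} \cdot \Omega(\min\{k, Q_{t^\star}\}) \;\ge\; \Omega(\ear/\log k).\]

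\emph{Lower bound.} Fix $m = \lfloor \log_2 k \rfloor$ and, for each $i \in [m]$, form group~$i$ of $2^i$ i.i.d.\ buyers whose values are uniform on $[2^{m-i}, (1+\eps) \cdot 2^{m-i}]$ for a sufficiently small $\eps>0$; uniform distributions are regular, and $\sum_{i=1}^m 2^i \le 2k$. One then verifies: (i)~{\spm}, visiting groups in decreasing order of $i$ and posting price $\approx 2^{m-i}$ to each buyer of group~$i$, serves every buyer in groups $1,\ldots,m-1$ before exhausting the $k$ units, earning $\sim k$ revenue per such group and hence $\spm = \Omega(k\log k)$; and (ii)~any anonymous price $p \in [2^t, 2^{t+1})$ serves only groups whose support lies above $p$, whose total size is at most $2^{m-t+1}$, giving $\ap \le 2^{t+1} \cdot \min\{k, 2^{m-t+1}\} = O(k)$. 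Therefore $\Re_{\spm/\ap}(k) = \Omega(\log k)$.

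\emph{Main obstacle.} The harder half is the upper bound, and the delicate point within it is the capped demand $\min\{k, D\}$ when $Q_{t^\star}$ is comparable to $k$: stochastic fluctuations of $D$ around the cap could in principle shrink the anonymous-pricing revenue, and the argument requires $\E[\min\{k, D\}] = \Omega(\min\{k, \E[D]\})$ for sums of independent Bernoullis. Proving this bound and splicing it cleanly with both the $\mathrm{poly}(k)$-window trimming and the dyadic-bucketing pigeonhole is where the real work lies; the lower bound, by contrast, reduces to elementary computations once regularity of the uniform perturbations is checked.
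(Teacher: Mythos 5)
Your reduction among the three gaps and your identification of the probabilistic core both match the paper: it likewise reduces everything to the single bound $\Re_{\ear/\ap}(k)=O(\log k)$ (via $\opt\le\ear$ and the constant within-group gaps), takes the $\Omega(\log k)$ lower bound from \cite[Example~5.4]{HR09} (your geometric-groups instance is essentially that example, except {\spm} must visit the \emph{high}-price, small-$i$ groups first, not ``decreasing order of $i$''), and the estimate you isolate as the main obstacle, $\E[\min\{k,D\}]=\Omega(\min\{k,\E[D]\})$ for a Poisson binomial $D$, is exactly the content of \cref{lem:ear_ap:relaxed_constraint}, proved there by a Chernoff bound. Your packaging of the upper bound is genuinely different in form: you pigeonhole over dyadic price buckets and recover one bucket with a single posted price, whereas the paper bounds the contribution of three buyer groups $A,B,C$ directly and sums an $O(1)$-per-scale contribution into a harmonic series. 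These are dual views of the same decomposition, and either closes the argument.

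The one step you cannot carry out as written is the trimming of \emph{high} prices to a $\mathrm{poly}(k)$ window, because the only acceptance bound you invoke is ``buyer $j$ accepts $p\le p_j^*$ with probability at least $q_j^*$.'' That bound is too weak for trimming: normalizing $\ap=1$, place $O(1)$ buyers with $v_jq_j=\Theta(1)$ and $q_j=\Theta(2^{-s})$ at each of arbitrarily many dyadic scales $2^s$; using only the quantiles, every anonymous price sees price times expected demand $O(1)$, yet the ex-ante benchmark grows linearly in the number of scales, with no dependence on $k$. What actually saves you is regularity in its quantitative form: after the reduction of \cref{lem:ear_ap:reduction} to triangle distributions, buyer $j$ accepts any $p\le v_j$ with probability $1-F_j(p)=\frac{v_jq_j}{(1-q_j)p+v_jq_j}$, so the feasibility constraint at the \emph{single} price $p=\Theta(1)$ (the paper's constraint~\eqref{cstr:ear_ap_relaxed} at $p=1/2$, split according to whether $v_jq_j/(1-q_j)$ exceeds $1/2$) already forces $\sum_{j:\,v_j\ge 1/2}v_jq_j=O(1)$. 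In other words, no $\mathrm{poly}(k)$ window is needed on the high end at all; the dyadic buckets are only needed on $[\Theta(1/k),\Theta(1)]$, which is precisely where the $O(\log k)$ arises. With that substitution (and the $k\in\{1,2,3\}$ cases handled separately, as the paper does), your argument goes through.
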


Similar to the {\ar} vs.\ {\ap} revenue gap, the prior works \cite{Y11,A14} show that the {\opt} vs.\ {\spm} revenue gap is also of order $1 + \Theta(1 / \sqrt{k})$. Consequently, regarding the discriminating mechanism group ({\opt} and {\spm}) and the anonymous mechanism group ({\ar} and {\ap}), each revenue gap across these two groups is $\Theta(\log{k})$, but the revenue gap between the two mechanisms in one group tends to vanish (at the rate of $1 / \sqrt{k}$) when the number of copies $k \in \mathbb{N}_{\geq 1}$ becomes large. These messages can be easily inferred from \Cref{fig:result}.

As mentioned, the revenue gaps identify the power and the limit of ``discrimination vs.\ anonymity'' and ``auction vs.\ pricing'' in revenue maximization. Different from the single-item setting, where all the revenue gaps are universal constants (see \Cref{fig:single_item}), our new results in the $k$-unit setting are more informative. When the number of copies $k \in \mathbb{N}_{\geq 1}$ is large:
\begin{itemize}
    \item Auctions are not much more helpful than pricing schemes in extracting the revenue (i.e., just an $1 + \Theta(1/\sqrt{k})$ improvement), no matter whether discrimination is allowed or not.
    
    \item Discrimination is always very useful, and can even give an unbounded improvement (up to a $\Theta(\log k)$ factor) on the revenue.
\end{itemize}
These propositions meet what we observe in real business: auctions are rarely used in practice, whereas different kinds of price discrimination are rather common.

\subsubsection{First Result: Anonymous Reserve vs.\ Anonymous Pricing}
\label{sec:intro_ar_ap}


In this section, we sketch the proof of our $1 + \Theta(1 / \sqrt{k})$ approximation result for the {\ar} vs. {\ap} revenue gap (\cref{thm:intro_ar_ap}). In fact, we can represent the exact ratio $\Re_{\ar / \ap}$ as an explicit integration formula, (although this formula in general does not admit an elementary expression). We acquire this formula by solving a mathematical programming generalized from \cite[Program~(4)]{JLTX19}, which resolves the same problem for the single-item case $k = 1$.

However, many crucial properties of the single-item case do not preserve in the general case $k \geq 1$. In the single-item case, {\AnonymousReserve} relies on the first/second order statistics $b_{(1)}$ and $b_{(2)}$ (i.e., the biggest and second biggest sampled bids/values), and {\AnonymousPricing} relies on the $b_{(1)}$. Therefore, we only need to reason about these two random variables, $b_{(1)}$ and $b_{(2)}$, together with the correlation between them. In the $k$-unit case, however, up to $(k + 1)$ random variables $b_{(1)}, \cdots, b_{(k + 1)}$ must be taken into account, and the correlation among them becomes much more complicated.

For the above reasons, we cannot modify and re-adopt the approach of the work \cite{JLTX19} in a naive way. Instead, with the purpose of handling the highly correlated order statistics $b_{(i)}$'s, we will develop a new structural lemma about the {\em Poisson binomial distributions} (PBDs). This new lemma mainly relies on the {\em log-concavity} of the PBDs.

\begin{lemma*}[Bernoulli Sum Lemma]
Given two arrays of Bernoulli random variables: $\{X_{j}\}_{j \in [n]}$ are i.i.d., while $\{Y_{j}\}_{j \in [n]}$ are independent yet not necessarily identically distributed. For the random sums $X = \sum_{j \in [n]} X_{j}$ and $Y = \sum_{j \in [n]} Y_{j}$, there exists some threshold $s \in \R$ such that:
\begin{enumerate}[font = {\em \bfseries}]
    \item $\Pr[X \leq t] \geq \Pr[Y \leq t]$ for any $t < s$.
    
    \item $\Pr[X \leq t] \leq \Pr[Y \leq t]$ for any $t \geq s$.
\end{enumerate}
\end{lemma*}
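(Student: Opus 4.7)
The plan is to study the CDF difference $D(k) := \Pr[X \leq k] - \Pr[Y \leq k]$ for $k \in \{-1, 0, \ldots, n\}$ (where $D(-1) = D(n) = 0$) and show that $D$ is nonnegative on an initial segment and nonpositive on the complementary tail; the transition point then supplies the threshold $s$ in the lemma. Setting $p := \Pr[X_1 = 1]$ and $q_j := \Pr[Y_j = 1]$, I would first dispose of the boundary cases $p, q_j \in \{0, 1\}$ (immediate, or by a continuity argument), reducing to $p, q_j \in (0, 1)$. Then $X \sim \mathrm{Binomial}(n, p)$ and $Y$ is a Poisson-binomial distribution. Passing to the odds ratios $r_p := p / (1 - p)$ and $r_j := q_j / (1 - q_j)$,
\begin{align*}
\Pr[X = k] &= (1 - p)^n \cdot \binom{n}{k} \cdot r_p^k, \\
\Pr[Y = k] &= \Bigl( \prod_{i \in [n]} (1 - q_i) \Bigr) \cdot e_k(r_1, \ldots, r_n),
\end{align*}
where $e_k$ denotes the $k$-th elementary symmetric polynomial.

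The core step is to establish that the PMF ratio $\rho(k) := \Pr[X = k] / \Pr[Y = k]$ is log-convex (hence convex) as a function of $k \in \{0, 1, \ldots, n\}$. Taking logarithms,
\[ \log \rho(k) \;=\; k \log r_p \;-\; \log \bigl( e_k(r) / \tbinom{n}{k} \bigr) \;+\; \text{const}. \]
Here I would invoke Newton's inequality, applied to the real-rooted polynomial $\prod_j (z + r_j)$: the normalized coefficients $S_k := e_k(r) / \binom{n}{k}$ are log-concave in $k$, i.e., $S_k^2 \geq S_{k-1} S_{k+1}$ for $1 \leq k \leq n - 1$. Thus $\log S_k$ is concave in $k$, and $\log \rho(k)$, being a linear function minus a concave one, is convex.

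Convexity of $\rho$ restricts the sign pattern of $\rho(k) - 1$ to at most $(+, -, +)$: the pattern $(-, +, -)$ is ruled out, because if $\rho(k_1), \rho(k_2) \leq 1$ at some $k_1 < k_2$, convexity forces $\rho(k) \leq 1$ on all of $[k_1, k_2]$. Since $\Pr[X = k] - \Pr[Y = k] = \Pr[Y = k] \cdot (\rho(k) - 1)$ inherits the same sign pattern, the prefix sum $D$ is nondecreasing on an initial segment, nonincreasing in the middle, and nondecreasing on the final segment, with $D(-1) = D(n) = 0$. A short case analysis on these three monotone pieces shows that $D \geq 0$ on an initial block $\{-1, 0, \ldots, s - 1\}$ and $D \leq 0$ on the complement $\{s, \ldots, n\}$, providing the desired threshold.

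The main obstacle, and the point I would focus on, is the log-convexity of $\rho$. This does not follow from log-concavity of PBDs alone; rather, it uses the stronger fact that both $\Pr[X = \cdot]$ and $\Pr[Y = \cdot]$ come from real-rooted probability generating functions (namely $(1 - p + p z)^n$ and $\prod_j (1 - q_j + q_j z)$). The odds-ratio reparameterization is what recasts $\rho$ in the clean form $\binom{n}{k} r_p^k / e_k(r)$, making Newton's inequality precisely the right tool.
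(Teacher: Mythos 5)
Your proof is correct, and it takes a genuinely different route from the paper's. The paper argues by \emph{local moves}: it repeatedly picks two non-identically distributed $Y_{j_1},Y_{j_2}$ and replaces them by a carefully chosen i.i.d.\ pair that preserves $\Pr[\cdot\le s]$ while weakly increasing $\Pr[\cdot\le t]$ for $t<s$ (the key input being log-concavity of the sum of the remaining $n-2$ Bernoullis), and then shows that iterating this averaging operation converges to an i.i.d.\ array; a symmetric argument handles $t\ge s$. Your argument is global and non-iterative: the odds-ratio reparameterization together with Newton's inequality for $S_k = e_k(r)/\binom{n}{k}$ yields log-convexity of the PMF ratio $\rho$, hence a contiguous sublevel set $\{k:\rho(k)\le 1\}$, hence the $(+,-,+)$ sign pattern of the PMF difference; the up--down--up shape of the prefix sums, anchored at $D(-1)=D(n)=0$, then forces a single crossing of the CDFs. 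Both proofs ultimately exploit real-rootedness of the Poisson-binomial generating function, but you correctly identify that plain log-concavity of the two PMFs is not enough for the ratio comparison --- the normalized (Newton) form is exactly what makes $-\log S_k$ at least as convex as $-\log\binom{n}{k}$. What you give up relative to the paper is the constructive averaging process (which the authors reuse to exhibit the worst-case i.i.d.\ instance); what you gain is a shorter, self-contained proof with no convergence-of-iterates argument. The degenerate cases $p,q_j\in\{0,1\}$ do require the limiting step you flag, but it is routine: only the finitely many integer thresholds $s\in\{0,\dots,n+1\}$ matter, so one can pass to a subsequence with a fixed threshold and the non-strict inequalities survive the limit.
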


With the help of this lemma, we can characterize the worst-case instance of the mentioned mathematical programming, for $k \geq 1$ and $n \geq 1$. To this end, let us formulate the {\ar} and {\ap} revenues. Denote by $F_{j}$ the cumulative distribution function (CDF) of buyer $j$'s value, and $D_{i}$ the CDF of the $i$-th order statistic $b_{(i)}$. The {\AnonymousReserve} revenue (\cref{fact:ar_revenue}) is given by
\begin{align*}
    & \ar(r) ~ = ~ \ap(r) ~ + ~ k \cdot \int_{r}^{\infty} (1 - D_{k + 1}(x)) \cdot \d x,
    && \forall r \geq 0.
\end{align*}
where $\ap(r)$ is the revenue by posting the price $p = r$ in {\AnonymousPricing}. Further, the {\ap} revenue (\cref{fact:ap_revenue}) depends on the top-$k$ CDF's $\{D_{i}(r)\}_{i \in [k]}$ at this reserve $r \geq 0$.

Now consider a Bernoulli sum $Y = \sum_{j \in [n]} Y_{j}$, for which the individual failure probabilities are $\Pr[Y_{j} = 0] = F_{j}(r)$. This choice of the failure probabilities ensures $\Pr[Y \leq i - 1] = D_{i}(r)$ for every $i \geq 1$. Further, we can find another array of i.i.d.\ Bernoulli random variables $\{X_{j}\}_{j \in [n]}$ so that the sum $X = \sum_{j \in [n]} X_{j}$ satisfies
\[
    \Pr[X \leq k] ~ = ~ \Pr[Y \leq k] ~ = ~ D_{k + 1}(r).
\]
(The existence of such $\{X_{j}\}_{j \in [n]}$ is obvious.) Then our Bernoulli Sum Lemma shows that
\[
    \Pr[X \leq i - 1] ~ \geq ~ \Pr[Y \leq i - 1] ~ = ~ D_{i}(r)
\]
for each $i \in [k]$, where the equality holds when the $\{Y_{j}\}_{j \in [n]}$ are also i.i.d.

Informally speaking, the above inequalities and the equality condition imply that, the ratio $\ar(r) / \ap(r)$ is maximized when the value CDF's are equal $F_{1}(r) = \cdots = F_{n}(r)$ at this reserve. Following this argument and with extra efforts, we have the next observation.

\begin{observation*}
For each $k \geq 1$ and $n \geq 1$, the worst case for the $\Re_{\ar/\ap}$ revenue gap happens when the value distributions are identical, i.e., $\vec{F}^* = \{F^*\}^{n}$, (although this worst-case common distribution $F^*$ is given by an implicit equation and does not admit an elementary expression).
\end{observation*}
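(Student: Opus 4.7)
The plan is to apply the Bernoulli Sum Lemma to transfer the worst-case computation from asymmetric $\vec{F}$ to i.i.d.\ distributions, and then characterize the symmetric optimum via a first-order condition.

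I would first set up the mathematical programming. Scale-invariance of $\Re_{\ar/\ap}$ allows the normalization $\ap(\vec{F}) \leq 1$, and \cref{fact:ar_revenue} rewrites $\ar(\vec{F}) = \max_{r \geq 0} [\ap(\vec{F}, r) + B(r)]$ with $B(r) \eqdef k \int_r^\infty (1 - D_{k+1}(x))\, \d x$. The problem becomes
\begin{equation*}
\sup_{\vec{F},\, r}\ \ap(\vec{F}, r) + B(r) \quad \text{s.t. } \ap(\vec{F}, p) \leq 1 \text{ for every } p \geq 0.
\end{equation*}

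Next, for any feasible $\vec{F}$, I would construct a companion i.i.d.\ instance $\vec{F}^* = \{F^*\}^n$ by defining $F^*(x)$, at every $x \geq 0$, as the unique value in $[0, 1]$ solving $\Pr[\mathrm{Bin}(n, 1 - F^*(x)) \leq k] = D_{k+1}(x)$. Monotonicity in the success probability makes $F^*$ a bona fide CDF and forces $D_{k+1}^*(x) = D_{k+1}(x)$ identically, hence $B^*(r) = B(r)$ at every reserve. Applying the Bernoulli Sum Lemma at each $x \geq 0$, with $Y_j$ of failure probability $F_j(x)$ and i.i.d.\ $X_j$ of failure probability $F^*(x)$, yields $D_i^*(x) \geq D_i(x)$ for every $i \in [k]$, and therefore $\ap(\vec{F}^*, p) \leq \ap(\vec{F}, p)$ pointwise in $p$.

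The key move is then to bound the ratio using the trivial inequality $\ap(\vec{F}) \geq \ap(\vec{F}, r^*)$ at an optimal AR reserve $r^*$, giving
\begin{equation*}
\Re_{\ar/\ap}(\vec{F}) = \frac{\ap(\vec{F}, r^*) + B(r^*)}{\ap(\vec{F})} \leq 1 + \frac{B(r^*)}{\ap(\vec{F}, r^*)} \leq 1 + \frac{B^*(r^*)}{\ap(\vec{F}^*, r^*)} = \frac{\ar(\vec{F}^*, r^*)}{\ap(\vec{F}^*, r^*)}.
\end{equation*}
The main obstacle I expect is the remaining step of relating this pointwise i.i.d.\ ratio at $r^*$ to the true i.i.d.\ ratio $\ar(\vec{F}^*)/\ap(\vec{F}^*)$, which are not equal in general because the AR-optimal reserve need not coincide with the AP-optimal price for $\vec{F}^*$. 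I would attempt to close this gap by a secondary reshaping of $F^*$ that aligns these two quantities --- for instance, truncating sub-reserve mass to a single atom at the origin so that posting price $r^*$ already attains the AP optimum for the reshaped distribution --- and verifying that such a modification only weakly changes the ratio in the desired direction.

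Finally, having reduced to the i.i.d.\ case, the program becomes a one-dimensional variational optimization in the single CDF $F^*$. Setting up the Lagrangian of $\max \ar(F^{*n})$ under $\ap(F^{*n}) \leq 1$ and taking a functional derivative in $F^*$ produces the implicit equation characterizing the worst-case symmetric distribution; the nonlinear dependence of the order statistics $D_i^*$ on $F^*$ explains the lack of an elementary closed form.
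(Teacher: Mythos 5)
Your first step coincides with the paper's: match the companion i.i.d.\ instance at level $k+1$ so that $D_{k+1}^{*}=D_{k+1}$ pointwise, and invoke the Bernoulli Sum Lemma (with threshold $s=k$) to get $D_{i}^{*}(x)\geq D_{i}(x)$ for all $i\in[k]$, hence $\ap(p,\vec{F}^{*})\leq\ap(p,\vec{F})$ at every price. The gap is in how you close the argument, and it is exactly the one you flag yourself. Your chain terminates at the quantity $\ar(r^{*},\vec{F}^{*})/\ap(r^{*},\vec{F}^{*})$, which is a ratio evaluated at the fixed reserve $r^{*}$ and is neither the true i.i.d.\ ratio $\ar(\vec{F}^{*})/\ap(\vec{F}^{*})$ nor obviously dominated by the supremum over i.i.d.\ instances: since $\ap(\cdot,\vec{F}^{*})$ only decreased at every price, its maximizer can sit at a price far from $r^{*}$ and strictly exceed $\ap(r^{*},\vec{F}^{*})$, so $\vec{F}^{*}$ itself certifies nothing. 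Your proposed repair---truncating sub-reserve mass to an atom at the origin---does not address this, because the offending price can lie \emph{above} $r^{*}$, where truncation of mass below $r^{*}$ leaves $\ap(p,\vec{F}^{*})$ unchanged; it would also perturb the integral term $B(r)$ you need to hold fixed.

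The paper closes the argument with a second, opposite move that your proposal is missing: after averaging to an i.i.d.\ instance $\{\bar{F}\}^{n}$, it scales $\bar{F}$ \emph{up} (in the stochastic-dominance sense) to the i.i.d.\ instance $\{F^{*}\}^{n}$ that saturates the feasibility constraint, $\ap(x,\{F^{*}\}^{n})=1$ for all $x>1/k$. This one step does three things simultaneously: the denominator becomes exactly $1$; the first term of the {\ar} revenue at any reserve $r$ becomes at least $\ap(r,\vec{F})$; and, because $F^{*}$ pointwise dominates $\bar{F}$, one gets $D_{k+1}^{*}\leq\bar{D}_{k+1}=D_{k+1}$, so the integral term only grows. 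Hence $\ar(r,\{F^{*}\}^{n})\geq\ar(r,\vec{F})$ for every $r$ while the instance stays feasible, which is what certifies the i.i.d.\ saturating instance as the worst case. Note finally that the implicit equation in the statement is just this saturation condition $\ap(x,\{F^{*}\}^{n})=1$; no Lagrangian or functional first-order condition is needed, so your variational endgame is both unnecessary and harder to make rigorous than the paper's constructive comparison.
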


Furthermore, it is noteworthy that the above approach enables a unified constructive proof for the upper-bound/lower-bound parts of the general case $k \geq 1$. In contrast, the former work \cite{JLTX19} establishes these two parts of the single-item case separately, and their upper-bound proof is non-constructive.



Our Bernoulli Sum Lemma can find its applications in related directions. As mentioned, we leverage it mainly to handle the order statistics. Apart from the ``simple vs.\ optimal mechanism design'' paradigm, on other topics such as ``learning simple mechanisms from samples'' \cite{CGL15,MM16,MR16,CD17,JLX19}, the order statistics are also of fundamental interests. Conceivably, our new lemma would be helpful for those topics, in a similar manner as this paper.



\subsubsection{Second Result: Discriminating Mechanisms vs.\ Anonymous Mechanisms}
\label{sec:intro_ear_ap}

In this section we sketch the proof of \cref{thm:intro_ear_ap}, which claims that the revenue gaps $\Re_{\opt / \ap}$, $\Re_{\spm / \ap}$ and $\Re_{\opt / \ar}$ are all of order $\Theta(\log k)$. In fact, any one bound implies the other two. This is because the revenue gaps within the discriminating/anonymous groups ($\opt$ vs.\ $\spm$, and $\ar$ vs.\ $\ap$) are both {\em constants} $1 + \Theta(1 / \sqrt{k}) = \Theta(1)$, and these constants are dominated by the $\Theta(\log k)$ bound.

For these reasons, it suffices to only prove the {\opt} vs.\ {\ap} revenue gap $\Re_{\opt / \ap} = \Theta(\log k)$. Actually, an $\Omega(\log k)$ lower bound for this revenue gap is already shown in \cite[Example~5.4]{HR09}, so we only need to prove the $O(\log k)$ upper bound.




We actually prove the $O(\log k)$ upper bound between {\AnonymousPricing} and a benchmark called {\ExAnteRelaxation} ({\ear} in short). It is known that this benchmark always exceeds the {\MyersonAuction} revenue~\cite{CHMS10}. To acquire the $O(\log k)$ upper bound, we will start with a mathematical programming generalized from \cite[Equations~(1) and (2)]{AHNPY19}.

However, the general-case mathematical programming has a very different structure as it is in the single-item case. When $k = 1$, the worst-case instance (i.e., the optimal solution, see \cite[Section~4.3]{AHNPY19}) turns out to be a {\em continuum of ``small'' buyers} -- any single buyer has an infinitesimal contribution to the {\ear} benchmark, but there are infinitely many buyers $n \to \infty$ (in a sense of large markets \cite{MSVV07,AGN14}). Accordingly, it is better to think about the ``density'' of different types of buyers, instead of the number of buyers.

But in the general case, the $\Omega(\log k)$ lower-bound instance \cite[Example~5.4]{HR09} essentially is constituted by ``big'' buyers -- a certain amount of buyers contribute at least $1 / k$ unit to the {\ear} benchmark each, while every other buyer contributes strictly $0$ unit and can be omitted. More importantly (see \cref{rem:small_distributions}), if we insist on a continuum of ``small'' buyers in the general case $k \geq 1$, then the {\ear} vs.\ {\ap} revenue gap turns out to be (at most) a universal constant for whatever $k \geq 1$.

For these reasons, the current approach must be very different from the single-item case. At a high level, to handle the general case $k \geq 1$, we will classify the buyers $j \in [n]$ into groups, and then bound the individual contributions from these groups to the {\ear} benchmark.

In more details, we can employ the technique developed in \cite[Lemma~4.1]{AHNPY19}, and thus transform the mentioned mathematical programming into the following one.

\begin{mdframed}
{\bf Variables}:
\begin{itemize}
    \item $\{v_{j}\}_{j \in [n]} \in \R_{\geq 0}^{n}$, where $v_{j} = \mathop{\arg\max} \{p \cdot (1 - F_{j}(p)): p \geq 0\}$ for each $j \in [n]$, are the {\em monopoly prices} of the distributions $\vec{F} = \{F_{j}\}_{j \in [n]}$.
    
    \item $\{q_{j}\}_{j \in [n]} \in [0, 1]^{n}$, where $q_{j} = 1 - F_{j}(v_{j})$ for each $j \in [n]$, are the {\em monopoly quantiles}.
    
    \item The resulting $\{v_{j} q_{j}\}_{j \in [n]} \in \R_{\geq 0}^{n}$ are the {\em monopoly revenues}.
\end{itemize}
{\bf Constraints}:
\begin{itemize}
    \item The capacity constraint, $\sum_{j \in [n]} q_{j} \leq k$.
    
    \item The feasibility constraint, $\ap(p, \vec{F}) \leq 1$ for all $p \in \RP$.
\end{itemize}
{\bf Objective}:
Maximize the {\ExAnteRelaxation} benchmark $\ear(\vec{F}) = \sum_{j \in [n]} v_{j} q_{j}$.
\end{mdframed}

Regarding the {\ear} benchmark, the monopoly revenues $\{v_{j} q_{j}\}_{j \in [n]}$ are precisely the individual contributions from the distributions $\{F_{j}\}_{j \in [n]}$. Given the capacity constraint (in a sense of the Knapsack Problem), the monopoly quantiles $\{q_{j}\}_{j \in [n]}$ can be viewed as the individual capacities. Therefore, the monopoly prices $\{v_{j}\}_{j \in [n]}$ can be viewed as the {\em bang-per-buck ratios} (i.e., the contribution to the {\ear} benchmark per unit of the capacity).

To find the optimal solution, clearly we prefer those distributions with higher bang-per-buck ratios $\{v_{j}\}_{j \in [n]}$, but also need to take the capacities $\{q_{j}\}_{j \in [n]}$ into account. Informally, we will classify the buyers into three groups $[n] = L \cup H_{S} \cup H_{B}$:
\begin{itemize}
    \item $L = \{j \in [n]: v_{j} < 1 / k\}$. Because these group-$L$ distributions have {\em lower} bang-per-buck ratios $v_{j} < 1 / k$, conceivably the total contribution by this group to the {\ear} benchmark shall be small. Indeed, we will prove a {\em constant} upper bound $\sum_{j \in L} v_{j} q_{j} = O(1)$.
    
    \item $H_{S} = \{j \in [n]: v_{j} \geq 1 / k ~ \mbox{and} ~ v_{j} q_{j} < 1 / (2k)\}$. In other words, the group-$H_{S}$ distributions have {\em high} enough bang-per-buck ratios $v_{j} \geq 1 / k$ but {\em small} capacities, i.e., $v_{j} q_{j} < 1 / (2k)$. It turns out that the total contribution by this group is also small, and we also will prove a {\em constant} upper bound $\sum_{j \in H_{S}} v_{j} q_{j} = O(1)$.
    
    \item $H_{B} = \{j \in [n]: v_{j} \geq 1 / k ~ \mbox{and} ~ v_{j} q_{j} \geq 1 / (2k)\}$. That is, these group-$H_{B}$ distributions have {\em high} enough bang-per-buck ratios and {\em big} enough capacities. Therefore, this group should contribute the most to the {\ear} benchmark. Taking into account the feasibility constraint, $\ap(p, \vec{F}) \leq 1$ for all $p \in \RP$, we will show $\sum_{j \in H_{B}} v_{j} q_{j} = O(\log k)$.
\end{itemize}
The actual grouping criteria in our proof are more complicated than the above ones, in order to handle other technical issues.

Finally, we notice that our grouping criteria borrow ideas from the ``budget-feasible mechanism'' literature \cite{S10,CGL11,GJLZ19}, where the target is to design approximately optimal mechanisms for the Knapsack Problem under the incentive concerns. We hope that these ideas can find more applications to the ``simple vs.\ optimal mechanism design'' research topic.


\subsection{Further related works}

The revenue gaps among the mentioned mechanisms, {\MyersonAuction}, {\SequentialPostedPricing}, {\AnonymousReserve}, and {\AnonymousPricing}, are extensively studied in the literature. Below we provide an overview of the previous results (mainly in the single-item setting and in the $k$-unit settings). As a supplement, the reader can refer to the surveys~\cite{L17,CFHOV18,JLQTX19b} and the textbook~\cite{H13}.

\vspace{.1in}
\noindent
{\bf AR vs.\ AP.}
This revenue gap studies the relative power between the auction schemes and the pricing schemes, when the price discrimination is not allowed. The previously known results in the single-item case are shown in the next table.
\begin{longtable}{|c|c|c|}
    \hline
    i.i.d.\ regular & $e / (e - 1) \approx 1.58$ & \cite[Thm~6]{CHMS10} \& \cite[Thm~4.13]{H13} \\
    \hline
    i.i.d.\ general & \multirow{3}{*}{$\pi^{2} / 6 \approx 1.64$} & \multirow{3}{*}{\cite[Thm~2]{JLTX19}} \\
    \cline{1-1}
    asymmetric regular & & \\
    \cline{1-1}
    asymmetric general & & \\
    \hline
\end{longtable}
In the $k$-unit case, the tight bound $1 / (1 - k^{k} / (e^{k} k!)) \approx 1 / (1 - 1 / \sqrt{2 \pi k})$ for i.i.d.\ regular buyers is shown in \cite[Section~4.2]{Y11} and \cite[Section~4.3]{DFK16}. Our new results settle the remaining pieces of the puzzle -- even if the i.i.d.\ assumption and/or the regularity assumption are removed, this revenue gap is still of order $1 + \Theta(1 / \sqrt{k})$.

\vspace{.1in}
\noindent
{\bf SPM vs.\ AP.}
This revenue gap investigates the power of price discrimination in the pricing schemes. Below we summarize the known results and our new results, in both the single-item case and the $k$-unit case.
\begin{longtable}{|c|c|c|c|}
    \hline
    \multicolumn{2}{|c|}{single-item case} & {\spm} vs.\ {\ap} & {\opt} vs.\ {\ap} \\
    \hline
    \hline
    i.i.d.\ regular & $e / (e - 1) \approx 1.58$ & \multicolumn{2}{c|}{\cite[Thm~6]{CHMS10} \& \cite[Thm~4.13]{H13}} \\
    \hline
    i.i.d.\ general & $2 - 1 / n$ & \cite[Thm~3]{DFK16} & \cite[Thm~4.9]{H13} \\
    \hline
    asymmetric regular & constant $\mathcal{C}^* \approx 2.62$ & \cite[Thm~1]{JLTX19} & \cite[Thm~1]{JLQTX19a} \\
    \hline
    asymmetric general & $n$ & \multicolumn{2}{c|}{\cite[Prop~6.1]{AHNPY19}} \\
    \hline
    \multicolumn{4}{c}{} \\
    \hline
    \multicolumn{2}{|c|}{$k$-unit case} & {\spm} vs.\ {\ap} & {\opt} vs.\ {\ap} \\
    \hline
    \hline
    i.i.d.\ regular & $1 / (1 - k^{k} / (e^{k} k!))$ & \cite[Thm~1]{DFK16} & \cite[Sec~4.2]{Y11} \\
    \hline
    i.i.d.\ general & $2 - k / n$ & \cite[Thm~3]{DFK16} & \cite[Sec~4.5]{H13} \\
    \hline
    asymmetric regular & $\Theta(\log k)$ & \multicolumn{2}{c|}{this work} \\
    \hline
    asymmetric general & $n$ & \multicolumn{2}{c|}{\cite[Prop~6.1]{AHNPY19}} \\
    \hline
\end{longtable}

\noindent
{\bf OPT vs.\ AP.}
This revenue gap is to illustrate that even the simplest mechanism, {\AnonymousPricing}, can approximate the optimal revenue in quite general settings. Actually, in each of the single-item/$k$-unit, i.i.d./asymmetric, regular/general settings, this ratio ``coincedentally'' is equal to the {\spm} vs.\ {\ap} revenue gap, namely $\Re_{\opt / \ap} = \Re_{\spm / \ap}$.\footnote{The reader may wonder why the revenue gaps $\Re_{\opt / \ap}$ and $\Re_{\spm / \ap}$ are equal, in each of the single-item/$k$-unit, i.i.d./asymmetric, regular/general settings. This is because, in each of these settings, the worst-case instance $\{F_{j}^*\}_{j \in [n]}$ of the {\opt} vs.\ {\ap} problem has a nice property: for each $F_{j}^*$, the corresponding {\em virtual-value distribution} is supported on the non-positive semiaxis $(-\infty, 0]$ plus a {\em single positive number} $v_{j}^* > 0$. When an instance satisfies this property, we can adopt the arguments in \cite[Lemma~1]{JLTX19} to show that {\opt} and {\spm} extract the same amount of revenue, which implies $\Re_{\opt / \ap} = \Re_{\spm / \ap}$.} (But the results respectively for $\Re_{\opt / \ap}$ and $\Re_{\spm / \ap}$ are credited to different works.) For brevity, we summarize the results on the both revenue gaps together in the above tables.

Instead of the regularity assumption, the stronger {\em monotone-hazard-rate} (MHR) distributional assumption is also very standard in the mechanism design literature. The previous works \cite{GZ18,JLQ19} study the {\opt} vs.\ {\ap} revenue gap in the single-item i.i.d.\ MHR setting.

\vspace{.1in}
\noindent
{\bf OPT vs.\ AR.}
This ratio studies the power of price discrimination in the auction schemes. When the value distributions are i.i.d.\ and regular, {\MyersonAuction} and {\AnonymousReserve} turn out to be identical~\cite{M81}. The results beyond the i.i.d.\ regular case are given below.
\begin{longtable}{|c|c|c|}
    \hline
    \multicolumn{3}{|c|}{single-item case} \\
    \hline
    \hline
    i.i.d.\ general & $2 - 1 / n$ & \cite[Thm~4.9]{H13} \\
    \hline
    \multirow{2}{*}{asymmetric regular} & $\mathrm{LB} \approx 2.15$ & \cite[Sec~5]{HR09} \& \cite[Thm~3]{JLTX19} \\
    \cline{2-3}
    & $\mathrm{UB} = \mathcal{C}^* \approx 2.62$ & \cite[Sec~5]{HR09} \& \cite[Thm~1]{JLQTX19a} \\
    \hline
    asymmetric general & $n$ & \cite[Prop~6.1]{AHNPY19} \\
    \hline
    \multicolumn{3}{c}{} \\
    \hline
    \multicolumn{3}{|c|}{$k$-unit case} \\
    \hline
    \hline
    i.i.d.\ general & $2 - k / n$ & \cite[Sec~4.5]{H13} \\
    \hline
    asymmetric regular & $\Theta(\log k)$ & this work \\
    \hline
    asymmetric general & $n$ & \cite[Prop~6.1]{AHNPY19} \\
    \hline
\end{longtable}

Notably, the tight ratio in the single-item asymmetric regular setting is still unknown. Hartline and Roughgarden first prove that this ratio is between $2$ and $4$~\cite[Section~5]{HR09}. Afterwards, the lower bound is improved to $\approx 2.15$ \cite[Theorem~3]{JLTX19}. But the best known upper bound just follows from the tight {\opt} vs.\ {\ap} revenue gap $\mathcal{C}^* \approx 2.62$ by implication. We highly believe this factor-$\mathcal{C}^*$ barrier can be broken, for which new techniques tailored for {\AnonymousReserve} rather than {\AnonymousPricing} are required.



Beyond the {\AnonymousReserve} mechanism, other simple auctions with the more powerful {\em personalized reserves} are also extensively studied~\cite{HR09,BGLPS18,MS19}.

\vspace{.1in}
\noindent
{\bf OPT vs.\ SPM.}
This revenue gap investigates the relative power between the auction schemes and the pricing schemes, when the price discrimination is allowed. Indeed, the previous works \cite{HKS07,CFPV19} show that this problem is identical to the {\em ordered prophet inequality} problem in stopping theory. In each of the single-item/$k$-unit i.i.d./asymmetric settings, the tight revenue gaps under/without the regularity assumption turn out to be the same (see, e.g., \cite[Section~3.1]{Y11}). The previous results in the single-item/$k$-unit cases are summarized below.
\begin{longtable}{|c|c|c|}
    \hline
    \multicolumn{3}{|c|}{single-item case} \\
    \hline
    \hline
    i.i.d.\ & constant $\beta \approx 1.34$ & \cite[Thm~1.3]{CFHOV17} \\
    \hline
    \multirow{2}{*}{asymmetric} & $\mathrm{LB} = \beta \approx 1.34$ & \cite[Thm~1.3]{CFHOV17} \\
    \cline{2-3}
    & $\mathrm{UB} = 1 / (1 - 1 / e + 1 / 27) \approx 1.49$ & \cite[Thm~1.1]{CSZ19} \\
    \hline
    \multicolumn{3}{c}{} \\
    \hline
    \multicolumn{3}{|c|}{$k$-unit case} \\
    \hline
    \hline
    \multirow{2}{*}{i.i.d./asymmetric} & $\mathrm{LB} = 1 + \Omega(1 / \sqrt{k})$ & \cite[Thm~7]{HKS07} \\
    \cline{2-3}
    & $\mathrm{UB} = 1 / (1 - k^{k} / (e^{k} k!)) \approx 1 / (1 - 1 / \sqrt{2 \pi k})$ & \cite[Sec~4.2]{Y11} \\
    \hline
\end{longtable}
Noticeably, the tight ratio in the single-item asymmetric setting is still unknown. The best known lower bound just follows from the tight ``i.i.d.''\ revenue gap $\beta \approx 1.34$ by implication. Recently, there is an outburst of activity on the upper bound~\cite{ACK18,BGLPS18,CSZ19}, and the best known result is $1 / (1 - 1 / e + 1 / 27) \approx 1.49$ \cite[Theorem~1.1]{CSZ19}. It remains an interesting open question to further refine the upper bound.




Beyond the $k$-unit setting, the {\opt} vs.\ {\spm} revenue gap is also studied in the more general {\em matroid} setting. For this, the work \cite[Theorem~5]{CHMS10} first shows an upper bound of $2$, and then \cite[Section~4.1]{Y11} improves it to $e / (e - 1) \approx 1.58$.

The {\SequentialPostedPricing} mechanism crucially leverages the order in which the buyers participate in the mechanism. Instead, the {\em order-oblivious} counterpart mechanisms are extensively studied as well~\cite{CHMS10,A14,AWT18,ACK18,BGLPS18,EHKS18,CSZ19}.

\vspace{.1in}
\noindent
{\bf Organization.}
In \cref{sec:prelim} we introduce the notation and the requisite knowledge about the considered mechanisms. The {\AnonymousReserve} vs.\ {\AnonymousPricing} problem is investigated in \cref{sec:ar} (with some technical details deferred to \cref{sec:asymptotic_formula}). The {\ExAnteRelaxation} vs.\ {\AnonymousPricing} problem is investigated in \cref{sec:ear_ap}.

\section{Notation and Preliminaries}
\label{sec:prelim}

This section includes the notation to be adopted in this paper, and the basic knowledge about probability (e.g.\ the regular/triangle distributions) and the concerning mechanisms.

\vspace{.1in}
\noindent
{\bf Notation.}
Denote by $\RP$ (resp.\ $\NP$) the set of all non-negative real numbers (resp.\ positive integers). For any pair of integers $b \geq a \geq 0$, define the sets $[a] \eqdef \{1, 2, \cdots, a\}$ and $[a: b] \eqdef \{a, a + 1, \cdots, b\}$. Denote by $\mathbbm{1}\{\cdot\}$ the indicator function. The function $|\cdot|_+$ maps a real number $z \in \R$ to $\max\{0, z\}$.

\subsection{Probability}
\label{subsec:prelim:distr}

We use the bold letter $\vec{F} = \{F_{j}\}_{j \in [n]}$ to denote an instance (namely an $n$-dimensional {\em product distribution}), where $F_{j}$ is the bid distribution of the buyer $j \in [n]$. For ease of notation, $F_{j}$ also represents the corresponding cumulative density function (CDF).

We assume the CDF's $\{F_{j}\}_{j \in [n]}$ to be {\em left-continuous}, in the sense that when the $j$-th buyer has a random bid $b_{j} \sim F_{j}$ for a price-$p$ item, his willing-to-pay probability is $\Pr[b_{j} \geq p]$ rather than $\Pr[b_{j} > p]$. We also define the inverse CDF $F_{j}^{-1}(y) \eqdef \inf \{x \in \RP: F_{j}(x) \geq y\}$ for any $y \in [0, 1]$; notice that possibly $F_{j}^{-1}(1) = \infty$. We say a distribution $F_{j}$ {\em stochastically dominates} another $\bar{F}_{j}$, when $F_{j}(x) \leq \bar{F}_{j}(x)$ for all $x \in \RP$. Further, an instance $\vec{F} = \{F_{j}\}_{j \in [n]}$ dominates another instance $\bar{\vec{F}} = \{\bar{F}_{j}\}_{j \in [n]}$, when $F_{j}$ dominates $\bar{F}_{j}$ for each $j \in [n]$.

For a CDF $F_{j}$, we are also interested in two associated parameters $(v_{j}, q_{j})$. The {\em monopoly quantile} $q_{j} \in [0, 1]$ and the {\em monopoly price} $v_{j} \in \RP$ are respectively given by
\begin{align*}
    & q_{j} ~ \eqdef ~ \mathop{\arg\max}_{q \in [0, 1]} \{F_{j}^{-1}(1 - q) \cdot q\}
    && \mbox{and}
    && v_{j} ~ \eqdef ~ F_{j}^{-1}(1 - q_{j}).
\end{align*}
If there are multiple maximizers $q_{j}$, we would choose the smallest $q_{j}$ among the alternatives; notice that possibly $q_{j} = 0$ and $v_{j} = \infty$.

Sampling a bid profile from the instance $\vec{b} = (b_{j})_{j \in [n]} \sim \vec{F}$, the $i$-th highest bids (for $i \in [n]$) $b_{(1)} \geq \cdots \geq b_{(i)} \geq \cdots \geq b_{(n)}$ will be of particular interest. We denote by $D_{i}$ the corresponding distributions/CDF's, namely $D_{i}(x) = \Pr[b_{(i)} < x]$ for all $x \in \RP$. Again, we assume $\{D_{i}\}_{i \in [n]}$ to be left-continuous. The formulas for the $i$-th highest CDF's are given below.

\begin{fact}[Order Statistics]
\label{fac:order_CDF}
For each $i \in [n + 1]$, the $i$-th highest CDF is given by
\begin{align*}
    D_{i}(x)
    & ~ = ~ \sum_{t \in [0: i - 1]} \sum_{|W| = t} \Big(\prod_{j \notin W} \Pr[b_{j} < x]\Big) \cdot \Big(\prod_{j \in W} \Pr[b_{j} \geq x]\Big) \\
    & ~ = ~ \sum_{t \in [0: i - 1]} \sum_{|W| = t} \Big(\prod_{j \notin W} F_{j}(x)\Big) \cdot \Big(\prod_{j \in W} (1 - F_{j}(x))\Big),
    && \forall x \geq 0.
\end{align*}
\end{fact}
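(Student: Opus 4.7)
The plan is to derive the formula by decomposing the event $\{b_{(i)} < x\}$ according to the random set of bidders whose bid reaches the threshold $x$, and then invoking the independence of $\{b_j\}_{j \in [n]}$.

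First I would observe that $b_{(i)} < x$ holds if and only if strictly fewer than $i$ of the bids are at least $x$. Introducing the random set $W(x) \eqdef \{j \in [n] : b_j \geq x\}$, this rewrites the event of interest as $\{b_{(i)} < x\} = \{|W(x)| \leq i - 1\}$, so that $\Pr[b_{(i)} < x] = \sum_{t = 0}^{i - 1} \Pr[|W(x)| = t]$.

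Next I would refine each $\Pr[|W(x)| = t]$ by partitioning it over the $\binom{n}{t}$ admissible realizations of $W(x)$. For a fixed $W \subseteq [n]$ with $|W| = t$, the event $\{W(x) = W\}$ equals $\bigcap_{j \in W} \{b_j \geq x\} \cap \bigcap_{j \notin W} \{b_j < x\}$, whose probability factorizes as $\prod_{j \in W} \Pr[b_j \geq x] \cdot \prod_{j \notin W} \Pr[b_j < x]$ by independence of the $\{b_j\}_{j \in [n]}$. Summing over all $W$ with $|W| = t$ and then over $t \in [0 : i - 1]$ produces the first displayed line of the statement. Substituting $\Pr[b_j < x] = F_j(x)$ and $\Pr[b_j \geq x] = 1 - F_j(x)$, which is exactly the left-continuity convention for the CDFs recorded in \cref{subsec:prelim:distr}, produces the second displayed line.

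There is no genuine obstacle; the argument is routine bookkeeping. The only point requiring mild attention is matching the paper's convention $F_j(x) = \Pr[b_j < x]$ (rather than $\Pr[b_j \leq x]$), so that the two displayed equalities hold pointwise in $x$ rather than up to atoms. The boundary case $i = n + 1$ corresponds to summing over every $W \subseteq [n]$; the sum then collapses via $\prod_{j \in [n]} (F_j(x) + (1 - F_j(x))) = 1$ to give $D_{n+1}(x) \equiv 1$, which is consistent with treating $b_{(n+1)}$ as a dummy bid below all realized values and is what the order-statistic-based revenue formulas used later (e.g., \cref{fact:ar_revenue}) will implicitly rely on.
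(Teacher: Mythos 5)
Your proof is correct and is exactly the standard counting argument the paper implicitly relies on (the paper states this as a Fact without proof): decompose $\{b_{(i)} < x\}$ by the set of bidders bidding at least $x$, factorize by independence, and apply the left-continuity convention $F_j(x) = \Pr[b_j < x]$. Nothing is missing.
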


\noindent
{\bf Regular distribution.}
Denote by $\reg$ this distribution family. According to~\cite{M81}, a distribution is regular $F_{j} \in \reg$ if and only if the {\em virtual value function} $\varphi_j(x) \eqdef x - \frac{1 - F_{j}(x)}{f_j(x)}$ is non-decreasing on the support of $F_{j}$, where $f_{j}$ is the probability density function (PDF). Such a regular CDF $F_{j}$ is illustrated in \Cref{subfig:prelim:regular}.


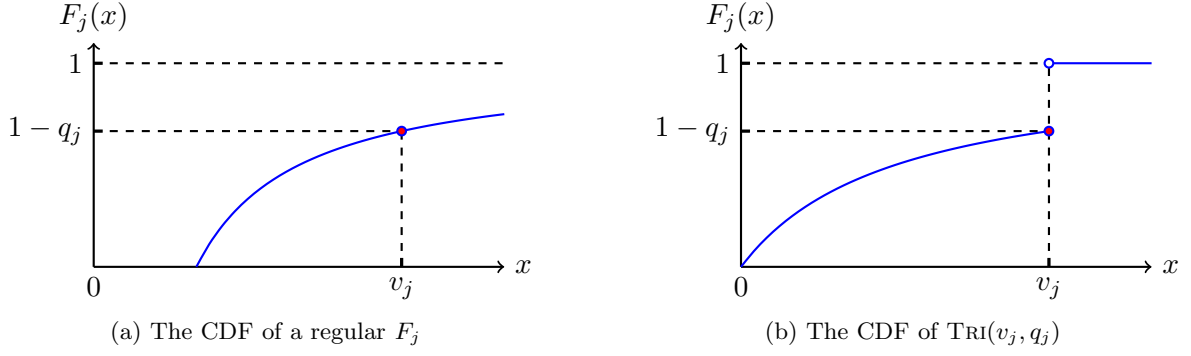
\begin{figure}
\centering
\subfloat[The CDF of a regular $F_{j}$
\label{subfig:prelim:regular}]{
\begin{tikzpicture}[thick, smooth, scale = 2.7]
\draw[->] (0, 0) -- (2, 0);
\draw[->] (0, 0) -- (0, 1.1);
\node[above] at (0, 1.1) {$F_{j}(x)$};
\node[right] at (2, 0) {$x$};
\node[below] at (1.5, 0) {$v_j$};
\node[left] at (0, 1) {$1$};
\node[below] at (0, 0) {$0$};
\node[left] at (0, 0.6667) {$1 - q_{j}$};
\draw[style = dashed] (0, 1) -- (2, 1);
\draw[style = dashed] (1.5, 0) -- (1.5, 0.6667);

\draw[color = blue, domain = 0.5: 2] plot (\x, {1 - 0.5 / \x});
\draw[style = dashed] (0, 0.6667) -- (1.5, 0.6667);
\draw[blue, fill = red] (1.5, 0.6667) circle (0.6pt);

\draw[very thick] (0pt, 1) -- (1.25pt, 1);
\draw[very thick] (0pt, 0.6667) -- (1.25pt, 0.6667);
\draw[very thick] (1.5, 0pt) -- (1.5, 1.25pt);
\end{tikzpicture}
}
\hfill
\subfloat[The CDF of $\tri(v_{j}, q_{j})$
\label{subfig:prelim:triangle}]{
\begin{tikzpicture}[thick, smooth, domain = 0: 1.5, scale = 2.7]
\draw[->] (0, 0) -- (2, 0);
\draw[->] (0, 0) -- (0, 1.1);
\node[above] at (0, 1.1) {$F_{j}(x)$};
\node[right] at (2, 0) {$x$};
\node[below] at (1.5, 0) {$v_j$};
\node[left] at (0, 1) {$1$};
\node[below] at (0, 0) {$0$};
\node[left] at (0, 0.6667) {$1 - q_{j}$};
\draw[style = dashed] (0, 1) -- (1.5, 1);
\draw[style = dashed] (1.5, 0) -- (1.5, 1);

\draw[color = blue] plot (\x, {\x / (\x + 0.75)});
\draw[style = dashed] (0, 0.6667) -- (1.5, 0.6667);
\draw[blue, fill = red] (1.5, 0.6667) circle (0.6pt);
\draw[color = blue] (1.5, 1) -- (2, 1);
\draw [color = blue, fill = white] (1.5, 1) circle(0.6pt);

\draw[very thick] (0pt, 1) -- (1.25pt, 1);
\draw[very thick] (0pt, 0.6667) -- (1.25pt, 0.6667);
\draw[very thick] (1.5, 0pt) -- (1.5, 1.25pt);
\end{tikzpicture}
}
\caption{Demonstration for the regular distribution and the triangle distribution.}
\label{fig:prelim:regular_triangle}
\end{figure}

\vspace{.1in}
\noindent
{\bf Triangle distribution.}
This distribution family, denoted by $\tri$, is introduced in \cite{AHNPY19} and is a subset of the regular distribution family $\reg$. Such a distribution $\tri(v_{j}, q_{j})$ is determined by the monopoly price $v_{j} \in \RP$ and the monopoly quantile $q_{j} \in [0,1]$. In precise, the corresponding CDF is given below and is illustrated in \Cref{subfig:prelim:triangle}.
\begin{align*}
    F_{j}(x) ~ \eqdef ~
    \begin{cases}
        \frac{(1 - q_{j}) \cdot x}{(1 - q_{j}) \cdot x + v_{j} q_{j}},
        & \forall x \in [0, v_{j}] \\
        1, & \forall x \in (v_{j}, \infty)
    \end{cases}.
\end{align*}

\subsection{Mechanisms}
\label{subsec:prelim:mech}

We focus on such a revenue maximization scenario: the seller has $k \in \NP$ {\em homogeneous} items and faces $n \geq k$ {\em unit-demand} buyers, and the buyers draw their bids $\vec{b} = \{b_{j}\}_{j \in [n]} \sim \vec{F}$ independently from a publicly known product distribution $\vec{F} = \{F_{j}\}_{j \in [n]}$. For convenience, we interchange buyer/bidder.

In the bulk of the work, we will concern three mechanisms: {\AnonymousPricing}, {\AnonymousReserve}, and {\ExAnteRelaxation}. Below we briefly introduce these mechanisms; for more details, the reader can refer to \cite[Chapter~4]{H13}.

\vspace{.1in}
\noindent
{\bf Anonymous Pricing.}
In such a mechanism, the seller posts an {\em a priori} price $p \in \RP$ to any single item; then in an arbitrary coming order, each of the first $k$ coming buyers that are willing to pay the price $p \in \RP$, will get an item by paying this price. Given any bid profile $\vec{b} \sim \vec{F}$, let $b_{(n + 1)} \eqdef 0$ and reorder the bids such that $b_{(1)} \geq \cdots \geq b_{(i)} \geq \cdots \geq b_{(n + 1)}$.

Depending on how many bids exceed the posted price, the mechanism gives a revenue of
\begin{align*}
    \textsc{Rev}(\ap)
    & ~ = ~ \sum_{i \in [k]} i \cdot p \cdot \mathbbm{1}\{b_{(i)} \geq p > b_{(i + 1)}\}
    ~ + ~ k \cdot p \cdot \mathbbm{1}\{b_{(k + 1)} \geq p\} \\
    & ~ = ~ \sum_{i \in [k]} p \cdot \mathbbm{1}\{b_{(i)} \geq p\}.
\end{align*}
Taking the randomness over $\vec{b} \sim \vec{F}$ into account results in the expected revenue.

\begin{fact}[Revenue Formula for {\AnonymousPricing}]
\label{fact:ap_revenue}
Under any posted price $p \in \RP$, the {\AnonymousPricing} mechanism extracts an expected revenue of
\[
    \ap(p, \vec{F})
    ~ \eqdef ~ p \cdot \sum_{i \in [k]} (1 - D_{i}(p)).
\]
Let $\ap(\vec{F}) \eqdef \max_{p \in \RP} \{\ap(p, \vec{F})\}$ denote the optimal {\AnonymousPricing} revenue.
\end{fact}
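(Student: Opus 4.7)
The statement is a straightforward revenue identity, so the proof plan is essentially a bookkeeping exercise with two small subtleties: verifying the pointwise (per bid-profile) revenue identity that collapses the ``case-split'' formula into a sum of indicators, and then matching the resulting probabilities against the left-continuous convention for the order-statistic CDFs $D_i$.

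First, I would fix an arbitrary bid realization $\vec b$ and reorder it as $b_{(1)} \geq \cdots \geq b_{(n)}$ with $b_{(n+1)} \eqdef 0$, exactly as in the excerpt. The key pointwise identity to establish is
\[
    \sum_{i \in [k]} i \cdot p \cdot \mathbbm{1}\{b_{(i)} \geq p > b_{(i+1)}\} \;+\; k \cdot p \cdot \mathbbm{1}\{b_{(k+1)} \geq p\}
    \;=\; \sum_{i \in [k]} p \cdot \mathbbm{1}\{b_{(i)} \geq p\}.
\]
To see this, let $M(\vec b) \eqdef \min\{k,\, \#\{j : b_j \geq p\}\}$ denote the number of items actually sold at price $p$. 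Because the $b_{(i)}$'s are sorted in non-increasing order, $\mathbbm{1}\{b_{(i)} \geq p\} = \mathbbm{1}\{i \leq M(\vec b)\}$, so the right-hand side equals $p \cdot M(\vec b)$. On the left-hand side, the events $\{b_{(i)} \geq p > b_{(i+1)}\}$ for $i \in [k]$ are mutually exclusive and describe the case $M(\vec b) = i < k$, while $\{b_{(k+1)} \geq p\}$ describes $M(\vec b) = k$; summing $i \cdot p$ (resp.\ $k \cdot p$) in these cases gives exactly $p \cdot M(\vec b)$. This also matches the operational description of the mechanism: at most $k$ of the first willing buyers each pay $p$.

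Next I would take expectation over $\vec b \sim \vec F$ and apply linearity to the collapsed expression:
\[
    \E_{\vec b \sim \vec F}\!\left[\sum_{i \in [k]} p \cdot \mathbbm{1}\{b_{(i)} \geq p\}\right]
    \;=\; p \cdot \sum_{i \in [k]} \Pr_{\vec b \sim \vec F}[b_{(i)} \geq p].
\]
Finally, invoking the paper's convention that each $D_i$ is the left-continuous CDF of $b_{(i)}$, so that $D_i(p) = \Pr[b_{(i)} < p]$ and hence $\Pr[b_{(i)} \geq p] = 1 - D_i(p)$, yields the claimed formula $\ap(p,\vec F) = p \cdot \sum_{i \in [k]} (1 - D_i(p))$.

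There is no genuine obstacle here; the only point that deserves care is the left-continuity convention, since with a right-continuous CDF one would instead obtain $\Pr[b_{(i)} \geq p] = 1 - D_i(p^-)$ and the identity would pick up a limit. Because the paper explicitly stipulates left-continuity (a willing buyer pays whenever $b_j \geq p$), the stated clean form is correct as written.
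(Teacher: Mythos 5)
Your proposal matches the paper's own derivation: the paper also writes the per-realization revenue as the case-split sum, collapses it to $\sum_{i \in [k]} p \cdot \mathbbm{1}\{b_{(i)} \geq p\}$, and takes the expectation using the left-continuous convention $D_i(p) = \Pr[b_{(i)} < p]$. Your additional justification of the collapse via the sold-count $M(\vec b)$ and the explicit remark on left-continuity are correct elaborations of the same argument.
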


\noindent
{\bf Anonymous Reserve.}
In such a mechanism, the seller sets an {\em a priori} reserve $r \in \RP$ on any single item. When at most $k$ bidders are willing to pay the reserve $r \in \RP$, {\AnonymousReserve} has the same allocation/payment rule as {\AnonymousPricing}, thus the same revenue. But when at least $(k + 1)$ bidders are willing to pay this reserve, each of the top-$k$ bidders (with an arbitrary tie-breaking rule) wins an item by paying the $(k + 1)$-th highest bid $b_{(k + 1)} \geq r$.

Running on a specific bid profile $\vec{b} \sim \vec{F}$, the mechanism generates a revenue of
\begin{align*}
    \textsc{Rev}(\ar)
    & ~ = ~ \sum_{i \in [k]} i \cdot r \cdot \mathbbm{1}\{b_{(i)} \geq r > b_{(i + 1)}\}
    ~ + ~ k \cdot b_{(k + 1)} \cdot \mathbbm{1}\{b_{(k + 1)} \geq r\} \\
    & ~ = ~ \sum_{i \in [k]} r \cdot \mathbbm{1}\{b_{(i)} \geq r\}
    ~ + ~ k \cdot |b_{(k + 1)} - r|_{+}.
\end{align*}
Taking the randomness over $\vec{b} \sim \vec{F}$ into account gives the expected revenue. (Note that \cite[Fact~1]{CGM15} get the revenue formula below in the single-item case $k = 1$.)

\begin{fact}[Revenue Formula for {\AnonymousReserve} {\cite[Fact~1]{CGM15}}]
\label{fact:ar_revenue}
Under any reserve $r \in \RP$, the {\AnonymousReserve} mechanism extracts an expected revenue of
\[
    \ar(r, \vec{F})
    ~ \eqdef ~ r \cdot \sum_{i \in [k]} (1 - D_{i}(r))
    ~ + ~ k \cdot \int_{r}^{\infty} (1 - D_{k + 1}(x)) \cdot \d x.
\]
Let $\ar(\vec{F}) \eqdef \max_{r \in \RP} \{\ar(r, \vec{F})\}$ denote the optimal {\AnonymousReserve} revenue.
\end{fact}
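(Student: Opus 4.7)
The plan is to derive the expectation term by term from the realized-revenue expression displayed immediately above the fact, namely
\[
    \textsc{Rev}(\ar) = \sum_{i \in [k]} r \cdot \mathbbm{1}\{b_{(i)} \geq r\} + k \cdot |b_{(k + 1)} - r|_{+}.
\]
Before averaging, I would first sanity-check this pointwise identity via a two-case analysis. If $b_{(k+1)} < r$, the earlier form $\sum_{i \in [k]} i \cdot r \cdot \mathbbm{1}\{b_{(i)} \geq r > b_{(i+1)}\}$ picks out the unique index $i = m \in [0:k]$ with $b_{(m)} \geq r > b_{(m+1)}$ (where $m$ is the number of bids at least $r$), yielding $m \cdot r$, while the claimed form evaluates to $r \cdot m + 0$, matching. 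If $b_{(k+1)} \geq r$, every indicator $\mathbbm{1}\{b_{(i)} \geq r > b_{(i+1)}\}$ with $i \in [k]$ vanishes, leaving $k \cdot b_{(k+1)}$, while the claimed form reads $r \cdot k + k \cdot (b_{(k+1)} - r) = k \cdot b_{(k+1)}$. This settles the pointwise identity and also accommodates the convention $b_{(n+1)} \eqdef 0$ fixed in the preliminaries.

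Next I would take the expectation under $\vec{b} \sim \vec{F}$. Linearity together with the left-continuity assumption on each $D_i$ gives $\E[\mathbbm{1}\{b_{(i)} \geq r\}] = \Pr[b_{(i)} \geq r] = 1 - D_i(r)$, so the first term contributes $r \cdot \sum_{i \in [k]} (1 - D_i(r))$, matching the first summand of the target formula. For the second term, I would apply the layer-cake identity for non-negative random variables,
\[
    \E\bigl[|b_{(k+1)} - r|_+\bigr] = \int_0^\infty \Pr\bigl[|b_{(k+1)} - r|_+ \geq y\bigr] \cdot \d y = \int_0^\infty \Pr[b_{(k+1)} \geq r + y] \cdot \d y,
\]
and then substitute $x = r + y$ to rewrite the integral as $\int_r^\infty (1 - D_{k+1}(x)) \cdot \d x$. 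Scaling by $k$ and summing with the contribution of the first term yields exactly the formula asserted for $\ar(r, \vec{F})$.

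There is no genuinely hard step here: the bookkeeping is standard once the pointwise simplification is justified, and the layer-cake manipulation is routine. The main conceptual point, which I would flag in a short remark, is that the ``rebate'' term $k \cdot |b_{(k+1)} - r|_+$ is exactly what reconciles charging $r$ per winner (the reserve-binding regime) with charging the $(k+1)$-th-highest bid $b_{(k+1)}$ per winner (the overflow regime), so that a single formula covers both. Once this is internalized, the extension from the single-item statement in \cite[Fact~1]{CGM15} to general $k$ is immediate, the only change being the leading factor $k$ that reflects all $k$ winners sharing a common payment when the reserve is non-binding.
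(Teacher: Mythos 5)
Your proposal is correct and follows exactly the route the paper takes: the paper states the same pointwise simplification of $\textsc{Rev}(\ar)$ to $\sum_{i \in [k]} r \cdot \mathbbm{1}\{b_{(i)} \geq r\} + k \cdot |b_{(k+1)} - r|_{+}$ and then simply says to take expectations, while you supply the (routine but worthwhile) details of the two-case verification and the layer-cake computation of $\E[|b_{(k+1)} - r|_{+}]$.
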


\noindent
{\bf Sequential Posted Pricing.}
In such a mechanism, the seller sets an ordering $\sigma: [n] \mapsto [n]$ and {\em a priori} prices $\{p_{j}\}_{j \in [n]}$. The buyers come {\em sequentially} $\sigma(1), \cdots, \sigma(n)$, and each of the first $k$ coming buyers that are willing to pay individual prices $b_{\sigma(j)} \geq p_{j}$ gets an item and pays $p_{j}$.

\vspace{.1in}
\noindent
{\bf Myerson Auction.}
This mechanism ranks the buyers $\varphi_{(1)}(b_{(1)}) \geq \varphi_{(2)}(b_{(2)}) \geq \cdots \varphi_{(n)}(b_{(n)})$ in decreasing order of virtual values and allocates the items to the top-$k$ buyers that have {\em nonnegative} virtual values $\varphi_{(j)}(b_{(j)}) \geq 0$.

\vspace{.1in}
\noindent
{\bf Ex-Ante Relaxation.}
This notion is introduced by \cite{CHMS10}. Although just being a ``fake'' mechanism,\footnote{Namely, in the concerning Bayesian mechanism design setting, {\ExAnteRelaxation} is unimplementable.} {\ExAnteRelaxation} is useful to upper bound the revenue from the optimal truthful mechanism, {\MyersonAuction}.

For a regular instance, an {\ExAnteRelaxation} mechanism is specified by an allocation rule $\vec{q}' = \{q'_{j}\}_{j \in [n]} \in [0, 1]^{n}$. Here, each $q_{j} \in [0, 1]$ represents the probability that the buyer $j \in [n]$ wins an item. This allocation rule is feasible iff $\sum_{j \in [n]} q'_{j} \leq k$, because we only have $k$ items. The following fact characterizes the resulting ``revenue''.

\begin{fact}[Revenue Formula for {\ExAnteRelaxation} {\cite[Lemma~2]{CHMS10}}]
\label{fact:ear_revenue}
Given a regular instance $\vec{F} = \{F_{j}\}_{j \in [n]}$, under any feasible allocation rule $\vec{q}' = \{q'_{j}\}_{j \in [n]} \in [0, 1]^{n}$ that $\sum_{j \in [n]} q'_{j} \leq k$, the {\ExAnteRelaxation} mechanism extracts an expected revenue of
\[
    \ear(\vec{q}', \vec{F}) ~ \eqdef ~ \sum_{j \in [n]} F_{j}^{-1}(1 - q'_{j}) \cdot q'_{j}
\]
\end{fact}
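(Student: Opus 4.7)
The plan is to justify the formula by interpreting the Ex-Ante Relaxation as the problem of choosing, for each buyer $j$, an independent single-buyer mechanism with marginal allocation probability $q'_j$, subject only to the ex-ante capacity $\sum_j q'_j \leq k$. The core claim to verify is that once $\{q'_j\}$ is fixed, the optimal per-buyer revenue contribution is exactly $F_j^{-1}(1-q'_j) \cdot q'_j$, and that the global problem decouples across buyers.

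First I would argue the decoupling step. In the Ex-Ante Relaxation, the only binding cross-buyer constraint is the expected supply $\sum_j \Pr[\text{buyer } j \text{ wins}] \leq k$. By Myerson's payment identity, for any IR/IC single-buyer mechanism the expected revenue from buyer $j$ is $\E[\varphi_j(b_j) x_j(b_j)]$, which depends only on the marginal allocation rule $x_j(\cdot)$, not on the joint behavior across buyers. Hence, once the marginals $\{q'_j = \E[x_j(b_j)]\}$ are fixed, one can optimize each buyer's contribution independently.

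Second, for a single regular buyer $j$ and a marginal allocation budget $q'_j \in [0,1]$, I would show that the maximum expected revenue is $F_j^{-1}(1-q'_j) \cdot q'_j$, attained by the posted price $p_j = F_j^{-1}(1-q'_j)$. The lower bound is direct: buyer $j$ accepts $p_j$ with probability exactly $q'_j$, yielding revenue $p_j q'_j$. For the matching upper bound, using Myerson's identity, the revenue is $\E[\varphi_j(b_j) x_j(b_j)]$, which under the constraint $\E[x_j(b_j)] \leq q'_j$ is maximized by setting $x_j(b_j) = 1$ on the top-$q'_j$ quantile of $F_j$; regularity (monotonicity of $\varphi_j$) ensures this ``greedy'' allocation is optimal and coincides with the posted-price allocation.

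Summing the per-buyer optima across $j \in [n]$ then gives the stated formula $\ear(\vec{q}', \vec{F}) = \sum_{j \in [n]} F_j^{-1}(1 - q'_j) \cdot q'_j$. The main subtlety to be careful about is the role of regularity: without it, the single-buyer optimum at allocation budget $q$ is the concave envelope of $q \mapsto F_j^{-1}(1-q) \cdot q$ rather than the function itself, so the clean formula would require an ironing step. Under regularity this revenue curve is already concave, so no ironing is needed and the formula holds verbatim.
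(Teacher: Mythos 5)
The paper does not prove this statement at all: it is stated as a \emph{Fact} and imported wholesale from \cite[Lemma~2]{CHMS10} (see also the surrounding Remark, which defers the irregular case to that reference). Your proposal is therefore necessarily a different route -- a self-contained derivation -- and it is the correct, standard one: the ex-ante supply constraint is the only coupling across buyers, so the benchmark decomposes into $n$ single-buyer problems; by Myerson's payment identity each buyer's revenue is $\E[\varphi_j(b_j)x_j(b_j)]$, and with $\varphi_j$ nondecreasing (regularity) the optimum subject to the quantile budget $q'_j$ is the threshold rule at $p_j = F_j^{-1}(1-q'_j)$, whose expected virtual value equals the posted-price revenue $p_j q'_j$. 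You also correctly isolate where regularity enters: without it one gets the concave envelope (ironed revenue curve) rather than $q\mapsto F_j^{-1}(1-q)\cdot q$ itself, which is exactly the caveat the paper makes in its Remark. One small imprecision: you phrase the per-buyer problem with the budget $\E[x_j]\leq q'_j$, under which the maximum is $F_j^{-1}(1-q)\cdot q$ evaluated at $\min\{q'_j, q_j\}$ (the revenue curve is concave but not monotone past the monopoly quantile $q_j$); the Fact's formula corresponds to serving buyer $j$ with probability \emph{exactly} $q'_j$, and under that equality constraint your threshold argument gives precisely $F_j^{-1}(1-q'_j)\cdot q'_j$. This does not affect anything downstream, since the benchmark is ultimately a supremum over $\vec{q}'$, but the equality reading is the one that matches the stated formula verbatim.
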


\begin{remark}
We will study the {\ExAnteRelaxation} mechanism just for the regular instances. The revenue formulas for the irregular instances are more complicated, for which the reader can refer to \cite[Lemma~2]{CHMS10}.
\end{remark}

\noindent
{\bf Revenue monotonicity.}
Based on the revenue formulas given in \Cref{fact:ar_revenue,fact:ap_revenue,fact:ear_revenue}, one can easily check the following fact (a.k.a.\ the revenue monotonicity in the literature).

\begin{fact}[Revenue Monotonicity]
\label{fact:revenue_monotonicity}
Given that an instance $\vec{F} = \{F_{j}\}_{j \in [n]}$ stochastically dominates another instance $\bar{\vec{F}} = \{\bar{F}_{j}\}_{j \in [n]}$, the following hold:
\begin{enumerate}[font = {\em \bfseries}]
    \item $\ap(p, \vec{F}) \geq \ap(p, \bar{\vec{F}})$ for any posted price $p \in \RP$, and thus $\ap(\vec{F}) \geq \ap(\bar{\vec{F}})$.
    
    \item $\ar(r, \vec{F}) \geq \ar(r, \bar{\vec{F}})$ for any reserve $r \in \RP$, and thus $\ar(\vec{F}) \geq \ar(\bar{\vec{F}})$.
    
    \item $\ear(\vec{q}', \vec{F}) \geq \ear(\vec{q}', \bar{\vec{F}})$ for any allocation $\vec{q}' = \{q'_{j}\}_{j \in [n]} \in [0, 1]^{n}$ with $\sum_{j \in [n]} q'_{j} \leq k$.
\end{enumerate}
\end{fact}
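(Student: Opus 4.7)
The plan is to deduce all three inequalities from a single monotone coupling together with the closed-form revenue expressions in \cref{fact:ap_revenue,fact:ar_revenue,fact:ear_revenue}. Concretely, I would first observe that stochastic dominance $F_{j}(x) \leq \bar{F}_{j}(x)$ for every $x \in \RP$ is equivalent to the pointwise reversal $F_{j}^{-1}(y) \geq \bar{F}_{j}^{-1}(y)$ for every $y \in [0, 1]$, using the left-continuity of $F_{j}, \bar{F}_{j}$ fixed in \cref{subsec:prelim:distr}. Introducing i.i.d.\ uniforms $U_{1}, \dots, U_{n}$ on $[0, 1]$, the random variables $b_{j} \eqdef F_{j}^{-1}(U_{j})$ and $\bar{b}_{j} \eqdef \bar{F}_{j}^{-1}(U_{j})$ then satisfy $b_{j} \geq \bar{b}_{j}$ almost surely for every $j \in [n]$, and hence after sorting $b_{(i)} \geq \bar{b}_{(i)}$ almost surely for every $i$. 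Taking $\Pr[\,\cdot < x\,]$ on both sides yields the order-statistic CDF comparison $D_{i}(x) \leq \bar{D}_{i}(x)$ for all admissible $i$ and every $x \in \RP$.

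With this comparison in hand, parts 1 and 2 follow by inspection of the revenue formulas. For part 1, \cref{fact:ap_revenue} immediately gives $\ap(p, \vec{F}) = p \cdot \sum_{i \in [k]} (1 - D_{i}(p)) \geq p \cdot \sum_{i \in [k]} (1 - \bar{D}_{i}(p)) = \ap(p, \bar{\vec{F}})$ for every $p \in \RP$; taking $p$ to be an optimizer for $\bar{\vec{F}}$ yields $\ap(\vec{F}) \geq \ap(\bar{\vec{F}})$. For part 2, \cref{fact:ar_revenue} adds the non-negative term $k \cdot \int_{r}^{\infty} (1 - D_{k + 1}(x)) \, \d x$, which is also pointwise larger on the dominating side because $D_{k + 1} \leq \bar{D}_{k + 1}$ by the same coupling (with the convention $b_{(n+1)} \eqdef 0$ covering the boundary case $k = n$), and the same maximization step produces $\ar(\vec{F}) \geq \ar(\bar{\vec{F}})$.

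Part 3 does not actually need the coupling. Directly from $F_{j}^{-1}(1 - q'_{j}) \geq \bar{F}_{j}^{-1}(1 - q'_{j})$ and $q'_{j} \geq 0$, \cref{fact:ear_revenue} gives $\ear(\vec{q}', \vec{F}) = \sum_{j \in [n]} F_{j}^{-1}(1 - q'_{j}) \cdot q'_{j} \geq \sum_{j \in [n]} \bar{F}_{j}^{-1}(1 - q'_{j}) \cdot q'_{j} = \ear(\vec{q}', \bar{\vec{F}})$ term-by-term, and the feasibility hypothesis $\sum_{j} q'_{j} \leq k$ plays no role. I do not expect any genuine obstacle in this argument; the only delicate point to keep straight is the left-continuity convention used to pass between CDFs, survival functions, and generalized inverses, so that atoms of the $F_{j}$'s are handled correctly in the inequality $D_{i}(x) \leq \bar{D}_{i}(x)$ and in the translation between stochastic dominance and quantile dominance.
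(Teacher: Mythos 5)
Your proposal is correct, and it fills in exactly the verification the paper leaves implicit (the paper offers no proof beyond ``one can easily check'' from the revenue formulas in \cref{fact:ap_revenue,fact:ar_revenue,fact:ear_revenue}). The monotone coupling via $F_{j}^{-1}(U_{j})$ is a clean way to obtain the key comparison $D_{i}(x) \leq \bar{D}_{i}(x)$; the same inequality can also be read off directly from the multilinear expression for $D_{i}$ in \cref{fac:order_CDF}, which is increasing in each $F_{j}(x)$, but either route is fine and the rest of your argument (pointwise comparison of the revenue formulas, then taking the optimizing price/reserve) matches the intended proof.
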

\section{Anonymous Reserve vs. Anonymous Pricing}
\label{sec:ar}

In this section, we investigate the {\AnonymousReserve} vs. {\AnonymousPricing} problem. Based on the revenue formulas (see \Cref{subsec:prelim:mech}), the revenue gap between both mechanisms is characterized by the following mathematical program.
\begin{align}
    & \sup
    && \ar(r, \vec{F}) ~ = ~ r \cdot \sum_{i \in [k]} (1 - D_{i}(r)) + k \cdot \int_{r}^{\infty} (1 - D_{k + 1}(x)) \cdot \d x,
    && \forall r \in \RP,
    \label{prog:ar}\tag{P1} \\
    & \mbox{\,\,s.t.}
    && \ap(p, \vec{F}) ~ = ~ p \cdot \sum_{i \in [k]} (1 - D_{i}(p)) ~ \leq ~ 1,
    && \forall p \in \RP,
    \label{cstr:ar}\tag{C1} \\
    & && \vec{F} = \{F_{j}\}_{j \in [n]},
    && \forall n \in \NP.
    \notag
\end{align}
By finding the optimal solution to Program~\eqref{prog:ar}, we will prove the next theorem.

\begin{theorem}[{\ar} vs.\ {\ap}]
\label{thm:ar_ap:uniform}
Given that the seller has $k \in \NP$ homogeneous items and faces $n \geq k$ independent unit-demand buyers, the revenue gap $\Re_{\ar/\ap}(k, n)$ between {\AnonymousReserve} and {\AnonymousPricing} satisfies the following:
\begin{enumerate}[font = {\em \bfseries}]
    \item The revenue gap $\Re_{\ar/\ap}(k, n)$ is maximized when all the buyers have the same bid distribution $\{F^*\}^{n}$, and their common CDF $F^*$ is an implicit function given by $F^*(x) = 0$ for all $x \in [0, \frac{1}{k}]$ and $\ap(x, \{F^*\}^{n}) = 1$ for all $x \in (\frac{1}{k}, \infty)$.
    
    \item Over all $n \geq k$, the supremum revenue gap $\Re_{\ar/\ap}(k) \eqdef \sup_{n \geq k} \Re_{\ar/\ap}(k, n)$ is achieved by
    \[
        \Re_{\ar / \ap}(k, \infty) ~ = ~ 1 + k \cdot \int_{0}^{\infty} \frac{T_k(x) \cdot (1 - T_{k + 1}(x))}{(k - \sum_{i \in [k]} T_{i}(x))^{2}} \cdot \d x,
    \]
    where the functions $T_{i}(x) \eqdef e^{-x} \cdot \sum_{t \in [0: i - 1]} \frac{1}{t!} \cdot x^{t}$ for all $i \in [k + 1]$.
    
    \item For each $k \in \NP$, the supremum revenue gap is bounded between
    \[
        1 + \frac{0.1}{\sqrt{k}}
        ~ \leq ~ \Re_{\ar / \ap}(k)
        ~ \leq ~ 1 + \frac{2}{\sqrt{k}}.
    \]
    
    \item For each $k \in \NP$, the ratio $\Re_{\ar/\ap}(k)$ is tight not only in the asymmetric general setting, but (more restrictedly) also in the asymmetric regular setting and/or the i.i.d.\ general setting.
\end{enumerate}
\end{theorem}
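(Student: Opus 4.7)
The plan is to tackle the four parts of the theorem in order, with the bulk of the work going into Part~1 (the reduction to i.i.d.) and Part~2 (deriving the explicit formula), after which Parts~3 and~4 follow by analyzing the formula and verifying regularity. For Part~1, I would work directly with Program~\eqref{prog:ar} and show that any feasible instance can be replaced by an i.i.d.\ one without decreasing the ratio $\ar/\ap$. Fix a feasible $\vec F$ with optimal reserve $r^*$. Following the outline in \cref{sec:intro_ar_ap}, define Bernoullis $Y_j$ with failure probability $F_j(r^*)$, so that $D_i(r^*) = \Pr[\sum_j Y_j \leq i-1]$ by \cref{fac:order_CDF}, and pick i.i.d.\ Bernoullis $X_j$ with $\Pr[\sum_j X_j \leq k] = D_{k+1}(r^*)$. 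The Bernoulli Sum Lemma then yields $\Pr[\sum_j X_j \leq i-1] \geq D_i(r^*)$ for every $i \in [k]$. I would build an i.i.d.\ distribution $\wt F$ whose value at $r^*$ is prescribed by the $X_j$ and whose tail above $r^*$ is chosen by an explicit coupling so that $\wt D_{k+1}(x) = D_{k+1}(x)$ for all $x \geq r^*$. This substitution preserves $\ar(r^*,\vec F)$ exactly (both the reserve term and the integral term in \cref{fact:ar_revenue} are determined by $D_{k+1}$ on $[r^*,\infty)$) while weakly decreasing $\ap(p,\vec F)$ at every price $p$, so the ratio does not decrease.

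For Part~2, once the supremum is restricted to i.i.d.\ instances $\{F^*\}^n$, the feasibility constraint $\ap(p, \{F^*\}^n) \leq 1$ should be tight on the whole range $(1/k, \infty)$ at the worst case: otherwise, one could perturb $F^*$ upward at a slack price to strictly increase the integral term in $\ar$ without violating feasibility. On $[0, 1/k]$, pushing all of $F^*$'s mass down to $0$ is without loss since no price in this range can be optimal for $\ap$. This pins down the implicit $F^*$ claimed in Part~1. To pass to $n \to \infty$ and recover the closed-form expression, I would rescale $n(1 - F^*(x)) \to \lambda(x)$, so that the count of buyers with value $\geq x$ converges to $\mathrm{Poisson}(\lambda(x))$ and hence $D_i(x) \to T_i(\lambda(x))$. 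The tightness equation becomes $x \sum_{i \in [k]}(1 - T_i(\lambda)) = 1$, and substituting into \cref{fact:ar_revenue} with a change of variables from $x$ to $\lambda$ (via implicit differentiation of the tightness equation) should produce exactly the claimed integral.

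For Part~3, the $1 + \Theta(1/\sqrt k)$ bound follows by analyzing the integrand in the Part~2 formula near $\lambda = k$, where the Poisson CDFs $T_i(\lambda)$ transition from $\approx 1$ to $\approx 0$ over a window of width $\Theta(\sqrt k)$ by the CLT. Substituting $\lambda = k + u\sqrt k$ concentrates the integral in this window and evaluates it to $\Theta(1/\sqrt k)$, with the explicit constants $0.1$ and $2$ obtained via standard Chernoff tail bounds (details to be deferred to \cref{sec:asymptotic_formula}). For Part~4, I would verify that the implicit $F^*$ of Part~1 is regular: the defining relation forces $1 - F^*(x)$ to be a smooth monotone function of $x$ on $(1/k, \infty)$, and I expect the revenue-quantile curve to be concave, so tightness will hold in the asymmetric regular and i.i.d.\ general settings as well as in the asymmetric general one. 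The main obstacle I foresee lies in Part~1: the Bernoulli Sum Lemma is only a pointwise statement at $r^*$, whereas the argument demands matching $D_{k+1}$ on the whole tail $[r^*,\infty)$ and keeping the AP constraint satisfied at every $p \neq r^*$; constructing a single $\wt F$ that achieves both simultaneously is the technical heart of the proof.
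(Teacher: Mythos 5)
Your overall architecture matches the paper's, but two of the four parts have genuine gaps. For Part~1 you correctly identify the obstacle --- the Bernoulli Sum Lemma applied only at the optimal reserve $r^*$ controls neither $\ap(p,\cdot)$ at other prices nor the shape of the i.i.d.\ surrogate above $r^*$ --- but you leave it unresolved. The paper's resolution is to apply the lemma \emph{pointwise at every} $x \geq 0$, always with the threshold placed at $s = k$: for each $x$ this produces a common failure probability $\bar F(x)$ satisfying $\bar D_{k+1}(x) = D_{k+1}(x)$ and $\bar D_i(x) \geq D_i(x)$ for all $i \in [k]$ simultaneously, which kills both problems at once (the {\AnonymousPricing} revenue weakly drops at \emph{every} price, and the integral term of \cref{fact:ar_revenue} is preserved exactly). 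The price of this move is that the pointwise-defined $\bar F$ need not be monotone in $x$, so the paper temporarily drops the requirement that the inputs be CDFs, treats the revenue formulas as functionals of arbitrary mappings $\RP \to [0,1]$, and only verifies CDF-ness for the final scaled object $F^*$ (using that $\sum_{i \in [k]} D_i^*(x) = |k - \tfrac{1}{x}|_+$ is increasing and that each $D_i^*$ is increasing in $q = F^*(x)$). Without some such device your ``explicit coupling of the tail'' has no content, and the claim that $\ap(p,\cdot)$ weakly decreases at every $p$ is unjustified.

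For Part~4 your plan rests on a false premise: you expect the worst-case i.i.d.\ $F^*$ to be regular, but it is regular only when $k = n = 1$; already for $k = 1$ and $n \geq 2$ it equals $\sqrt[n]{|1 - \tfrac{1}{x}|_+}$, which is irregular, and the same holds for $n \geq k \geq 2$. Hence $F^*$ certifies tightness in the i.i.d.\ general setting but not in the asymmetric regular setting, and the paper needs a separate construction there: an asymmetric instance of $n + nk$ \emph{triangle} distributions whose monopoly prices sweep a fine grid of $[a,b] \subseteq (\tfrac{1}{k},\infty)$ and whose monopoly quantiles are defined recursively (via the intermediate value theorem and the strict monotonicity of $\ap(p,\vec F)$ between grid points) so that $\ap(v_j,\vec F) = 1$ at every grid point; as the grid is refined the order statistics converge to those of the limit instance, so the {\AnonymousReserve} revenue approaches $\Re_{\ar/\ap}(k)$. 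Parts~2 and~3 of your sketch (tightness of the {\AnonymousPricing} constraint, the Poissonization $n(1 - F^*(x)) \to \lambda(x)$ giving $D_i \to T_i(\lambda)$, the change of variables by implicit differentiation, and the central-limit window of width $\Theta(\sqrt{k})$ around $\lambda = k$) do track the paper's argument.
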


We first outline our approach towards \cref{thm:ar_ap:uniform}. Central to the upper-bound analysis is a basic result about the sum of independent Bernoulli random variables, which is formalized below as \Cref{thm:bernoulli} and can be of independent interest (see \Cref{fig:bernoulli} for a demonstration). We will prove this theorem in \Cref{subsec:ar_ap:bernoulli}.

\begin{theorem}[Bernoulli Sum Lemma]
\label{thm:bernoulli}
For two arrays of Bernoulli random variables: $\{X_{j}\}_{j \in [n]}$ are independent and identically distributed (i.i.d.\ in short), while $\{Y_{j}\}_{j \in [n]}$ independent yet not necessarily identically distributed. For the random sums $X = \sum_{j \in [n]} X_{j}$ and $Y = \sum_{j \in [n]} Y_{j}$, there exists some threshold $s \in \R$ such that:
\begin{enumerate}[font = {\em \bfseries}]
    \item $\Pr[X \leq t] \geq \Pr[Y \leq t]$ for any $t < s$.
    
    \item $\Pr[X \leq t] \leq \Pr[Y \leq t]$ for any $t \geq s$.
\end{enumerate}
\end{theorem}

\begin{figure}
\centering
\begin{tikzpicture}[thick, smooth]
\draw[->] (0, 0) -- (10.5, 0);
\draw[->] (0, 0) -- (0, 6.5);
\node[above] at (0, 6.5) {CDF};
\node[right] at (10.5, 0) {$t$};
\node[left] at (0, 6) {$1$};
\draw[loosely dashed] (0, 6) -- (10.5, 6);

\draw[dotted, OliveGreen] (5, 0) -- (5, 6);
\node[right, OliveGreen] at (5, 1) {intersection point $s = 5$};

\node[below] at (0, 0) {$0$};
\node[below] at (1, 0) {$1$};
\node[below] at (2, 0) {$2$};
\node[below] at (3, 0) {$3$};
\node[below] at (4, 0) {$4$};
\node[below] at (5, 0) {$5$};
\node[below] at (6, 0) {$6$};
\node[below] at (7, 0) {$7$};
\node[below] at (8, 0) {$8$};
\node[below] at (9, 0) {$9$};
\node[below] at (10, 0) {$10$};

\draw[blue, fill = blue] (0, 0.000127729334370 * 6) circle (1.5pt);
\draw[blue, densely dashed] (0, 0.000127729334370 * 6) -- (1, 0.000127729334370 * 6);
\draw[blue, fill = blue] (1, 0.002780709190392 * 6) circle (1.5pt);
\draw[blue, densely dashed] (1, 0.002780709190392 * 6) -- (2, 0.002780709190392 * 6);
\draw[blue, fill = blue] (2, 0.024818622730211 * 6) circle (1.5pt);
\draw[blue, densely dashed] (2, 0.024818622730211 * 6) -- (3, 0.024818622730211 * 6);
\draw[blue, fill = blue] (3, 0.119082008961172 * 6) circle (1.5pt);
\draw[blue, densely dashed] (3, 0.119082008961172 * 6) -- (4, 0.119082008961172 * 6);
\draw[blue, fill = blue] (4, 0.343004672892540 * 6) circle (1.5pt);
\draw[blue, densely dashed] (4, 0.343004672892540 * 6) -- (5, 0.343004672892540 * 6);
\draw[blue, fill = blue] (5, 0.644307828475970 * 6) circle (1.5pt);
\draw[blue, densely dashed] (5, 0.644307828475970 * 6) -- (6, 0.644307828475970 * 6);
\draw[blue, fill = blue] (6, 0.875218065880751 * 6) circle (1.5pt);
\draw[blue, densely dashed] (6, 0.875218065880751 * 6) -- (7, 0.875218065880751 * 6);
\draw[blue, fill = blue] (7, 0.974510715353482 * 6) circle (1.5pt);
\draw[blue, densely dashed] (7, 0.974510715353482 * 6) -- (8, 0.974510715353482 * 6);
\draw[blue, fill = blue] (8, 0.997367868872672 * 6) circle (1.5pt);
\draw[blue, densely dashed] (8, 0.997367868872672 * 6) -- (9, 0.997367868872672 * 6);
\draw[blue, fill = blue] (9, 0.999895270819812 * 6) circle (1.5pt);
\draw[blue, densely dashed] (9, 0.999895270819812 * 6) -- (10, 0.999895270819812 * 6);
\draw[blue, fill = blue] (10, 1 * 6) circle (1.5pt);

\draw[red, fill = red] (0, 0.000976562500000 * 6) circle (1.5pt);
\draw[red] (0, 0.000976562500000 * 6) -- (1, 0.000976562500000 * 6);
\draw[red, fill = red] (1, 0.010742187500000 * 6) circle (1.5pt);
\draw[red] (1, 0.010742187500000 * 6) -- (2, 0.010742187500000 * 6);
\draw[red, fill = red] (2, 0.054687500000000 * 6) circle (1.5pt);
\draw[red] (2, 0.054687500000000 * 6) -- (3, 0.054687500000000 * 6);
\draw[red, fill = red] (3, 0.171875000000000 * 6) circle (1.5pt);
\draw[red] (3, 0.171875000000000 * 6) -- (4, 0.171875000000000 * 6);
\draw[red, fill = red] (4, 0.376953125000000 * 6) circle (1.5pt);
\draw[red] (4, 0.376953125000000 * 6) -- (5, 0.376953125000000 * 6);
\draw[red, fill = red] (5, 0.623046875000000 * 6) circle (1.5pt);
\draw[red] (5, 0.623046875000000 * 6) -- (6, 0.623046875000000 * 6);
\draw[red, fill = red] (6, 0.828125000000000 * 6) circle (1.5pt);
\draw[red] (6, 0.828125000000000 * 6) -- (7, 0.828125000000000 * 6);
\draw[red, fill = red] (7, 0.945312500000000 * 6) circle (1.5pt);
\draw[red] (7, 0.945312500000000 * 6) -- (8, 0.945312500000000 * 6);
\draw[red, fill = red] (8, 0.989257812500000 * 6) circle (1.5pt);
\draw[red] (8, 0.989257812500000 * 6) -- (9, 0.989257812500000 * 6);
\draw[red, fill = red] (9, 0.999023437500000 * 6) circle (1.5pt);
\draw[red] (9, 0.999023437500000 * 6) -- (10, 0.999023437500000 * 6);
\draw[red, fill = red] (10, 1 * 6) circle (1.5pt);
\end{tikzpicture}
\caption{Demonstration for \Cref{thm:bernoulli} when $n = 10$. Note that ``full line $\geq$ dashed line'' when $t < 5$ and ``full line $\leq$ dashed line'' when $t \geq 5$, where the full line refers to the sum of i.i.d.\ Bernoulli variables, and the dashed line refers to the sum of independent yet not necessarily identical Bernoulli variables.}
\label{fig:bernoulli}
\end{figure}
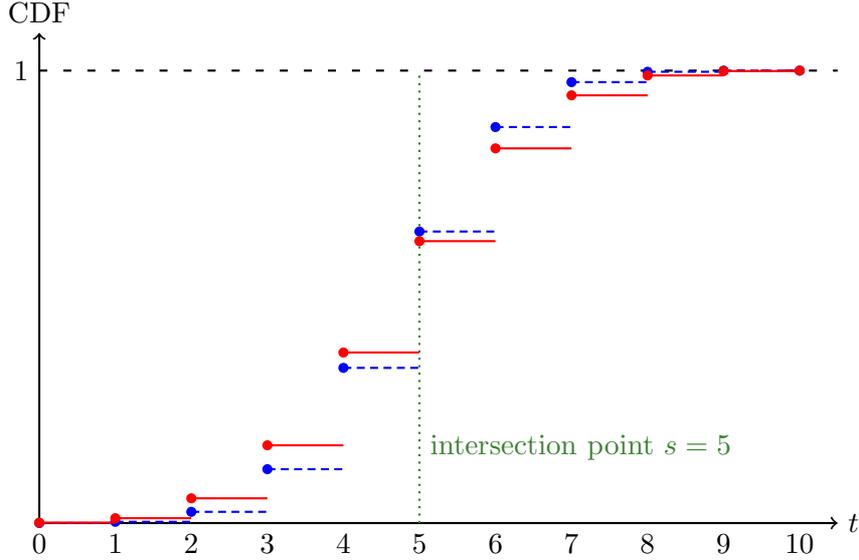

Based on \Cref{thm:bernoulli} together with further optimization arguments, we acquire Part~1 and Part~2 of \Cref{thm:ar_ap:uniform} respectively in \Cref{subsec:ar_ap:worst_case,subsec:ar_ap:gap}. Part~3 of \Cref{thm:ar_ap:uniform} requires some advanced tools from real analysis; its proof is technically involved and is deferred to \Cref{sec:asymptotic_formula}. (Notice that Parts~1 to 3 require no distributional assumption.) Eventually, we construct two matching lower-bound examples in \Cref{subsec:ar_ap:lower}, one in the i.i.d.\ general setting and one in the asymmetric regular setting, hence Part~4 of \Cref{thm:ar_ap:uniform}.

All the above results concern the {\AnonymousReserve} vs.\ {\AnonymousPricing} problem under a cardinality constraint, namely up to $k \in \NP$ buyers can simultaneously win. In the literature on mechanism design, many works also consider the more general constraint that the winning buyers satisfy a {\em matroid} constraint (e.g.\ see \cite{HR09}). For this setting, we will show in \Cref{subsec:ar_ap:matroid} an lower bound $\Omega(\log k)$ for the counterpart revenue gap.

\subsection{Bernoulli sum lemma}
\label{subsec:ar_ap:bernoulli}

The following well known result, that any independent Bernoulli sum is a {\em log-concave} random variable (e.g.\ see \cite{JKM13}), is crucial to our proof of \Cref{thm:bernoulli}.

\begin{fact}[Log-Concavity of Bernoulli Sum]
\label{fact:bernoulli}
Let $Z = \sum_{j \in [n]} Z_{j}$ be the sum of $n \in \NP$ independent yet not necessarily identical Bernoulli random variables, then for any integer $t \in \Z$,
\[
    \Pr[Z = t]^2 ~ \geq ~ \Pr[Z = t - 1] \cdot \Pr[Z = t + 1].
\]
\end{fact}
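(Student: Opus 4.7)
My plan is to prove this by induction on $n$, exploiting the convolution structure of the Bernoulli sum. The base case $n = 1$ is immediate: the triple $(1-p_1, p_1, 0)$ is trivially log-concave.

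For the inductive step, I would split $Z = Z' + Z_n$ with $Z' = \sum_{j < n} Z_j$, write $p = \Pr[Z_n = 1]$, $a_t = \Pr[Z = t]$, and $b_t = \Pr[Z' = t]$; conditioning on $Z_n$ gives the convolution identity $a_t = (1-p)\, b_t + p\, b_{t-1}$. Expanding $a_t^{2} - a_{t-1}\, a_{t+1}$ and regrouping by powers of $p$ and $1-p$ yields
\[
a_t^{2} - a_{t-1} a_{t+1} \;=\; (1-p)^2 \bigl( b_t^{2} - b_{t-1} b_{t+1} \bigr) \,+\, p^{2} \bigl( b_{t-1}^{2} - b_{t-2} b_t \bigr) \,+\, p(1-p) \bigl( b_{t-1} b_t - b_{t-2} b_{t+1} \bigr).
\]
The first two parenthesized differences are non-negative by the induction hypothesis applied to $Z'$, so the entire task reduces to bounding the cross-term.

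The crux is therefore proving $b_{t-1} b_t \geq b_{t-2} b_{t+1}$. In the generic case $b_{t-1}, b_t > 0$, I would rearrange the two inductive log-concavity inequalities as $b_{t-2} \leq b_{t-1}^{2} / b_t$ and $b_{t+1} \leq b_t^{2} / b_{t-1}$, and multiply them to conclude $b_{t-2} b_{t+1} \leq b_{t-1} b_t$ on the nose.

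The main obstacle will be the boundary case in which either $b_{t-1}$ or $b_t$ vanishes, since the ratio argument breaks down. To handle it I would use the fact that the support of the Poisson binomial variable $Z'$ is a contiguous integer interval $\{L', L'+1, \ldots, U'\}$, with $L' = |\{j < n : p_j = 1\}|$ and $U' = |\{j < n : p_j > 0\}|$. If $b_{t-1} = 0$ or $b_t = 0$, then one of the indices $t-1, t$ lies strictly outside $[L', U']$; contiguity then forces either $b_{t-2} = 0$ (when we are below $L'$) or $b_{t+1} = 0$ (when we are above $U'$), making the cross-term trivially non-negative. Summing the three non-negative contributions closes the induction. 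As a fallback, if the boundary bookkeeping becomes too cumbersome, the same log-concavity follows from the observation that the probability generating function $\prod_{j=1}^{n} (1 - p_j + p_j x)$ has only real non-positive roots, whence Newton's inequality for real-rooted polynomials with non-negative coefficients yields the claim at once.
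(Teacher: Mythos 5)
Your proof is correct. Note that the paper does not actually prove \Cref{fact:bernoulli}; it cites it as a well-known property of Poisson binomial distributions (referring to \cite{JKM13}), so any correct self-contained argument is a genuine addition rather than a deviation. Your induction via the convolution identity $a_t = (1-p)\,b_t + p\,b_{t-1}$ is the standard route, and the two delicate points both check out: the regrouping
\[
a_t^{2} - a_{t-1} a_{t+1} = (1-p)^2 (b_t^{2} - b_{t-1} b_{t+1}) + p^{2} (b_{t-1}^{2} - b_{t-2} b_t) + p(1-p) (b_{t-1} b_t - b_{t-2} b_{t+1})
\]
is an exact algebraic identity (the $2p(1-p)b_{t-1}b_t$ from the square minus the $p(1-p)b_{t-1}b_t$ from the product leaves exactly one copy of $b_{t-1}b_t$ in the cross-term), and the boundary analysis is sound because the support of a Poisson binomial variable is indeed the contiguous interval $[L',U']$ with $L'=|\{j<n: p_j=1\}|$ and $U'=|\{j<n: p_j>0\}|$, so whenever $b_{t-1}b_t=0$ one of $b_{t-2},b_{t+1}$ also vanishes and the cross-term is trivially non-negative. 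The generic-case argument multiplying $b_{t-2}\le b_{t-1}^2/b_t$ and $b_{t+1}\le b_t^2/b_{t-1}$ is the cleanest way to get the "three-term" log-concavity $b_{t-1}b_t\ge b_{t-2}b_{t+1}$ from the inductive hypothesis. Your fallback via real-rootedness of the probability generating function $\prod_{j}(1-p_j+p_j x)$ and Newton's inequalities is also valid and in fact yields the strictly stronger ultra-log-concavity; either argument suffices for the paper's purposes.
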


For ease of presentation, we will only justify Part~1 of \Cref{thm:bernoulli}, and Part~2 follows from similar arguments. Further, it suffices to consider the case that $\Pr[X \leq s] = \Pr[Y \leq s]$ (i.e.\ the other case that $\Pr[X \leq s] > \Pr[Y \leq s]$ can be accommodated by properly scaling the failure probability of the i.i.d.\ random variables $\{X_{j}\}_{j \in [n]}$). Further, since we concern about Bernoulli random variables and their sums, we safely assume $t \leq s$ to be integers between $[0: n]$. We obtain Part~1 of \Cref{thm:bernoulli} in two steps:
\begin{itemize}
    \item Local transformation. In \Cref{lem:bernoulli,cor:bernoulli}, we will pick a pair of non-identically distributed variables $Y_{j_{1}}$ and $Y_{j_{2}}$ (for some $j_{1} \neq j_{2} \in [n]$), and replace them by another pair of i.i.d.\ variables $\ol{Y}_{j_{1}}$ and $\ol{Y}_{j_{2}}$. The new pair is carefully constructed, so as to ensure certain properties.
    
    \item Global transformation. We conduct the local transformation on the variables $\{Y_{j}\}_{j \in [n]}$ round by round (in a nontrivial way), which preserves the mentioned properties by induction. Together with extra arguments from real analysis (see \Cref{cla:bernoulli_part1}), these properties will lead to Part~1 of \Cref{thm:bernoulli}.
\end{itemize}

\begin{lemma}[Averaging Two Variables.]
\label{lem:bernoulli}
Assume w.l.o.g.\ that two variables $Y_{j_{1}}$ and $Y_{j_{2}}$ given in \Cref{thm:bernoulli} (for some $j_{1} \neq j_{2} \in [n]$) are not identically distributed, then there exists another pair of i.i.d.\ Bernoulli random variables $\ol{Y}_{j_{1}}$ and $\ol{Y}_{j_{2}}$ such that:
\begin{enumerate}[font = {\em \bfseries}]
    \item $\Pr[\ol{Y}_{j_{1}} + \ol{Y}_{j_{2}} + \sum_{j \notin \{j_{1}, j_{2}\}} Y_{j} \leq s] = \Pr[Y \leq s]$.
    
    \item $\Pr[\ol{Y}_{j_{1}} + \ol{Y}_{j_{2}} + \sum_{j \notin \{j_{1}, j_{2}\}} Y_{j} \leq t] \geq \Pr[Y \leq t]$ for any $t \in [0: s - 1]$.
\end{enumerate}
\end{lemma}

\begin{proof}[Proof of \Cref{lem:bernoulli}]
For simplicity, we reindex the variables $\{Y_{j}\}_{j \in [n]}$ such that $j_{1}  = 1$ and $j_{2} = 2$. We adopt the following notations:
\begin{itemize}
    \item Let $q_j \eqdef \Pr[Y_{j} = 0] \in [0, 1]$ for all $j \in [n]$ and $\ol{q} \eqdef \Pr[\ol{Y}_{1} = 0] = \Pr[\ol{Y}_{2} = 0] \in [0, 1]$. W.l.o.g.\ we have $q_{1} < q_{2}$, given that $Y_{1}$ and $Y_{2}$ are not identically distributed.
    
    \item Let $Y' \eqdef \sum_{j \in [3: n]} Y_{j}$ and $a_{i} \eqdef \Pr[Y' = i]$ for all $i \in \Z$. Because $Y'$ is the sum of $(n - 2)$ Bernoulli random variables, $a_{i} \neq 0$ only if $i \in [0: n - 2]$.
\end{itemize}
It follows from \Cref{fact:bernoulli} that $a_{t}^2 \geq a_{t - 1} \cdot a_{t + 1}$ for all $t \in \Z$. By induction, one can easily see that
\begin{align}
    a_{t} \cdot a_{s - 1}
    ~ \geq ~ a_{t - 1} \cdot a_{s}
    \label{eq:log_concave}
\end{align}
for all integers $t \leq s \in \Z$, which is more convenient for our later use.

For any integer $i \in \Z$, by considering all of the four possibilities $(Y_{1}, Y_{2}) \in \{0, 1\}^{2}$, we can reformulate the probability $\Pr[Y = i] \in [0, 1]$ as follows:
\begin{align*}
    \Pr[Y = i]
    & ~ = ~ \Pr[Y_{1} + Y_{2} + Y' = i] \\
    & ~ = ~ \Pr[Y' = i] \cdot \Pr[Y_{1} = Y_{2} = 0]
    ~ + ~ \Pr[Y' = i - 2] \cdot \Pr[Y_{1} = Y_{2} = 1] \\
    & \phantom{~ = ~} ~ + ~ \Pr[Y' = i - 1] \cdot \Pr[Y_{1} = 0, Y_{2} = 1]
    ~ + ~ \Pr[Y' = i - 1] \cdot \Pr[Y_{1} = 1, = Y_{2} = 0] \\
    & ~ = ~ a_{i} \cdot q_{1} \cdot q_{2}
    ~ + ~ a_{i - 2} \cdot (1 - q_{1}) \cdot (1 - q_{2}) \\
    & \phantom{~ = ~} ~ + ~ a_{i - 1} \cdot q_{1} \cdot (1 - q_{2})
    ~ + ~ a_{i - 1} \cdot (1 - q_{1}) \cdot q_{2} \\
    & ~ = ~ ((a_{i} - a_{i - 1}) - (a_{i - 1} - a_{i - 2})) \cdot q_{1} \cdot q_{2}
    ~ + ~ (a_{i - 1} - a_{i - 2}) \cdot (q_{1} + q_{2})
    ~ + ~ a_{i - 2}.
\end{align*}
Thus, we can rewrite the telescoping sum $\Pr[Y \leq t] = \sum_{i \in [0: t]} \Pr[Y = i]$ as follows:
\begin{align}
    \Pr[Y \leq t]
    & ~ = ~ ((a_{t} - a_{t - 1}) - (a_{-1} - a_{-2})) \cdot q_{1} \cdot q_{2}
    ~ + ~ (a_{t - 1} - a_{-2}) \cdot (q_{1} + q_{2})
    ~ + ~ \sum_{i \in [0: t]} a_{i - 2}
    \notag \\
    & ~ = ~ (a_{t} - a_{t - 1}) \cdot q_{1} \cdot q_{2} ~ + ~ a_{t - 1} \cdot (q_{1} + q_{2}) ~ + ~ \mathrm{const},
    \label{eq:lem:bernoulli:1}
\end{align}
where the last step follows because $a_{-1} = a_{-2} = 0$ (recall that $a_{i} \neq 0$ only if $i \in [0: n - 2]$) and we denote $\mathrm{const} \eqdef \sum_{i \in [0: t - 2]} a_{i} = \Pr[Y' \leq t - 2]$.

We emphasize that \Cref{eq:lem:bernoulli:1} is a {\em multilinear} function of $\{q_j\}_{j \in [n]} \in [0, 1]^n$, and the last summand $\mathrm{const}$ is irrelevant to both $q_{1} = \Pr[Y_{1} = 0]$ and $q_{2} = \Pr[Y_{1} = 0]$. That is, suppose that $q_{2}$ and $\{q_j\}_{j = 3}^n$ are held constant, we can regard \Cref{eq:lem:bernoulli:1} as a linear function of $q_{1} \in [0, 1]$. Further, the corresponding slope
\begin{align*}
    (a_{t} - a_{t - 1}) \cdot q_{2} + a_{t - 1}
    ~ = ~ a_{t} \cdot q_{2} + a_{t - 1} \cdot (1 - q_{2})
\end{align*}
must be {\em non-negative}, because the probabilities $q_{2}, a_{t}, a_{t - 1} \in [0, 1]$. Similarly, when we regard \Cref{eq:lem:bernoulli:1} as a univariate of $q_{2} \in [0, 1]$, this is also a non-decreasing linear function.

Following the above arguments but considering $\ol{q} \in [0, 1]$ in place of $q_{1}$ and $q_{2}$, we also have
\begin{align}
    \Pr[\ol{Y}_{1} + \ol{Y}_{2} + Y' \leq t]
    & ~ = ~ (a_{t} - a_{t - 1}) \cdot \ol{q}^{2} ~ + ~ a_{t - 1} \cdot 2\ol{q} ~ + ~ \mathrm{const},
    \label{eq:lem:bernoulli:1.5}
\end{align}
Again, \Cref{eq:lem:bernoulli:1.5} is a non-decreasing function in $\ol{q} \in [0, 1]$. Given the monotonicity of \Cref{eq:lem:bernoulli:1,eq:lem:bernoulli:1.5} and since $q_{1} < q_{2}$, we can easily check that
\begin{align*}
    & \textnormal{\Cref{eq:lem:bernoulli:1.5}}|_{\ol{q} = q_{1}}
    ~ \leq ~ \textnormal{\Cref{eq:lem:bernoulli:1}}, \\
    & \textnormal{\Cref{eq:lem:bernoulli:1.5}}|_{\ol{q} = q_{2}}
    ~ \geq ~ \textnormal{\Cref{eq:lem:bernoulli:1}}.
\end{align*}
In the case that $t = s$, the equality $\Pr[\ol{Y}_{1} + \ol{Y}_{2} + Y' \leq s] = \Pr[Y \leq s]$ holds for at least one $\ol{q} \in [q_{1}, q_{2}] \subseteq [0, 1]$, due to the intermediate value theorem. That is,
\begin{align}
    (a_{s} - a_{s - 1}) \cdot \ol{q}^2 ~ + ~ a_{s - 1} \cdot 2\ol{q}
    ~ = ~ (a_{s} - a_{s - 1}) \cdot q_{1} \cdot q_{2} ~ + ~ a_{s - 1} \cdot (q_{1} + q_{2}).
\label{eq:lem:bernoulli:2}
\end{align}

This accomplishes Part~1 of \Cref{lem:bernoulli}. We next show that the above particular $\ol{q} \in [q_{1}, q_{2}]$ (for which $\Pr[\ol{Y}_{1} + \ol{Y}_{2} + Y' \leq s] = \Pr[Y \leq s]$) guarantees Part~2:
\begin{align*}
    \Pr[\ol{Y}_{1} + \ol{Y}_{2} + Y' \leq t]
    ~ \geq ~ \Pr[Y \leq t],
\end{align*}
for all integers $t \in [0: s - 1]$. The proof is based on case analysis.

\vspace{.1in}
\noindent
{\bf Case~I ($a_{s} = a_{s - 1}$).}
Based on \Cref{eq:log_concave}, i.e., $a_{t} \cdot a_{s - 1} \geq a_{t - 1} \cdot a_{s}$, we easily infer $a_{t} \geq a_{t - 1}$ (note that $a_{t - 1}, a_{t}, a_{s - 1}, a_{s} \in [0, 1]$ are probabilities). Moreover, \Cref{eq:lem:bernoulli:2} degenerates into $a_{s - 1} \cdot 2\ol{q} = a_{s - 1} \cdot (q_{1} + q_{2})$, by which we can safely choose $\ol{q} = \frac{1}{2} \cdot (q_{1} + q_{2})$. (Particularly, when $a_{s - 1} = 0$, the probabilities $\Pr[\ol{Y}_{1} + \ol{Y}_{2} + Y' \leq s] = \Pr[Y \leq s] = \mathrm{const}$ depend not on $\ol{q} \in [0, 1]$. That is, $\ol{q} \in [0, 1]$ can be arbitrary, and we just choose $\ol{q} = \frac{1}{2} \cdot (q_{1} + q_{2})$.) As a consequence,
\begin{align*}
    \Pr[\ol{Y}_{1} + \ol{Y}_{2} + Y' \leq t] - \Pr[Y \leq t]
    & ~ = ~ (a_{t} - a_{t - 1}) \cdot (\ol{q}^2 - q_{1} \cdot q_{2}) \\
    & ~ = ~ \frac{1}{4} \cdot (a_{t} - a_{t - 1}) \cdot (q_{1} - q_{2})^2 \\
    & ~ \geq ~ 0,
\end{align*}
where the last step follows because $a_{t} \geq a_{t - 1}$.

\vspace{.1in}
\noindent
{\bf Case~II ($a_{s} \neq a_{s - 1}$).}
In this case, we can reformulate \Cref{eq:lem:bernoulli:1} as follows:
\begin{align*}
    \Pr[Y \leq t]
    & ~ = ~ \frac{a_{t} - a_{t - 1}}{a_{s} - a_{s - 1}} \cdot ((a_{s} - a_{s - 1}) \cdot q_{1} \cdot q_{2} + a_{s - 1} \cdot (q_{1} + q_{2})) \\
    & \phantom{~ = ~} ~ + ~ \Big(a_{t - 1} - \frac{a_{t} - a_{t - 1}}{a_{s} - a_{s - 1}} \cdot a_{s - 1}\Big) \cdot (q_{1} + q_{2}) ~ + ~ \mathrm{const} \\
    & ~ = ~ \frac{a_{t} - a_{t - 1}}{a_{s} - a_{s - 1}} \cdot ((a_{s} - a_{s - 1}) \cdot \ol{q}^2 + a_{s - 1} \cdot 2\ol{q}) \\
    & \phantom{~ = ~} ~ + ~ \Big(a_{t - 1} - \frac{a_{t} - a_{t - 1}}{a_{s} - a_{s - 1}} \cdot a_{s - 1}\Big) \cdot (q_{1} + q_{2}) ~ + ~ \mathrm{const} \\
    & ~ = ~ \Pr[\ol{Y}_{1} + \ol{Y}_{2} + Y' \leq t]
    - \Big(a_{t - 1} - \frac{a_{t} - a_{t - 1}}{a_{s} - a_{s - 1}} \cdot a_{s - 1}\Big) \cdot 2\ol{q} \\
    & \phantom{~ = ~} ~ + ~ \Big(a_{t - 1} - \frac{a_{t} - a_{t - 1}}{a_{s} - a_{s - 1}} \cdot a_{s - 1}\Big) \cdot (q_{1} + q_{2}) \\
    & ~ = ~ \Pr[\ol{Y}_{1} + \ol{Y}_{2} + Y' \leq t]
    - \underbrace{\big(a_{t - 1} - \frac{a_{t} - a_{t - 1}}{a_{s} - a_{s - 1}} \cdot a_{s - 1}\big) \cdot (2\ol{q} - q_{1} - q_{2})}_{\heartsuit},
\end{align*}
where the first step rearranges \Cref{eq:lem:bernoulli:1}; the second step applies \Cref{eq:lem:bernoulli:2} to the first summand; the third step applies \Cref{eq:lem:bernoulli:1.5}; and the last step is by elementary calculation.

To accomplish the lemma for a certain $t \in [0: s - 1]$, it remains to show $\heartsuit \geq 0$.

\vspace{.1in}
\noindent
{\bf Case~II.A ($a_{s} > a_{s - 1}$).}
To see $\heartsuit \geq 0$ in this case, it suffices to show the following:
\begin{align}
    a_{t - 1} & ~ \leq ~ \frac{a_{t} - a_{t - 1}}{a_{s} - a_{s - 1}} \cdot a_{s - 1},
    \label{eq:lem:bernoulli:3} \\
    \ol{q} & ~ \leq ~ \frac{1}{2} \cdot (q_{1} + q_{2}).
    \label{eq:lem:bernoulli:4}
\end{align}
Because $a_{s} > a_{s - 1}$, \Cref{eq:lem:bernoulli:3} is equivalent to $a_{t - 1} \cdot a_{s} \leq a_{t} \cdot a_{s - 1}$, i.e.\ what we have shown in \Cref{eq:log_concave}. And for \Cref{eq:lem:bernoulli:4}, assume on the opposite that $\ol{q} > \frac{1}{2} \cdot (q_{1} + q_{2})$, then
\begin{align*}
    \lhs \mbox{ of } \eqref{eq:lem:bernoulli:2}
    & ~ = ~ (a_{s} - a_{s - 1}) \cdot \ol{q}^2 ~ + ~ a_{s - 1} \cdot 2\ol{q} \\
    & ~ \geq ~ (a_{s} - a_{s - 1}) \cdot \ol{q}^2 ~ + ~ a_{s - 1} \cdot (q_{1} + q_{2}) \\
    & ~ > ~ (a_{s} - a_{s - 1}) \cdot q_{1} \cdot q_{2} ~ + ~ a_{s - 1} \cdot (q_{1} + q_{2}) \\
    & ~ = ~ \rhs \mbox{ of } \eqref{eq:lem:bernoulli:2},
\end{align*}
where the third step is {\em strict} because $a_{s} > a_{s - 1}$ and $\ol{q} > \frac{1}{2} \cdot (q_{1} + q_{2}) \geq \sqrt{q_{1} \cdot q_{2}}$. This gives a contradiction. By refuting our assumption, we confirm \Cref{eq:lem:bernoulli:4} and thus $\heartsuit \geq 0$.

\vspace{.1in}
\noindent
{\bf Case~II.B ($a_{s} < a_{s - 1}$).}
Via similar arguments as in Case~II.A, we have $a_{t - 1} \geq \frac{a_{t} - a_{t - 1}}{a_{s} - a_{s - 1}} \cdot a_{s - 1}$ and $\ol{q} \geq \frac{1}{2} \cdot (q_{1} + q_{2})$. Given these, we also have $\heartsuit \geq 0$.

Putting all the cases together accomplishes \Cref{lem:bernoulli}.
\end{proof}

In the remainder of \Cref{subsec:ar_ap:bernoulli}, we continue to adopt the notations given in the proof of \Cref{lem:bernoulli}, and introduce two more notations:
\begin{itemize}
    \item Define $\Xi_{j_{1}, j_{2}}$ as the operator specified by \Cref{lem:bernoulli}, i.e., replacing a pair of non-identical failure probabilities $q_{j_{1}} \neq q_{j_{2}}$ by another pair of identical probabilities $\{\ol{q}\}^{2}$.
    
    \item Define the operation composition $\Xi(\vec{q}) \eqdef \Xi_{1, n} \circ \cdots \circ \Xi_{1, 3} \circ \Xi_{1, 2}(\vec{q})$, i.e.\ modifying $(q_{1}, q_{2})$ first, then the new $(\ol{q}_{1}, \ol{q}_{3})$ second, and so on. Note that both $\Xi_{j_{1}, j_{2}}$ and $\Xi$ are continuous mappings from $[0, 1]^{n}$ to $[0, 1]^{n}$.
\end{itemize}
Accessing the proof of \Cref{lem:bernoulli}, we can easily conclude the following corollary.

\begin{corollary}[Averaging Two Variables.]
\label{cor:bernoulli}
Given that $q_{j_{1}} \neq q_{j_{2}}$, the operation $\Xi_{j_{1}, j_{2}}$ specified by \Cref{lem:bernoulli} guarantees the strict inequalities $\min\{q_{j_{1}}, q_{j_{2}}\} < \ol{q} < \max\{q_{j_{1}}, q_{j_{2}}\}$.
\end{corollary}

We are ready to prove Part~1 of \Cref{thm:bernoulli}, namely the existence of a desired array of i.i.d.\ Bernoulli random variables $\{X_{j}\}_{j \in [n]}$.

\begin{claim}[Part~1 of \Cref{thm:bernoulli}]
\label{cla:bernoulli_part1}
$\Pr[X \leq t] \geq \Pr[Y \leq t]$ for any $t < s$.
\end{claim}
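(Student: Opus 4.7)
The plan is to iterate the averaging operator $\Phi$ on the failure probability vector of $Y$, pass to a limit that turns out to be an i.i.d.\ Bernoulli configuration, and then invoke the monotonicity guaranteed by \cref{lem:bernoulli}. Following the reduction noted immediately before the Claim, I work under the assumption $\Pr[X \leq s] = \Pr[Y \leq s]$, and since $X$ and $Y$ are integer-valued, it suffices to prove the inequality for integer $t \in [0{:}s-1]$.

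First, set $\vec{q}^{(0)} = \vec{q}$ (the failure probabilities of $Y$) and $\vec{q}^{(k)} = \Phi(\vec{q}^{(k-1)})$, letting $Y^{(k)}$ denote the associated Bernoulli sum. By \cref{lem:bernoulli} and induction on $k$, the invariance $\Pr[Y^{(k)} \leq s] = \Pr[Y \leq s]$ propagates through every step, and $\Pr[Y^{(k)} \leq t]$ is monotone non-decreasing in $k$ for each $t \leq s$. Each bounded monotone sequence therefore has a limit, so $L_{t} \eqdef \lim_{k} \Pr[Y^{(k)} \leq t]$ exists for every $t \in [0{:}s]$.

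Next I would invoke compactness of $[0,1]^{n}$ to extract a subsequence $\vec{q}^{(k_{\ell})} \to \vec{q}^{\ast}$, and combine the continuity of $\Phi$ with the continuity of $\vec{q} \mapsto \Pr[Y \leq t]$ (a polynomial in $\vec{q}$) to conclude both $\Pr[Y^{\ast} \leq t] = L_{t}$ and $\Pr[\Phi(Y^{\ast}) \leq t] = L_{t}$ for every $t \leq s$. Hence applying $\Phi$ to $\vec{q}^{\ast}$ preserves the CDF of the corresponding Bernoulli sum at every threshold $t \leq s$.

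The main obstacle is to conclude from this invariance that $\vec{q}^{\ast}$ must be a constant vector $(q^{\ast}, \ldots, q^{\ast})$. Suppose otherwise and let $j$ be the smallest index with $q_{j}^{\ast} \neq q_{1}^{\ast}$; by \cref{cor:bernoulli} the constituent $\Phi_{1, j}$ acts nontrivially on $\vec{q}^{\ast}$. I would then revisit the case analysis in the proof of \cref{lem:bernoulli} and show that the improvement $\Pr[\Phi_{1, j}(Y^{\ast}) \leq t] - \Pr[Y^{\ast} \leq t]$ is \emph{strictly} positive for at least one $t \in [0{:}s-1]$, contradicting the invariance. This strictness should follow from strict log-concavity of the Poisson-binomial $\sum_{j' \notin \{1, j\}} Y_{j'}^{\ast}$ on the interior of its support (a sharpening of \cref{fact:bernoulli}), together with the observation that the equalizing value $\bar{q}$ produced by $\Phi_{1, j}$ satisfies $\bar{q} \neq (q_{1}^{\ast} + q_{j}^{\ast})/2$ whenever $q_{1}^{\ast} \neq q_{j}^{\ast}$, which can be read off from the quadratic defining $\bar{q}$ in the Lemma's proof. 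Degenerate configurations, such as $\Pr[Y \leq s] \in \{0, 1\}$ or some $q_{j'}^{\ast} \in \{0, 1\}$, either trivialize the Claim or reduce to a strictly smaller instance and are handled separately. Once $\vec{q}^{\ast}$ is shown to be constant, $Y^{\ast}$ is a Binomial with the same parameter as $X$ (their CDFs agree at $s$ and the binomial CDF at a fixed integer is strictly monotone in its parameter), so for every $t \in [0{:}s-1]$,
\[
\Pr[X \leq t] \;=\; \Pr[Y^{\ast} \leq t] \;=\; L_{t} \;\geq\; \Pr[Y^{(0)} \leq t] \;=\; \Pr[Y \leq t],
\]
which is exactly the Claim.
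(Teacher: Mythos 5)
Your overall strategy coincides with the paper's: iterate the averaging operator $\Phi$ on the failure-probability vector, pass to a limit, and identify the limit with the i.i.d.\ configuration $X$. The divergence — and the gap — is in how the limit is shown to be a constant vector, which you correctly flag as ``the main obstacle.'' You try to force constancy from the invariance of the CDF values $\Pr[\cdot \leq t]$, $t \leq s$, under $\Phi$, which requires that a nontrivial application of $\Phi_{j_1,j_2}$ \emph{strictly} increases $\Pr[\cdot \leq t]$ for at least one $t \in [0\!:\!s-1]$. That strictness is false in general. In Case~II of the proof of \cref{lem:bernoulli} the improvement is $-\heartsuit = -\bigl(a_{t-1} - \tfrac{a_t - a_{t-1}}{a_s - a_{s-1}} a_{s-1}\bigr)(2\ol{q} - q_{1} - q_{2})$, and the first factor is proportional to $a_{t-1}a_s - a_t a_{s-1}$, which vanishes for every $t \in [0\!:\!s-1]$ whenever, e.g., some $Y_{j'}$ ($j' \notin \{j_1, j_2\}$) is deterministic so that the relevant $a_i$'s are zero, or more generally whenever log-concavity of $Y'$ is tight on $[0\!:\!s]$; Case~I with $a_t = a_{t-1}$ behaves the same way. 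In these configurations the CDF at every $t \leq s$ is exactly preserved even though $q_{j_1} \neq q_{j_2}$, so CDF-invariance of the subsequential limit $\vec{q}^{\ast}$ does not force $\vec{q}^{\ast}$ to be constant, and your final identification $\Pr[X \leq t] = L_t$ (which needs $Y^{\ast} \stackrel{d}{=} X$) collapses. Deferring these cases as ``degenerate, handled separately'' is not a small omission: a single coordinate $q_{j'} \in \{0,1\}$ does not reduce to a smaller instance of the same claim, because the comparison object $X$ is an i.i.d.\ sum over all $n$ coordinates, not $n-1$.

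The paper avoids this entirely by measuring progress in parameter space rather than CDF space: it tracks the spread $\ell(\vec{q}) = \max_j q_j - \min_j q_j$, and \cref{cor:bernoulli} guarantees $\min\{q_{j_1},q_{j_2}\} < \ol{q} < \max\{q_{j_1},q_{j_2}\}$ \emph{unconditionally} (the paper fixes $\ol{q} = \tfrac12(q_1+q_2)$ precisely in the cases where $\ol{q}$ would otherwise be underdetermined). With the per-round reindexing $q_1^{(\tau-1)} \leq \cdots \leq q_n^{(\tau-1)}$, one full round of $\Phi$ strictly decreases the maximum whenever the vector is non-constant, so any limit point must be constant — no strict-improvement property of the CDF is needed. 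If you want to salvage your route, the cleanest fix is to replace your CDF-based contradiction with exactly this spread argument (your compactness/subsequence extraction and the monotone convergence of $\Pr[Y^{(k)} \leq t]$ can be kept as is, and in fact make the paper's own convergence step, which is stated somewhat loosely, fully rigorous).
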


\begin{proof}[Proof of \Cref{cla:bernoulli_part1}]
Indeed, when not all the failure probabilities $\vec{q} = \{q_j\}_{j \in [n]} \in [0, 1]^n$ of the given variables $\{Y_{j}\}_{j \in [n]}$ are the same, we can infer from \Cref{lem:bernoulli,cor:bernoulli} an iterative algorithm, that computes the common failure probability $q^* = \Pr[X_{j} = 0] \in [0, 1]$ of the identically distributed variables $\{X_{j}\}_{j \in [n]}$. This algorithm is shown in \Cref{alg:Bernoulli}.

\begin{figure}
\begin{mdframed}
{\bf Input:} failure probabilities $\vec{q} = \{q_j\}_{j \in [n]} \in [0, 1]^n$ of $\{Y_{j}\}_{j \in [n]}$.
\begin{enumerate}[leftmargin = 2em]
\item Initialize $\vec{q}^{(0)} \gets \vec{q}$.

\item {\bf for} $\tau \in \NP$ {\bf do}

\item \qquad Reindex $\vec{q}^{(\tau - 1)}$ so that $q_{1}^{(\tau - 1)} \leq q_{2}^{(\tau - 1)} \leq \cdots \leq q_{n}^{(\tau - 1)}$.

\item \qquad Update $\vec{q}^{(\tau)} \gets \Xi(\vec{q}^{(\tau - 1)})$.

\item {\bf end for}
\end{enumerate}
\end{mdframed}
\caption{An algorithm for Part~1 of \Cref{thm:bernoulli}.}
\label{alg:Bernoulli}
\end{figure}

In a specific round $\tau \in \NP$, because the interim probabilities $\vec{q}^{(\tau - 1)} = \{q_j^{(\tau - 1)}\}_{j \in [n]}$ are reindexed in increasing order, we can infer from \Cref{cor:bernoulli} (together with the definition of the operation composition $\Xi: [0, 1]^{n} \mapsto [0, 1]^{n}$) that
\begin{align*}
    & \min_{j \in [n]} q_{j}^{(\tau)} ~ \geq ~ \min_{j \in [n]} q_{j}^{(\tau - 1)}
    && \mbox{and}
    && \max_{j \in [n]} q_{j}^{(\tau)} ~ \leq ~ \max_{j \in [n]} q_{j}^{(\tau - 1)}.
\end{align*}
In particular, the second inequality above is strict, as long as not all the interim probabilities $\vec{q}^{(\tau - 1)} = \{q_j^{(\tau - 1)}\}_{j \in [n]}$ are identical.

We consider the distance $\ell(\vec{q}) \eqdef \max_{j_{1}, j_{2} \in [n]} (q_{j_{2}} - q_{j_{1}})\geq 0$; notice that this is a continuous function from $[0, 1]^{n}$ to $[0, 1]$. The above arguments ensure that in each round $\tau \in \NP$,
\begin{align}
    \ell(\vec{q}^{(\tau)})
    ~ = ~ \ell(\Xi(\vec{q}^{(\tau - 1)}))
    ~ \leq ~ \ell(\vec{q}^{(\tau - 1)}),
    \label{eq:bernoulli_part1}
\end{align}
where the inequality is strictly as long as not all $\{q_j^{(\tau - 1)}\}_{j \in [n]}$ are identical. Due to the squeeze theorem, the sequence $\{\vec{q}^{(\tau)}\}_{\tau=1}^{\infty}$ converges to some limit $\vec{q}^* = \lim_{\tau \to \infty} \vec{q}^{(\tau)} \in [0, 1]^{n}$. Further, since both $\ell$ and $\Xi$ are continuous functions, we deduce that 
\[
    \ell(\vec{q}^*)
    ~ = ~ \lim_{\tau \to \infty} \ell(\vec{q}^{(\tau)})
    ~ = ~ \lim_{\tau \to \infty} \ell(\Xi(\vec{q}^{(\tau - 1)}))
    ~ = ~ \ell(\Xi(\lim_{\tau \to \infty} \vec{q}^{(\tau - 1)}))
    ~ = ~ \ell(\Xi(\vec{q}^*)).
\]
As a result, it follows from \Cref{eq:bernoulli_part1} that all coordinates of $\vec{q}^*$ must be the same, namely $\vec{q}^* = \{q^*\}^{n}$ for some common failure probability $q^* \in [0, 1]$ of the i.i.d.\ $\{X_{j}\}_{j \in [n]}$.

We conclude with the existence of the desired i.i.d.\ Bernoulli random variables $\{X_{j}\}_{j \in [n]}$ and the sum $X = \sum_{j \in [n]} X_{j}$. In particular, applying \Cref{lem:bernoulli} over all rounds $\tau \in \NP$ gives
\begin{align*}
    & \Pr[X \leq t] ~ \geq ~ \Pr[Y \leq t],
    && \forall t \in [0: s].
\end{align*}

This completes the proof of \Cref{cla:bernoulli_part1}.
\end{proof}

\subsection{Worst-case instance}
\label{subsec:ar_ap:worst_case}

This section is to certify Part~1 of \Cref{thm:ar_ap:uniform}. Concretely, for any given $n \geq k \geq 1$, the worst-case instance $\vec{F} = \{F_{j}\}_{j \in [n]}$ of Program~\eqref{prog:ar} is achieved when the distributions $\{F_{j}\}_{j \in [n]}$ are identical and, for any posted price $p \geq 0$, make the most of the {\AnonymousPricing} revenue. We formalize this statement as the following claim.

\begin{claim}[Part~1 of \Cref{thm:ar_ap:uniform}]
\label{cla:ar_ap:uniform:1}
The revenue gap $\Re_{\ar/\ap}(k, n)$ is maximized when all the buyers have the same bid distribution $\{F^*\}^{n}$, and their common CDF $F^*$ is an implicit function given by $F^*(x) = 0$ for all $x \in [0, \frac{1}{k}]$ and $\ap(x, \{F^*\}^{n}) = 1$ for all $x \in (\frac{1}{k}, \infty)$.
\end{claim}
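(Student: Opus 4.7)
The plan is to prove the claim in two stages: (i)~reduce from any feasible asymmetric instance to an i.i.d.\ instance via the Bernoulli Sum Lemma (\cref{thm:bernoulli}), and (ii)~among i.i.d.\ instances, identify $\{F^*\}^{n}$ as the extremizer via pointwise minimality plus revenue monotonicity.

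Stage~(ii) is the cleaner piece. I claim that any feasible i.i.d.\ CDF $G$ with $\ap(p, \{G\}^{n}) \leq 1$ for all $p$ satisfies $G(x) \geq F^*(x)$ pointwise: for $x \leq 1/k$ this is trivial since $F^*(x) = 0$; for $x > 1/k$, the map $q \mapsto x \cdot \E[\min(k, \mathrm{Bin}(n, 1 - q))]$ is strictly decreasing in $q$, and since $\ap(x, \{F^*\}^{n}) = 1$ is tight while $\ap(x, \{G\}^{n}) \leq 1$, we must have $G(x) \geq F^*(x)$. Thus $\{F^*\}^{n}$ stochastically dominates $\{G\}^{n}$, and by the revenue monotonicity of $\ar$ (\cref{fact:revenue_monotonicity}) we conclude $\ar(\{F^*\}^{n}) \geq \ar(\{G\}^{n})$.

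For stage~(i), given feasible $\vec{F}$, I would construct i.i.d.\ $\{F^{\ddagger}\}^{n}$ pointwise via \cref{thm:bernoulli}: at each $x \geq 0$, let $F^{\ddagger}(x) \in [0, 1]$ be the common failure probability for which the i.i.d.\ Bernoullis $\{X_{j}\}$ match $\Pr[X \leq k] = \Pr[Y \leq k] = D_{k+1}(x)$, where $Y_{j}$ is Bernoulli with $\Pr[Y_{j} = 0] = F_{j}(x)$. Monotonicity of $D_{k+1}$ and of the matching in $F^{\ddagger}(x)$ ensures this yields a valid CDF. \cref{thm:bernoulli} then gives $\tilde D_{i}(x) \geq D_{i}(x)$ for $i \in [k]$, whence $\ap(x, \{F^{\ddagger}\}^{n}) \leq \ap(x, \vec{F}) \leq 1$; and the equality $\tilde D_{k+1} = D_{k+1}$ pointwise preserves the integral term $I(r) \eqdef k \int_{r}^{\infty}(1 - D_{k+1}(x))\,\d x$. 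So at every reserve $r \geq 0$,
\[
    \frac{\ar(r, \{F^{\ddagger}\}^{n})}{\ap(r, \{F^{\ddagger}\}^{n})}
    ~ = ~ 1 + \frac{I(r)}{\ap(r, \{F^{\ddagger}\}^{n})}
    ~ \geq ~ 1 + \frac{I(r)}{\ap(r, \vec{F})}
    ~ = ~ \frac{\ar(r, \vec{F})}{\ap(r, \vec{F})}.
\]

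The main obstacle, and the ``extra efforts'' hinted at in \cref{sec:intro_ar_ap}, is to upgrade this pointwise-in-$r$ inequality into a comparison of the suprema $\ar(\vec{F})/\ap(\vec{F})$ versus $\ar(\{F^{\ddagger}\}^{n})/\ap(\{F^{\ddagger}\}^{n})$, since the reserve maximizing $\ar$ and the price maximizing $\ap$ need not coincide. My plan is to normalize $\ap(\vec{F}) = 1$ and scale $\{F^{\ddagger}\}^{n}$ by $\alpha = 1/\ap(\{F^{\ddagger}\}^{n}) \geq 1$ to saturate the constraint; since both $\ar$ and $\ap$ scale linearly by $\alpha$, the ratio is preserved, and combining with the pointwise inequality evaluated at an optimal reserve of $\ar(\cdot, \vec{F})$ (where I expect the original constraint to also be tight) should give $\ar(\{\alpha F^{\ddagger}\}^{n}) \geq \ar(\vec{F})$. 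Stage~(ii) then completes the proof via $\ar(\{F^*\}^{n}) \geq \ar(\{\alpha F^{\ddagger}\}^{n})$.
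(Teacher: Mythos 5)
Your stages (i) and (ii) are both sound on their own, and they are exactly the two ingredients the paper uses: the pointwise Bernoulli-sum reduction gives an i.i.d.\ surrogate $\{F^{\ddagger}\}^{n}$ with $D^{\ddagger}_{k+1} = D_{k+1}$ and $D^{\ddagger}_{i} \geq D_{i}$ for $i \in [k]$, and the saturation argument shows $F^{*}$ stochastically dominates every feasible i.i.d.\ instance. The gap is in the bridge. Your displayed inequality compares $\ar(r,\cdot)/\ap(r,\cdot)$ at the \emph{same} reserve $r$ in numerator and denominator, whereas the quantity to control is $\sup_{r}\ar(r,\cdot)\,/\,\sup_{p}\ap(p,\cdot)$. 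The scaling factor $\alpha = 1/\ap(\{F^{\ddagger}\}^{n})$ is set by the price attaining $\sup_{p}\ap(p,\{F^{\ddagger}\}^{n})$, which need not be the reserve $r^{*}$ at which you invoke the ratio inequality; unwinding your chain, you would need $\ap(r^{*},\{F^{\ddagger}\}^{n}) \geq \ap(\{F^{\ddagger}\}^{n})\cdot\ap(r^{*},\vec{F})$, which does not follow from anything established (the Bernoulli reduction can depress $\ap(\cdot,\cdot)$ drastically at one price while barely touching it at another). Your fallback assumption that the {\AnonymousPricing} constraint is tight at the {\AnonymousReserve}-optimal reserve is also false in general: for the extremal instance itself every $r \in [0,\frac{1}{k}]$ is an optimal reserve, yet $\ap(r,\{F^{*}\}^{n}) = kr < 1$ for $r < \frac{1}{k}$.

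The repair — and what the paper actually does — is to never compare whole $\ar$ revenues of $\vec{F}$ and the surrogate $\{F^{\ddagger}\}^{n}$ (indeed $\ar(r,\{F^{\ddagger}\}^{n}) \leq \ar(r,\vec{F})$ pointwise, i.e., the wrong direction). Instead, split $\ar(r,\{F^{*}\}^{n}) = \ap(r,\{F^{*}\}^{n}) + k\int_{r}^{\infty}(1-D^{*}_{k+1}(x))\,\d x$ and bound the two terms against $\vec{F}$ by different routes: the posted-price term satisfies $\ap(r,\{F^{*}\}^{n}) = \min\{1,kr\} \geq \ap(r,\vec{F})$ directly, because $F^{*}$ saturates the very feasibility constraint that $\vec{F}$ must obey; the integral term satisfies $D^{*}_{k+1}(x) \leq D^{\ddagger}_{k+1}(x) = D_{k+1}(x)$, where the inequality is your stage-(ii) dominance applied to $G = F^{\ddagger}$ and the equality is the Bernoulli matching. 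Summing gives $\ar(r,\{F^{*}\}^{n}) \geq \ar(r,\vec{F})$ for every reserve $r$, with no ratios and no scaling needed. (One further bookkeeping point: the intermediate object $F^{\ddagger}$ is used only through these inequalities, so the paper lets it be an arbitrary $[0,1]$-valued mapping and verifies only that the final $F^{*}$ is a genuine CDF; your monotonicity argument for $F^{\ddagger}$ is fine but unnecessary.)
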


\begin{proof}[Proof of \Cref{cla:ar_ap:uniform:1}]
Recall Program~\eqref{prog:ar}, an implicit constraint is that the input must be {\em CDF's}, i.e.\ each $F_{j}: \RP \mapsto [0, 1]$ is a non-decreasing mapping with $F_{j}(0) = 0$ and $F_{j}(\infty) = 1$. We relax this constraint and consider all mappings from the domain $\RP$ to the codomain $[0, 1]$. In fact, the lemma holds even under this relaxation.

For simplicity, we still denote by $\vec{F} = \{F_{j}\}_{j \in [n]}$ the given mappings. Even though $\{F_{j}\}_{j \in [n]}$ may not be CDF's, we can still write down the corresponding ``$i$-th highest CDF'' $D_{i}(x)$ and the ``revenue'' formulas $\ap(p, \vec{F})$ and $\ar(r, \vec{F})$:
\begin{align}
    D_{i}(x) & ~ = ~ \sum_{t \in [0: i - 1]} \sum_{|W| = t} \Big(\prod_{j \notin W} F_{j}(x)\Big) \cdot \Big(\prod_{j \in W} (1 - F_{j}(x))\Big),
    && \forall x \geq 0, ~ i \in [n + 1];
    \notag \\
    \ap(p, \vec{F}) & ~ = ~ p \cdot \sum_{i \in [k]} (1 - D_{i}(p)),
    && \forall p \geq 0;
    \label{eq:ar_ap:formula_ap} \\
    \ar(r, \vec{F}) & ~ = ~ \ap(r, \vec{F}) ~ + ~ k \cdot \int_{r}^{\infty} (1 - D_{k + 1}(x)) \cdot \d x,
    && \forall r \geq 0.
    \notag
\end{align}
Notice that both ``revenue'' formulas satisfy the monotonicity given in \Cref{fact:revenue_monotonicity}. For a bunch of mappings $\{F_{j}\}_{j \in [n]}$ that are feasible to Program~\eqref{prog:ar}, we consider a two-step reduction:
\begin{enumerate}[label = (\roman*)]
    \item Pointwise convert $\vec{F} = \{F_{j}\}_{j \in [n]}$, according to \Cref{thm:bernoulli}, into a bunch of identical mappings $\bar{\vec{F}} = \{\bar{F}\}^{n}$.
    
    \item Pointwise scale $\bar{\vec{F}} = \{\bar{F}\}^{n}$ {\em } into another bunch of identical mappings $\vec{F}^* = \{F^*\}^{n}$, for which $\ap(p, \vec{F}^*) = 1$ for any $p \geq \frac{1}{k}$ and $\ap(p, \vec{F}^*) = k \cdot p$ for any $p < \frac{1}{k}$.
\end{enumerate}
Clearly, constraint~\eqref{cstr:ar} holds for $\vec{F}^*$. Below we show that for any reserve $r \in \RP$, the {\AnonymousReserve} revenue increases, namely $\ar(r, \vec{F}^*) \geq \ar(r, \vec{F})$.

Given any $x \geq 0$, let us consider the independent Bernoulli random variables $\{Y_{j}^{(x)}\}_{j \in [n]}$ with the failure probabilities $\Pr[Y_{j}^{(x)} = 0] \eqdef F_{j}(x)$. We denote their sum $Y^{(x)} \eqdef \sum_{j \in [n]} Y_{j}^{(x)}$. According to Part~1 of \Cref{thm:bernoulli}, there exists a particular bunch of i.i.d.\ variables $\{X_{j}^{(x)}\}_{j \in [n]}$, for which the sum $X^{(x)} \eqdef \sum_{j \in [n]} X_{j}^{(x)}$ satisfies
\begin{align*}
    \Pr[X^{(x)} \leq k]
    & ~ = ~ \Pr[Y^{(x)} \leq k]; \\
    \Pr[X^{(x)} \leq i]
    & ~ \geq ~ \Pr[Y^{(x)} \leq i],
    && \forall i \in [0: k - 1].
\end{align*}

For each $i \in [n + 1]$, one can easily see that $\Pr[Y^{(x)} \leq i - 1] = D_{i}(x)$, and we further denote $\ol{F}(x) \eqdef \Pr[X^{(x)}_{j} = 0]$ and $\ol{D}_{i}(x) \eqdef \Pr[X^{(x)} \leq i - 1]$. Take all $x \geq 0$ into account, it follows that
\begin{align*}
    \ol{D}_{k + 1}(x)
    & ~ = ~ D_{k + 1}(x),
    && \forall x \in \RP; \\
    \ol{D}_{i}(x)
    & ~ \geq ~ D_{i}(x),
    && \forall x \in \RP, i \in [k].
\end{align*}
In view of \Cref{eq:ar_ap:formula_ap}, for any price $p \in \RP$ we have
\begin{align}
    \ap(p, \bar{\vec{F}})
    ~ \leq ~ \ap(p, \vec{F})
    ~ \leq ~ \ap(p, \vec{F}^*),
    \label{eq:ar_ap:formula_ar:2}
\end{align}
where the last inequality holds by the construction of $\vec{F}^* = \{F^*\}^{n}$ given in Step~(ii).

Given \Cref{eq:ar_ap:formula_ar:2} and since either $\vec{F}^* = \{F^*\}^{n}$ or $\bar{\vec{F}} = \{\bar{F}\}^{n}$ involves {\em identical} mappings, we can infer from \Cref{fact:revenue_monotonicity} that the scaled common mapping $F^*$ pointwise dominates $\bar{F}$. In terms of the ``$(k + 1)$-th highest CDF'', we have $D_{k + 1}^*(x) \leq \bar{D}_{k + 1}(x) = D_{k + 1}(x)$ for all $x \geq 0$. We thus deduce that for any reserve $r \in \RP$,
\begin{align}
    k \cdot \int_{r}^{\infty} (1 - D_{k + 1}^*(x)) \cdot \d x
    ~ \geq ~ k \cdot \int_{r}^{\infty} (1 - D_{k + 1}(x)) \cdot \d x.
    \label{eq:ar_ap:formula_ar:3}
\end{align}

Combining \Cref{eq:ar_ap:formula_ar:2,eq:ar_ap:formula_ar:3} together, we conclude that $\vec{F}^*$ gives a better {\AnonymousReserve} revenue than $\vec{F}$: for any reserve $r \in \RP$,
\begin{align*}
    \ar(p, \vec{F}^*)
    & ~ = ~ \ap(p, \vec{F}^*) + k \cdot \int_{r}^{\infty} \big(1 - D_{k + 1}^*(x)\big) \cdot \d x \\
    & ~ \geq ~ \ap(p, \vec{F}) + k \cdot \int_{r}^{\infty} \big(1 - D_{k + 1}(x)\big) \cdot \d x \\
    & ~ = ~ \ar(p, \vec{F}).
\end{align*}

To complete the proof, it remains to show that the mapping $F^*$ is indeed a CDF, namely that $F^*$ is non-decreasing, $F^*(0) = 0$ and $F^*(\infty) = 1$. Under the construction given in Step~(ii), we know from \Cref{eq:ar_ap:formula_ap} that $\sum_{i \in [k]} D_{i}^*(x) = |k - \frac{1}{x}|_{+}$ is an increasing function. Particularly, $D_{i}^*(x) = 0$ for any $x \leq \frac{1}{k}$ and each $i \in [k]$, and $\sum_{i \in [k]} D_{i}(\infty) = k$.

Indeed, suppose we regard $q \eqdef F^*(x)$ as a single variable, then each summand
\[
    D_{i}^*(q) ~ := ~ \sum_{t = 0}^{i - 1} \binom{n}{t} \cdot q^{n - t} \cdot (1 - q)^t
\]
is an increasing function on $q \in [0, 1]$, with the minimum $D_{i}^*(q)|_{q = 0} = 0$ and the maximum $D_{i}^*(q)|_{q = 1} = 1$. To meet all the promised properties of $\sum_{i \in [k]} D_{i}^*$ (as a function of $x \geq 0$), the given $F^*$ must be a CDF, namely an increasing function supported on $x \in (\frac{1}{k}, \infty)$ so that $F^*(\frac{1}{k}) = 0$ and $F^*(\infty) = 1$.

This completes the proof of \Cref{cla:ar_ap:uniform:1}.
\end{proof}

\subsection{Supremum revenue gap}
\label{subsec:ar_ap:gap}

In \cref{subsec:ar_ap:worst_case}, we characterize the worst-case instance for any given population $n \geq k$. To avoid ambiguity, below we denote that instance by $\vec{F}_{(n)}^*$. In the next claim, we study the worst-case population and the resulting supremum revenue gap $\Re_{\ar/\ap}(k) = \sup_{n \geq k} \Re_{\ar/\ap}(k, n)$.

\begin{claim}[Part~2 of \Cref{thm:ar_ap:uniform}]
\label{cla:ar_ap:uniform:2}
Over all $n \geq k$, the supremum revenue gap $\Re_{\ar/\ap}(k) = \sup_{n \geq k} \Re_{\ar/\ap}(k, n)$ is achieved by
\begin{align*}
    \Re_{\ar / \ap}(k, \infty)
    ~ = ~ \lim_{n \to \infty} \ar(\vec{F}_{(n)}^*)
    ~ = ~ 1 + k \cdot \int_{0}^{\infty} \frac{T_k(x) \cdot (1 - T_{k + 1}(x))}{(k - \sum_{i \in [k]} T_{i}(x))^{2}} \cdot \d x,
\end{align*}
where the functions $T_{i}(x) \eqdef e^{-x} \cdot \sum_{t \in [0: i - 1]} \frac{1}{t!} \cdot x^t$ for all $i \in [k + 1]$.
\end{claim}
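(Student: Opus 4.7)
The plan is to leverage \cref{cla:ar_ap:uniform:1} to focus on the i.i.d.\ worst-case instance $\vec{F}_{(n)}^* = \{F_{(n)}^*\}^{n}$ and then study its $n \to \infty$ limit via a Poisson approximation. First I would note that the defining identity $\ap(x, \vec{F}_{(n)}^*) = 1$ for $x > 1/k$ (together with $\ap(x, \vec{F}_{(n)}^*) = k x \leq 1$ for $x \leq 1/k$, which holds since $F_{(n)}^* \equiv 0$ on $[0, 1/k]$) forces $\ap(\vec{F}_{(n)}^*) = 1$, so $\Re_{\ar/\ap}(k, n) = \ar(\vec{F}_{(n)}^*)$. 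Monotonicity $\Re_{\ar/\ap}(k, n) \leq \Re_{\ar/\ap}(k, n + 1)$ then follows by padding the $n$-buyer worst case with a degenerate zero-valued buyer (which changes neither {\ar} nor {\ap}), so $\Re_{\ar/\ap}(k) = \lim_{n \to \infty} \ar(\vec{F}_{(n)}^*)$.

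Next I would obtain a clean closed form for $\ar(\vec{F}_{(n)}^*)$. Since $F_{(n)}^* \equiv 0$ on $[0, 1/k]$, every bid almost surely exceeds every $x \in [0, 1/k]$, so $D_{k + 1}^{(n)}(x) = 0$ there. Splitting the integral in \cref{fact:ar_revenue} at $1/k$, a direct calculation gives, for every reserve $r \in [0, 1/k]$,
\[
    \ar(r, \vec{F}_{(n)}^*) ~ = ~ k r + k (1/k - r) + k \int_{1/k}^{\infty} (1 - D_{k + 1}^{(n)}(x)) \cdot \d x ~ = ~ 1 + k \int_{1/k}^{\infty} (1 - D_{k + 1}^{(n)}(x)) \cdot \d x,
\]
while for $r > 1/k$ the revenue is strictly smaller because the integrand is non-negative. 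Hence this common value equals $\ar(\vec{F}_{(n)}^*)$.

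The third step is a Poisson change of variables. Setting $q = 1 - F_{(n)}^*(x)$ so that each of the $n$ bids exceeds $x$ independently with probability $q$, the feasibility identity $\ap(x, \vec{F}_{(n)}^*) = 1$ rewrites as $x \cdot h_n(q) = 1$, where $h_n(q) \eqdef \E[\min\{k, \mathrm{Bin}(n, q)\}] = \sum_{i \in [k]} \Pr[\mathrm{Bin}(n, q) \geq i]$ is a smooth strictly increasing bijection from $[0, 1]$ onto $[0, k]$. Introducing the Poisson scaling $\lambda = n q$ and letting $n \to \infty$, the Poisson limit theorem, together with the elementary identity $T_i'(\lambda) = T_{i - 1}(\lambda) - T_i(\lambda)$, yields the pointwise convergences $h_n(\lambda/n) \to g(\lambda) \eqdef k - \sum_{i \in [k]} T_i(\lambda)$, $\Pr[\mathrm{Bin}(n, \lambda/n) \geq k + 1] \to 1 - T_{k + 1}(\lambda)$, and $h_n'(\lambda/n) / n \to T_k(\lambda)$. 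Since $g \colon (0, \infty) \to (0, k)$ is a smooth bijection, the substitution $x = 1/g(\lambda)$ is valid and transforms the integrand in the $\lambda$-variable to the pointwise limit $T_k(\lambda) (1 - T_{k + 1}(\lambda)) / g(\lambda)^2$, exactly matching the claimed formula.

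I expect the main technical obstacle to be the interchange of $\lim_{n \to \infty}$ with the integral, which I would settle via the dominated convergence theorem. Near $\lambda = 0$, combining the union bound $\Pr[\mathrm{Bin}(n, \lambda/n) \geq k + 1] \leq \lambda^{k + 1} / (k + 1)!$ with the uniform lower bound $h_n(\lambda/n) \geq 1 - (1 - \lambda/n)^n$, which is of order $\lambda$ for small $\lambda$, furnishes an $O(\lambda^{k - 1})$ integrable dominant; near $\lambda \to \infty$, the Chernoff-type exponential decay of $h_n'(\lambda/n)/n$ driven by the factor $(1 - \lambda/n)^{n - i}$, together with $h_n(\lambda/n) \to k$, provides an exponentially decaying dominant. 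Both bounds are integrable on $(0, \infty)$, so dominated convergence applies and completes the proof.
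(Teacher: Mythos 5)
Your proposal is correct and follows essentially the same route as the paper: reduce to the i.i.d.\ worst case via \cref{cla:ar_ap:uniform:1}, get monotonicity in $n$ by padding with a zero-valued buyer, show $\ar(\vec{F}_{(n)}^*) = 1 + k\int_{1/k}^{\infty}(1 - D_{k+1}(x))\,\d x$, identify the limiting order statistics as Poisson, and change variables using the feasibility identity $\ap(x,\cdot)=1$ — your Poisson parameter $\lambda = nq$ is exactly the paper's substitution $z = -\ln\hat{D}_1$, and your $g(\lambda)$ is its $k - \sum_i \hat{D}_i$. The only substantive difference is that you explicitly justify the interchange of limit and integral via dominated convergence, a step the paper passes over informally.
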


\begin{proof}[Proof of \Cref{cla:ar_ap:uniform:2}]
We first show that $\{\Re_{\ar/\ap}(k, n)\}_{n \geq k}$ is an {\em increasing} sequence, which by induction guarantees that $\Re_{\ar/\ap}(k) = \Re_{\ar / \ap}(k, \infty)$.

Indeed, the worst-case $n$-buyer instance $\vec{F}_{(n)}^* = \{F_{(n)}^*\}^{n}$ with the common CDF $F_{(n)}^*$ (specified by \Cref{cla:ar_ap:uniform:1}) can be regarded as such a $(n + 1)$-buyer instance: the index-$(n + 1)$ buyer has a deterministic bid of zero, while every other buyer $i \in [n]$ still has the bid CDF $F_{(n)}^*$. This $(n + 1)$-buyer instance is feasible to Program~\eqref{prog:ar}, and gives a less {\AnonymousReserve} revenue than the worst-case $(n + 1)$-buyer instance (due to \Cref{cla:ar_ap:uniform:1}). That is, the $n$-buyer and $(n + 1)$-buyer revenue gaps satisfy that $\Re_{\ar/\ap}(k, n) \leq \Re_{\ar/\ap}(k, n + 1)$, as desired.

It remains to prove the promised revenue formula for the limit instance $\vec{F}_{(\infty)}^*$. To this end, we first show the optimal {\AnonymousReserve} revenue from a specific $n$-buyer instance:
\begin{align}
    \ar(\vec{F}_{(n)}^*)
    ~ = ~ 1 ~ + ~ k \cdot \int_{1 / k}^{\infty} (1 - D_{k + 1}(x)) \cdot \d x.
    \label{eq:ar_ap:uniform:2:1}
\end{align}

Indeed, this optimal revenue can be achieved by any reserve $r \in [0, \frac{1}{k}]$:
\begin{align*}
    \ar(r, \vec{F}_{(n)}^*)
    & ~ = ~ r \cdot \sum_{i \in [k]} (1 - D_{i}(r)) ~ + ~ k \cdot \int_{r}^{\infty} (1 - D_{k + 1}(x)) \cdot \d x \\
    & ~ = ~ k \cdot r ~ + ~ k \cdot \int_{r}^{1 / k} \d x ~ + ~ k \cdot \int_{1 / k}^{\infty} (1 - D_{k + 1}(x)) \cdot \d x \\
    & ~ = ~ \rhs \mbox{ of } \eqref{eq:ar_ap:uniform:2:1},
\end{align*}
where the second step follows because every $i$-th highest CDF $D_{i}(x) = 0$ for all $x \in [0, \frac{1}{k}]$, due to \Cref{cla:ar_ap:uniform:1} that the common CDF $F_{(n)}^*$ is supported on $x \in (\frac{1}{k}, \infty)$.

Moreover, any reserve $r \in (\frac{1}{k}, \infty)$ cannot generate a higher {\AnonymousReserve} revenue:
\begin{align*}
    \ar(r, \vec{F}_{(n)}^*)
    & ~ = ~ \ap(r, \vec{F}_{(n)}^*) ~ + ~ k \cdot \int_{r}^{\infty} (1 - D_{k + 1}(x)) \cdot \d x \\
    & ~ = ~ 1 + k \cdot \int_{r}^{\infty} (1 - D_{k + 1}(x)) \cdot \d x \\
    & ~ \leq ~ \rhs \mbox{ of } \eqref{eq:ar_ap:uniform:2:1},
\end{align*}
where the second step holds since $\ap(r, \vec{F}_{(n)}^*) = 1$ (see \Cref{cla:ar_ap:uniform:1}); and the last step holds since the $(k + 1)$-th highest CDF $D_{k + 1}$ is pointwise bounded within $[0, 1]$.

Given \Cref{eq:ar_ap:uniform:2:1}, it remains to reason about the $(k + 1)$-th highest CDF $D_{k + 1}$. Below, we consider a specific bid $x \in (\frac{1}{k}, \infty)$ and, for each $n \geq k$ and all $i \in [k + 1]$, use the shorthand $F_{(n)}^* = F_{(n)}^*(x)$ and $D_{i} = D_{i}(x)$. In addition, we denote by $\hat{D}_{i} \eqdef \lim_{n \to \infty} D_{i}$ the $i$-th highest CDF resulted from the limit instance $\vec{F}_{(\infty)}^* = \lim_{n \to \infty} \vec{F}_{(n)}^*$.

It turns out that $F_{(\infty)}^* = 1$. Otherwise, any individual buyer is willing to pay with a {\em constant} probability $(1 - F_{(\infty)}^*) > 0$. This means the limit $i$-th highest CDF is $\hat{D}_{i} = \lim_{n \to \infty} D_{i} = 0$ for all $i \in [k + 1]$, since there are infinite buyers $n \to \infty$. This incurs a contradiction to constraint~\eqref{cstr:ear_ap}, namely that the {\AnonymousPricing} revenue exceeds one (note that $x > \frac{1}{k}$ is given):
\begin{align*}
    \ap(x, \vec{F}_{(\infty)}^*)
    ~ = ~ x \cdot \sum_{i \in [k]} (1 - \hat{D}_{i})
    ~ = ~ x \cdot k
    ~ > ~ 1.
\end{align*}

Given that $F_{(\infty)}^* = \lim_{n \to \infty} F_{(n)}^* = 1$, for a sufficiently large $n \geq k$ we have
\begin{align}
    n \cdot \Big(\frac{1}{F_{(n)}^*} - 1\Big)
    & ~ = ~ (1 + o_{n}(1)) \cdot n \cdot \ln\Big(1 + \frac{1}{F_{(n)}^*} - 1\Big)
    \notag \\
    & ~ = ~ -(1 + o_{n}(1)) \cdot n \cdot \ln F_{(n)}^*
    \notag \\
    & ~ = ~ -(1 + o_{n}(1)) \cdot \ln D_{1},
    \label{eq:ar_ap:uniform:2:2}
\end{align}
where the first step uses the Maclaurin series of $\ln(1 + w)$ in the neighborhood of $w = 0$; and the last step follows because the highest CDF $D_{1} = (F_{(n)}^*)^{n}$.

Based on \Cref{eq:ar_ap:uniform:2:2}, for any given $i \in [k + 1]$ and a sufficiently large $n \geq k$, we can reformulate the $i$-th highest CDF $D_{i}$ as follows:
\begin{align*}
    D_{i}
    & ~ = ~ \sum_{t \in [0: i - 1]} \binom{n}{t} \cdot (F_{(n)}^*)^{n - t} \cdot (1 - F_{(n)}^*)^{t} \\
    & ~ = ~ \sum_{t \in [0: i - 1]} \binom{n}{t} \cdot D_{1} \cdot \Big(\frac{1}{F_{(n)}^*} - 1\Big)^{t} \\
    & ~ = ~ (1 + o_{n}(1)) \cdot \sum_{t \in [0: i - 1]} \binom{n}{t} \cdot \frac{1}{n^{t}} \cdot D_{1} \cdot (-\ln D_{1})^{t} \\
    & ~ = ~ (1 + o_{n}(1)) \cdot \sum_{t \in [0: i - 1]} \frac{1}{t!} \cdot D_{1} \cdot (-\ln D_{1})^{t},
\end{align*}
where the first step applies \Cref{fac:order_CDF} (note that $\{F_{(n)}^*\}^{n}$ are i.i.d.); the second step follows since the highest CDF $D_{1} = (F_{(n)}^*)^{n}$; the third step applies \Cref{eq:ar_ap:uniform:2:2}; and the last step uses the fact that $\binom{n}{t} \cdot \frac{1}{n^{t}} = (1 + o_{n}(1)) \cdot \frac{1}{t!}$.

Following the above equation, the limit $i$-th highest CDF $\hat{D}_{i} = \lim_{n \to \infty} D_{i}$ satisfies that
\begin{align}
    \hat{D}_{i}
    & ~ = ~ \sum_{t \in [0: i - 1]} \frac{1}{t!} \cdot (\lim_{n \to \infty} D_{1}) \cdot (-\ln (\lim_{n \to \infty} D_{1}))^{t}
    \notag \\
    & ~ = ~ \sum_{t \in [0: i - 1]} \frac{1}{t!} \cdot \hat{D}_{1} \cdot (-\ln \hat{D}_{1})^{t}.
    \label{eq:ar_ap:uniform:2:3}
\end{align}
Note that this is an identity in the range $x \in (\frac{1}{k}, \infty)$. By taking the derivative, we also have
\begin{align}
    \frac{\d \hat{D}_{i}}{\d \hat{D}_{1}}
    & ~ = ~ \sum_{t \in [0: i - 1]} \frac{1}{t!} \cdot (-\ln \hat{D}_{1})^{t}
    ~ - ~ \sum_{t \in [1: i - 1]} \frac{1}{(t - 1)!} \cdot (-\ln \hat{D}_{1})^{t - 1}
    \notag \\
    & ~ = ~ \frac{1}{(i - 1)!} \cdot (-\ln \hat{D}_{1})^{i - 1}.
    \label{eq:ar_ap:uniform:2:4}
\end{align}

We actually have one more identity $1 = \ap(x, \vec{F}_{(\infty)}^*) = x \cdot (k - \sum_{i \in [k]} \hat{D}_{i})$ for $x \in (\frac{1}{k}, \infty)$, due to \Cref{cla:ar_ap:uniform:1} (in the case that $n \to \infty$). Rearrange this identity and take the derivative:
\begin{align}
    \frac{\d x}{\d \hat{D}_{1}}
    & ~ = ~ \frac{\d}{\d \hat{D}_{1}} \Big(\frac{1}{k - \sum_{i \in [k]} \hat{D}_{i}}\Big)
    \notag \\
    & ~ = ~ \frac{1}{(k - \sum_{i \in [k]} \hat{D}_{i})^{2}} \cdot \sum_{i \in [k]} \frac{\d \hat{D}_{i}}{\d \hat{D}_{1}}
    \notag \\
    & ~ = ~ \frac{1}{(k - \sum_{i \in [k]} \hat{D}_{i})^{2}} \cdot \sum_{i \in [0: k - 1]} \frac{1}{i!} \cdot (-\ln \hat{D}_{1})^{i}
    \notag \\
    & ~ = ~ \frac{1}{(k - \sum_{i \in [k]} \hat{D}_{i})^{2}} \cdot \frac{\hat{D}_{k}}{\hat{D}_{1}},
    \label{eq:ar_ap:uniform:2:5}
\end{align}
where the third step applies \Cref{eq:ar_ap:uniform:2:4}; and the last step applies \Cref{eq:ar_ap:uniform:2:3}.

Combining everything together, we deduce that
\begin{align}
    \label{eq:ar_ap:uniform:2:6}
    \ar(\vec{F}_{(\infty)}^*)
    & ~ = ~ 1 ~ + ~ k \cdot \int_{1 / k}^{\infty} (1 - \hat{D}_{k + 1}(x)) \cdot \d x
    \notag \\
    & ~ = ~ 1 ~ + ~ k \cdot \int_{1 / k}^{\infty} \frac{1 - \hat{D}_{k + 1}(x)}{(k - \sum_{i \in [k]} \hat{D}_{i}(x))^{2}} \cdot \frac{\hat{D}_{k}(x)}{\hat{D}_{1}(x)} \cdot \d \hat{D}_{1}(x),
\end{align}
where the first step applies \Cref{eq:ar_ap:uniform:2:1} for the limit instance $\vec{F}_{(\infty)}^* = \lim_{n \to \infty} \vec{F}_{(n)}^*$; and the last step follows from \Cref{eq:ar_ap:uniform:2:5}.

For the above revenue formula $\ar(\vec{F}_{(\infty)}^*)$, note that when the bid $x$ ranges from $\frac{1}{k}$ to $\infty$, the highest CDF $\hat{D}_{1}(x)$ ranges from $0$ to $1$. Moreover, \Cref{eq:ar_ap:uniform:2:3} characterizes, as a formula of $\hat{D}_{1}(x)$, the $i$-th highest CDF $\hat{D}_{i}(x)$. Thus, if we instead regard $\hat{D}_{1} \in (0, 1)$ as the variable,
\begin{align*}
    \ar(\vec{F}_{(\infty)}^*)
    & ~ = ~ 1 ~ + ~ k \cdot \int_{0}^{1} \frac{1 - \hat{D}_{k + 1}}{(k - \sum_{i \in [k]} \hat{D}_{i})^{2}} \cdot \frac{\hat{D}_{k}}{\hat{D}_{1}} \cdot \d \hat{D}_{1},
\end{align*}
where $\{\hat{D}_{i}\}_{i \in [2: k + 1]}$ once again are given by \Cref{eq:ar_ap:uniform:2:3}.

Under the substitution $z \eqdef -\ln \hat{D}_{1} \in (0, \infty)$, we can check via elementary calculation that
\begin{align*}
    \ar(\vec{F}_{(\infty)}^*)
    ~ = ~ 1 + k \cdot \int_{0}^{\infty} T_k(z) \cdot (1 - T_{k + 1}(z)) \cdot (k - \sum_{i \in [k]} T_{i}(z))^{-2} \cdot \d z,
\end{align*}
for the functions $\{T_{i}\}_{i \in [k + 1]}$ defined in the statement of the claim.

This completes the proof of \Cref{cla:ar_ap:uniform:2}.
\end{proof}

\begin{remark}
In the single-item case $k = 1$, we can deduce from \Cref{cla:ar_ap:uniform:2} that
\[
    \Re_{\ar/\ap}(1)
    ~ = ~ 1 + \int_{0}^{\infty} \frac{e^{x} - (1 + x)}{(e^{x} - 1)^2} \cdot \d x
    ~ = ~ \frac{\pi^2}{6}
    ~ \approx ~ 1.6449,
\]
which recovers the known result \cite[Theorem~2]{JLTX19}. In the multi-unit case $k \geq 2$, however, the supremum revenue gap $\Re_{\ar/\ap}(k)$ does not have an elementary expression. We will show in \Cref{sec:asymptotic_formula} that $\Re_{\ar/\ap}(k) = 1 + \Theta(1 / \sqrt{k})$. Associated with numeric calculation, it turns out that the worst case $\arg\max \{\Re_{\ar/\ap}(k): k \in \NP\}$ happens when $k = 1$.
\end{remark}

\subsection{Lower bound}
\label{subsec:ar_ap:lower}

We emphasize that all upper-bound results given in \Cref{subsec:ar_ap:worst_case,subsec:ar_ap:gap} just require the input distributions $\{F_{j}\}_{j \in [n]}$ to be independent. In this section, we construct matching lower-bound instances respectively in the i.i.d.\ general setting and the asymmetric regular setting. For convenience, we reuse the notations introduced before.

\vspace{.1in}
\noindent
{\bf I.I.D.\ general setting.}
Let us revisit the i.i.d.\ instance $\vec{F}_{(n)}^* = \{F_{(n)}^*\}^n$ specified by \Cref{cla:ar_ap:uniform:1}. As mentioned, the revenue gap $\{\Re_{\ar/\ap}(k, n)\}_{n \geq k}$ is an increasing sequence in the population $n \geq k$, and the limit/supremum revenue gap
\[
    \Re_{\ar/\ap}(k) ~ = ~ \lim_{n \to \infty} \Re_{\ar/\ap}(k, n)
\]
is {\em finite} for any $k \in \NP$ (see \Cref{sec:asymptotic_formula}). Accordingly, for a given $\eps > 0$, there is a threshold population $N_{1}(\eps) \geq k$ so that $\Re_{\ar/\ap}(k, n) \geq \Re_{\ar/\ap}(k) - \eps$, for any $n \geq N_{1}(\eps)$. Clearly, such instances $\vec{F}_{(n)}^* = \{F_{(n)}^*\}^n$ give the matching lower bound.

The common CDF $F_{(n)}^*$ specified in \Cref{cla:ar_ap:uniform:1} turns out to be the {\em equal-revenue} distribution (i.e.\ a ``boundary-case'' regular distribution) when $k = n = 1$, but is an irregular distribution otherwise. For example, when $k = 1$ and $n \geq 2$, we have
\[
    F_{(n)}^*(x)
    ~ = ~ \sqrt[n]{\Big| 1 - \frac{1}{x} \Big|_{+}},
\]
and the irregularity is shown in \cite[Lemma~12]{JLTX19}. In the other cases $n \geq k \geq 2$, the irregularity can be seen via similar but more technical arguments. For ease of presentation, here we omit the formal proof.

\vspace{.1in}
\noindent
{\bf Asymmetric regular setting.}
We next use the triangle distributions to construct an instance whose revenue gap matches the bound $\Re_{\ar/\ap}(k)$ given in \Cref{cla:ar_ap:uniform:2}. (Recall \Cref{subsec:prelim:distr} that a triangle distribution must be regular.) To this end, we would reuse the notations introduced in the proof of \Cref{cla:ar_ap:uniform:2}. The following claim is useful.

\begin{claim}[Threshold for Lower Bound]
\label{cla:ar_ap:uniform:lower}
Consider the limit instance $\vec{F}_{(\infty)}^*$ as well as its $i$-th highest CDF's $\{\hat{D}_{i}\}_{i \in [k + 1]}$ given in \Cref{cla:ar_ap:uniform:2}. For any $\eps > 0$, there exists a large enough $N_{2}(\eps) \in \NP$ so that
\begin{align*}
    k \cdot \int_{a}^{b} (1 - \hat{D}_{k + 1}(x)) \cdot \d x
    ~ \geq ~ \Re_{\ar/\ap}(k) - 1 - \eps,
\end{align*}
where $a \eqdef \frac{1}{k} + \frac{1}{N_{2}(\eps)}$ and $b \eqdef \frac{1}{k} + N_{2}(\eps)$; note that $a \leq b$.
\end{claim}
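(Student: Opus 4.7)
The plan is to exploit the explicit revenue formula from \cref{cla:ar_ap:uniform:2}, namely
\[
\Re_{\ar/\ap}(k) - 1 ~ = ~ k \cdot \int_{1/k}^{\infty} (1 - \hat{D}_{k+1}(x)) \cdot \d x,
\]
and show that the two pieces of this integral lying outside $[a, b]$ both vanish as $N_{2}(\eps) \to \infty$. Concretely, I would split
\[
k \cdot \int_{1/k}^{\infty} (1 - \hat{D}_{k+1}(x)) \cdot \d x ~ = ~ k \cdot \int_{1/k}^{a} + ~ k \cdot \int_{a}^{b} + ~ k \cdot \int_{b}^{\infty},
\]
and bound the first and third summands each by $\eps/2$ for a suitably large $N_{2}(\eps)$.

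For the near-endpoint piece $[1/k, a)$, since $0 \leq 1 - \hat{D}_{k+1}(x) \leq 1$ and $a - 1/k = 1/N_{2}(\eps)$, we trivially get
\[
k \cdot \int_{1/k}^{a} (1 - \hat{D}_{k+1}(x)) \cdot \d x ~ \leq ~ k \cdot (a - 1/k) ~ = ~ \frac{k}{N_{2}(\eps)},
\]
which is at most $\eps/2$ once $N_{2}(\eps) \geq 2k/\eps$. For the tail $(b, \infty)$, I would use the fact (recorded in Part~3 of \cref{thm:ar_ap:uniform} and proven in \cref{sec:asymptotic_formula}) that $\Re_{\ar/\ap}(k) \leq 1 + 2/\sqrt{k}$ is finite, so the improper integral $\int_{1/k}^{\infty} (1 - \hat{D}_{k+1}(x)) \cdot \d x$ converges. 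By the very definition of convergence of an improper Riemann integral, its tail satisfies
\[
\lim_{b \to \infty} k \cdot \int_{b}^{\infty} (1 - \hat{D}_{k+1}(x)) \cdot \d x ~ = ~ 0,
\]
so there exists $M(\eps)$ such that the tail is at most $\eps/2$ whenever $b \geq \frac{1}{k} + M(\eps)$.

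Choosing $N_{2}(\eps) \eqdef \max\{2k/\eps, M(\eps), 1\}$ ensures $a \leq b$ and makes both error terms at most $\eps/2$. Adding the two bounds and rearranging yields
\[
k \cdot \int_{a}^{b} (1 - \hat{D}_{k+1}(x)) \cdot \d x ~ \geq ~ \bigl( \Re_{\ar/\ap}(k) - 1 \bigr) - \eps,
\]
which is exactly the desired inequality. The only nontrivial input is the finiteness of $\Re_{\ar/\ap}(k)$; modulo that, the argument is a routine truncation of an absolutely convergent improper integral, so I do not anticipate any real obstacles in formalizing this step.
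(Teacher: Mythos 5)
Your proposal is correct and follows essentially the same route as the paper: both invoke the identity $\Re_{\ar/\ap}(k) - 1 = k\int_{1/k}^{\infty}(1-\hat{D}_{k+1}(x))\,\d x$ from \cref{cla:ar_ap:uniform:2}, the finiteness of this improper integral from Part~3 of \cref{thm:ar_ap:uniform}, and the boundedness/non-negativity of the integrand to truncate at both ends. The paper leaves the truncation as "easily seen," whereas you spell out the $\eps/2$-splitting explicitly; no substantive difference.
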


\begin{proof}
According to \Cref{eq:ar_ap:uniform:2:6}, in the limit case $N_{2}(\eps) \to \infty$ we have
\[
    k \cdot \int_{1 / k}^{\infty} (1 - \hat{D}_{k + 1}(x)) \cdot \d x
    ~ = ~ \ar(\vec{F}_{(\infty)}^*) - 1
    ~ = ~ \Re_{\ar/\ap}(k) - 1.
\]
We know from \Cref{cla:ar_ap:uniform:3} (see \Cref{sec:asymptotic_formula}) that the above improper integral $ = \frac{\Theta(1)}{\sqrt{k}}$ is finite. In addition, the integrand is a non-negative function. Given these, we can easily see \Cref{cla:ar_ap:uniform:lower}.
\end{proof}

Based on the above parameters $b \geq a$, we now construct a desired lower-bound instance.

\begin{example}[Lower-Bound Instance in Asymmetric Regular Setting]
\label{exp:ar_ap_non_iid}
Denote $\delta \eqdef \frac{b - a}{n} > 0$, where the integer $n \geq k$ will be determined later. As \Cref{fig:ar_ap_non_iid} shows, consider such an $(n + n k)$-buyer triangle instance $\vec{F} \eqdef \{\tri(v_{0, l}, q_{0, l})\}_{l \in [n]} \cup \{\tri(v_{j, l}, q_{j, l})\}_{j \in [n], l \in [k]}$:
\begin{itemize}
    \item In the $0$-th group, the involved monopoly prices $v_{0, l} \eqdef b$ for $l \in [n]$ are identical. In each group $j \in [n]$, the involved monopoly prices $v_{j, l} \eqdef b - j \cdot \delta$ for $l \in [k]$ are identical.
    
    \item In the $0$-th group, the involved monopoly quantiles $\{q_{0, l}\}_{l \in [n]}$ are identical, which together give a {\em unit} {\AnonymousPricing} revenue
    \begin{center}
        $\ap(p, \vec{F}) = 1$ under the posted price $p = v_{0, l} = b$.
    \end{center}
    The remaining monopoly quantiles $\{q_{j, l}\}_{j \in [n], l \in [k]}$ are defined recursively. In each group $j \in [n]$, the involved $\{q_{j, l}\}_{l \in [k]}$ are identical, which give a {\em unit} {\AnonymousPricing} revenue
    \begin{center}
        $\ap(p, \vec{F}) = 1$ under the posted price $p = v_{j, l} = b - j \cdot \delta$.
    \end{center}
\end{itemize}
\end{example}

\begin{figure}
    \centering
    \begin{tikzpicture}[thick, smooth, scale = 2]
        \draw[->] (0, 0) -- (7, 0);
        \draw[->] (0, 0) -- (0, 2.1);
        \node[above] at (0, 2.1) {$\ap(p, \vec{F})$};
        \node[right] at (7, 0) {$p$};
        \node[left] at (0, 2) {$1$};
        \draw[loosely dashed] (0, 2) -- (7, 2);
        \node[left] at (0, 1.65) {$1 - \sigma$};
        \draw[loosely dashed] (0, 1.65) -- (7, 1.65);
        \node[anchor = 45] at (0, 0) {$0$};

        \draw[blue, domain = 0: 1.5] plot (\x, {2 * \x - \x^2 / 2.25});
        \node[below] at (1, 0) {$1 / k$};
        \draw[very thick] (1, 0pt) -- (1, 1.5pt);

        \node[below] at (1.5, -1.2pt) {$v_{n, l}$};
        \draw[very thick] (1.5, 0pt) -- (1.5, 1.5pt);
        \draw[blue, dotted] (1.5, 1.85) -- (1.5, 2);
        \draw[blue, fill = red] (1.5, 2) circle (1pt);

        \draw[blue] (1.5, 1.85) -- (2, 2);
        \draw[blue, fill = white] (1.5, 1.85) circle (1pt);
        \node[below] at (2, -1.2pt) {$v_{n - 1, l}$};
        \draw[very thick] (2, 0pt) -- (2, 1.5pt);
        \draw[blue, dotted] (2, 1.7) -- (2, 2);
        \draw[blue, fill = red] (2, 2) circle (1pt);

        \draw[blue] (2, 1.7) -- (2.5, 2);
        \draw[blue, fill = white] (2, 1.7) circle (1pt);
        \node[below] at (2.5, -1.2pt) {$v_{n - 2, l}$};
        \draw[very thick] (2.5, 0pt) -- (2.5, 1.5pt);
        \draw[blue, dotted] (2.5, 1.75) -- (2.5, 2);
        \draw[blue, fill = red] (2.5, 2) circle (1pt);

        \draw[blue] (2.5, 1.75) -- (2.75, 1.875);
        \draw[blue, dotted] (2.75, 1.875) -- (4.25, 1.875);
        \draw[blue] (4.25, 1.875) -- (4.5, 2);
        \draw[blue, fill = white] (2.5, 1.75) circle (1pt);
        \node[below] at (3.5, -1.2pt) {$\cdots\cdots$};

        \draw[blue] (4.5, 1.75) -- (5, 2);
        \node[below] at (4.5, -1.2pt) {$v_{3, l}$};
        \draw[very thick] (4.5, 0pt) -- (4.5, 1.5pt);
        \draw[blue, dotted] (4.5, 2) -- (4.5, 1.75);
        \draw[blue, fill = red] (4.5, 2) circle (1pt);

        \draw[blue] (5, 1.8) -- (5.5, 2);
        \draw[blue, fill = white] (4.5, 1.75) circle (1pt);
        \node[below] at (5, -1.2pt) {$v_{2, l}$};
        \draw[very thick] (5, 0pt) -- (5, 1.5pt);
        \draw[blue, dotted] (5, 2) -- (5, 1.8);
        \draw[blue, fill = red] (5, 2) circle (1pt);

        \draw[blue] (5.5, 1.8) -- (6, 2);
        \draw[blue, fill = white] (5, 1.8) circle (1pt);
        \node[below] at (5.5, -1.2pt) {$v_{1, l}$};
        \draw[very thick] (5.5, 0pt) -- (5.5, 1.5pt);
        \draw[blue, dotted] (5.5, 2) -- (5.5, 1.8);
        \draw[blue, fill = red] (5.5, 2) circle (1pt);

        \draw[blue, fill = white] (5.5, 1.8) circle (1pt);
        \node[below] at (6.1, 0) {$b = v_{0, l}$};
        \draw[very thick] (6, 0pt) -- (6, 1.5pt);
        \draw[blue, dotted] (6, 2) -- (6, 0);
        \draw[blue, fill = red] (6, 2) circle (1pt);

        \draw[blue] (6, 0.4pt) -- (6.9, 0.4pt);
    \end{tikzpicture}
    \caption{Demonstration for the triangle instance $\vec{F}$ given in \Cref{exp:ar_ap_non_iid}, where $\sigma > 0$ can be arbitrarily small when $\delta > 0$ is small enough (i.e.\ when $n \in \NP$ is large enough).}
    \label{fig:ar_ap_non_iid}
\end{figure}
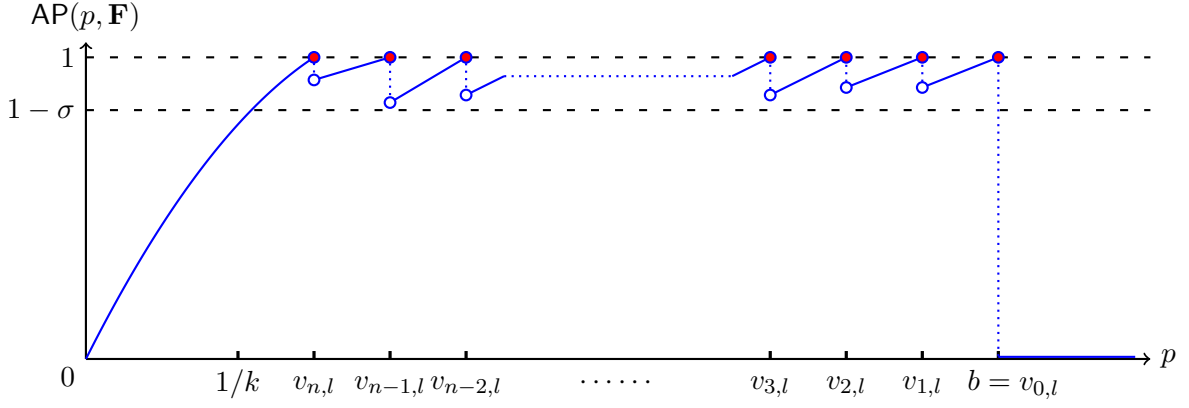

\begin{claim}[Part~4 of \Cref{thm:ar_ap:uniform} in Asymmetric Regular Setting]
\label{cla:ar_ap:uniform:4}
The $(n + n k)$-buyer triangle instance $\vec{F}$ in \Cref{exp:ar_ap_non_iid} is well defined, and satisfies the following:
\begin{enumerate}[font = {\em \bfseries}]
    \item $\ap(p, \vec{F}) \leq 1$ for any posted price $p \in \RP$.
    
    \item There exists a threshold $N_{3}(\eps) \in \NP$ such that for any $n \geq N_{3}(\eps)$,
    \[
        \ar(\vec{F}) ~ \geq ~ \Re_{\ar/\ap}(k) - 2 \cdot \eps.
    \]
\end{enumerate}
\end{claim}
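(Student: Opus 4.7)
The argument proceeds in three stages: verifying the recursion produces valid quantiles; establishing Part 1; and deriving the lower bound of Part 2 by comparing to the limit instance $\vec{F}_{(\infty)}^*$ from \cref{cla:ar_ap:uniform:2}.

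\emph{Well-definedness.} I induct on $j \in [0{:}n]$, carrying as an invariant that the partial instance $\vec{F}_j$ on groups $0, 1, \ldots, j$ satisfies $\ap(p, \vec{F}_j) \leq 1$ for every $p \geq 0$; this invariant is supplied by Stage~2. Once $\vec{F}_{j-1}$ is fixed, the map $q \mapsto \ap(v_{j,l}, \vec{F}_{j-1} \cup \{\tri(v_{j,l}, q)\}^{k})$ is continuous and strictly increasing in $q \in [0, 1]$ (raising $q$ strictly raises the willingness of every group-$j$ buyer to pay $v_{j,l}$). At $q = 0$ the new buyers are silent and the value equals $\ap(v_{j,l}, \vec{F}_{j-1}) \leq 1$; at $q = 1$ each group-$j$ buyer deterministically bids $v_{j,l}$, forcing the value to be at least $k v_{j,l} \geq k a > 1$. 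The intermediate value theorem then yields a unique $q_{j,l} \in (0, 1)$ with $\ap(v_{j,l}, \vec{F}_j) = 1$.

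\emph{Proving Part 1.} I partition $[0, \infty)$ into $(b, \infty)$, the middle intervals $(v_{j+1,l}, v_{j,l}]$ for $j \in [0{:}n-1]$, and the leftmost piece $(0, v_{n,l}]$. On $(b, \infty)$ every bid is at most $b < p$, so $\ap(p, \vec{F}) = 0$. On $(0, 1/k]$ the trivial bound $\ap(p, \vec{F}) \leq kp \leq 1$ suffices. Every other $p$ lies in some $(v_{j+1,l}, v_{j,l}]$ on which only the groups $0, 1, \ldots, j$ are active, since later groups have value strictly less than $p$ and contribute nothing. On such an interval I would show $\ap(\cdot, \vec{F})$ is non-decreasing in $p$ by differentiating the explicit rational expression obtained from substituting the triangle CDFs into $p \cdot \sum_{i=1}^{k} (1 - D_i(p))$; this generalizes the elementary fact that $p \cdot (1 - F(p))$ is increasing on the support of a single triangle distribution. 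Since $\ap(v_{j,l}, \vec{F}) = 1$ at the right endpoint by construction, monotonicity forces $\ap(p, \vec{F}) \leq 1$ throughout, which also establishes the invariant needed by Stage~1.

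\emph{Proving Part 2.} Set the reserve $r = a = v_{n,l}$. Since every buyer in $\vec{F}$ bids at most $b$, we have $1 - D_{k+1}(x) = 0$ for $x > b$; combined with $\ap(a, \vec{F}) = 1$ from Part~1, \cref{fact:ar_revenue} gives
\[
\ar(a, \vec{F}) ~=~ 1 ~+~ k \int_a^{b} \bigl(1 - D_{k+1}(x)\bigr) \, \d x.
\]
By \cref{cla:ar_ap:uniform:lower}, the analogous limit-instance integral satisfies $k \int_a^{b} (1 - \hat{D}_{k+1}(x)) \, \d x \geq \Re_{\ar/\ap}(k) - 1 - \eps$, so the claim reduces to showing $\int_a^b \bigl(D_{k+1}(x) - \hat{D}_{k+1}(x)\bigr) \, \d x \leq \eps/k$ for all $n \geq N_3(\eps)$. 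The principal obstacle is establishing this $L^1$-closeness of the finite- and limit-instance $(k+1)$-st order-statistic CDFs. My plan is to compare them pointwise on the grid $\{v_{j,l}\}_{j \in [0{:}n]}$: at each grid value $v_{j,l}$ the Bernoulli sum $N(v_{j,l}) = \#\{j' : b_{j'} \geq v_{j,l}\}$ has mean pinned down through $\ap(v_{j,l}, \vec{F}) = 1$ to approach the limit-instance mean $-\ln \hat{D}_1(v_{j,l})$ as $n \to \infty$, while the individual success probabilities in $\vec{F}$ vanish (the dominant group~0 contributes $n$ Bernoullis whose success probability scales as $\Theta(1/n)$ to keep $n \alpha_{0,j}$ bounded). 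The Poisson limit theorem then yields $D_{k+1}(v_{j,l}) \to \hat{D}_{k+1}(v_{j,l})$ pointwise on the grid, and monotonicity of $D_{k+1}$ in $x$ together with the vanishing mesh size $\delta = (b-a)/n$ upgrades this to uniform (hence $L^1$) convergence on $[a,b]$, completing the proof once $N_3(\eps)$ is chosen large enough.
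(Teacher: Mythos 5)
Your overall architecture matches the paper's: a joint induction in which the intermediate value theorem produces each group's quantile, interval-wise monotonicity of $\ap(p,\vec{F})$ delivers Part~1, and a Poissonization/limit-comparison with $\vec{F}_{(\infty)}^*$ delivers Part~2. Your Part~2 plan in particular is essentially the paper's argument: vanishing individual quantiles plus the constraint $\ap(v_{j,l},\vec{F})=1$ pin the Bernoulli-sum mean, the order statistics converge to the same Poissonized formulas as $\hat{D}_{k+1}$, and boundedness of the integrands on the definite interval $[a,b]$ upgrades pointwise to $L^1$ convergence.

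There are two issues in your Part~1 / well-definedness stage. First, the load-bearing step --- that $p\mapsto \ap(p,\vec{F})$ is non-decreasing on each interval $(v_{j+1,l},v_{j,l}]$ where the active set is fixed --- is only announced (``I would show \dots by differentiating the explicit rational expression''). For general $k$ and $n+nk$ heterogeneous triangle buyers, $p\cdot\sum_{i\in[k]}(1-D_i(p))$ is a ratio of high-degree polynomials and the direct derivative computation is not routine; as written this is the missing piece of the proof. The paper avoids it entirely via Myerson's revenue-equivalence: $\ap(p,\vec{F})$ equals the expected virtual surplus of the allocation, and a triangle distribution has constant \emph{negative} virtual value $-v_jq_j/(1-q_j)$ everywhere strictly below its monopoly price, so increasing $p$ within such an interval only removes allocations of negative virtual value and hence (strictly) increases revenue. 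You should substitute that argument for the differentiation plan. Second, your case analysis leaves the interval $p\in(1/k,\,a]$ uncovered: the trivial bound $\ap(p,\vec{F})\le kp$ only works for $p\le 1/k$, while the smallest ``middle interval'' is $(v_{n,l},v_{n-1,l}]=(a,a+\delta]$, so prices in $(1/k,a]$ fall into neither case. The fix is the same monotonicity argument applied on $(0,a]$, where all buyers are active and all virtual values are negative, together with $\ap(a,\vec{F})=1$ at the right endpoint. With these two repairs your proof coincides with the paper's.
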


\begin{proof}[Proof of \Cref{cla:ar_ap:uniform:4}]
We first show that $\vec{F}$ is well defined or more precisely, the monopoly quantiles are well defined. Because the monopoly prices/quantiles in an individual group are identical, without ambiguity we denote $v_{0} = v_{0, l} = b$ and $q_{0} = q_{0, l}$ for $l \in [n]$ and $v_{j} = v_{j, l} = b - j \cdot \delta$ and $q_{j} = q_{j, l}$ for $l \in [k]$. Recall \Cref{subsec:prelim:distr} that a triangle distribution $\tri(v_{j}, q_{j})$ has the CDF
\begin{align*}
    F_{j}(x) ~ = ~
    \begin{cases}
        \frac{(1 - q_{j}) \cdot x}{(1 - q_{j}) \cdot x + v_{j} q_{j}},
        & \mbox{when} ~ x \in [0, v_{j}] \\
        1, & \mbox{when} ~ x \in (v_{j}, \infty)
    \end{cases}.
\end{align*}

Under the posted price $p = v_{0} = b$, we have $F_{0}(p) = 1 - q_{0}$ for the $0$-th group and $F_{j}(p) = 1$ for any other group $j \in [n]$. Thus, only the group-$0$ buyers contribute to the {\AnonymousPricing} revenue $\ap(v_{0}, \vec{F})$. This revenue formula $\ap(v_{0}, \vec{F})$ can be regarded as a continuous function in $q_{0} \in [0, 1]$. Further, we observe that
\begin{enumerate}[label = (\roman*)]
    \item If a group-$0$ buyer is willing to pay $p = v_{0} = b$ with probability $q_{0} = 0$, then $\ap(v_{0}, \vec{F}) = 0$.
    
    \item If a group-$0$ buyer is willing to pay $p = v_{0} = b$ with probability $q_{0} = 1$, then $\ap(v_{0}, \vec{F}) = v_{0} \cdot \min\{n, k\} = b \cdot k > \frac{1}{k} \cdot k = 1$ (given that $n \geq k$ and $b > \frac{1}{k}$).
\end{enumerate}
Given these and due to the intermediate value theorem, $\ap(v_{0}, \vec{F}) = 1$ for some $q_{0} \in [0, 1]$. We conclude that the group-$0$ monopoly quantiles are well defined.

For some $m \in [0: n - 1]$, suppose that all the monopoly quantiles $q_{j} \in [0, 1]$ in the groups $j \in [0: m]$ are well defined, below we justify the existence of the group-$(m + 1)$ monopoly quantiles $q_{m + 1} \in [0, 1]$.

By construction, under any posted price $p \in (v_{m + 1}, v_{m}]$, the revenue $\ap(p, \vec{F})$ is contributed only by the buyers in the groups $j \in [0: m]$. In particular, when $p = v_{m}$, by construction we have $\ap(v_{m}, \vec{F}) = 1$.  Within the support $x \in [0, v_{j}]$, a triangle distribution $\tri(v_{j}, q_{j})$ has the virtual value function
\begin{align*}
    \varphi_{j}(x)
    ~ = ~ x - \frac{1 - F_{j}(x)}{f_{j}(x)}
    ~ = ~
    \begin{cases}
        -\frac{v_{j} q_{j}}{1 - q_{j}},
        & \mbox{when} ~ x \in [0, v_{j}) \\
        v_{j}, & \mbox{when} ~ x = v_{j}
    \end{cases}.
\end{align*}
Hence, any allocation under any posted price $p \in (v_{m + 1}, v_{m})$ gives a {\em negative} virtual welfare. Due to the revenue-equivalence theorem \cite{M81}, the revenue formula $\ap(p, \vec{F})$ is a strictly increasing function in $p \in (v_{m + 1}, v_{m}]$.

When $p = v_{m + 1}$, we shall incorporate the contribution from the group-$(m + 1)$ buyers into the revenue $\ap(v_{m + 1}, \vec{F})$ as well. Once again, this revenue formula $\ap(v_{m + 1}, \vec{F})$ can be regarded as a continuous function in $q_{m + 1} \in [0, 1]$. And we have
\begin{enumerate}[label = (\roman*)]
    \item If a group-$(m + 1)$ buyer is willing to pay $p = v_{m + 1}$ with probability $q_{m + 1} = 0$, then we have $\ap(v_{m + 1}, \vec{F}) = \lim_{p \to v_{m + 1}^{+}} \ap(p, \vec{F}) < 1$, where the inequality holds because $\ap(p, \vec{F})$ is a strictly increasing function when $p \in (v_{m + 1}, v_{m}]$.
    
    \item If a group-$(m + 1)$ buyer is willing to pay $p = v_{m + 1}$ with probability $q_{0} = 1$, since there are $k$ such buyers, we have $\ap(v_{m + 1}, \vec{F}) = v_{m + 1} \cdot k > \frac{1}{k} \cdot k = 1$, where the inequality holds because $v_{m + 1} \geq a > \frac{1}{k}$ (by construction).
\end{enumerate}
Once again, we deduce from the intermediate value theorem that $\ap(v_{m + 1}, \vec{F}) = 1$ for some $q_{m + 1} \in [0, 1]$, namely the group-$(m + 1)$ monopoly quantiles are well defined. By induction, the triangle instance $\vec{F}$ is well defined.

From the above arguments, we also conclude Part~1 that $\ap(p, \vec{F}) \leq 1$ for all $p \in \RP$.

We next justify Part~2 that the optimal {\AnonymousReserve} revenue $\ar(\vec{F}) \geq \Re_{\ar/\ap}(k) - 2 \cdot \eps$ when the $n \in \NP$ is large enough. To this end, let us consider the specific reserve $r = v_{n} = a$. Indeed, when $n$ is large enough, the $(k + 1)$-th highest $D_{k + 1}$ resulted from $\vec{F}$ satisfies that
\begin{align}
    \label{eq:ar_ap:uniform:4:1}
    k \cdot \int_{a}^{b} (1 - D_{k + 1}(x)) \cdot \d x
    ~ \geq ~ k \cdot \int_{a}^{b} (1 - \hat{D}_{k + 1}(x)) \cdot \d x - \eps,
\end{align}
Assume \Cref{eq:ar_ap:uniform:4:1} to be true, then Part~2 follows immediately:
\begin{align*}
    \ar(a, \vec{F})
    & ~ = ~ \ap(a, \vec{F}) ~ + ~ k \cdot \int_{a}^{b} (1 - D_{k + 1}(x)) \cdot \d x \\
    & ~ \geq ~ 1 ~ + ~ k \cdot \int_{a}^{b} (1 - D_{k + 1}(x)) \cdot \d x \\
    & ~ \geq ~ 1 ~ + ~ k \cdot \int_{a}^{b} (1 - \hat{D}_{k + 1}(x)) \cdot \d x - \eps \\
    & ~ = ~ \Re_{\ar/\ap}(k) - 2 \cdot \eps,
\end{align*}
where the second step follows by construction, i.e.\ $\ap(a, \vec{F}) = \ap(v_{n}, \vec{F}) = 1$; the third step applies \Cref{eq:ar_ap:uniform:4:1}; and the last step applies \Cref{cla:ar_ap:uniform:lower}.

We are left to prove \Cref{eq:ar_ap:uniform:4:1}. By construction, one can easily see that in the limit case $n \to \infty$, every individual monopoly quantile $q_{j}$ involved in $\vec{F}$ approaches to $0^+$. Namely, the CDF $\lim_{n \to \infty} F_{j}(x) \to 1^{-}$ for any $x \in \RP$. Given this, reusing the arguments for \Cref{eq:ar_ap:uniform:2:3}, it can be seen that for each $i \in [k + 1]$, the following holds for the limit $i$-th highest CDF:
\begin{align*}
    \lim_{n \to \infty} D_{i}(x)
    & ~ = ~ \sum_{t \in [0: i - 1]} \frac{1}{t!} \cdot (\lim_{n \to \infty} D_{1}(x)) \cdot (-\ln (\lim_{n \to \infty} D_{1}(x)))^{t},
\end{align*}
for all $x \in (\frac{1}{k}, \infty)$. Accessing the proof of \Cref{cla:ar_ap:uniform:2}, for the limit instance $\vec{F}_{(\infty)}^*$ therein, we have the counterpart identities $\hat{D}_{i}(x) = \sum_{t \in [0: i - 1]} \frac{1}{t!} \cdot \hat{D}_{1}(x) \cdot (-\ln \hat{D}_{1}(x))^{t}$ for all $x \in (\frac{1}{k}, \infty)$.

By construction (as \Cref{fig:ar_ap_non_iid} suggests), in the limit case $n \to \infty$, we have another identity\footnote{More precisely, by construction we have $\ap(v_{j}, \vec{F}) = 1$ for every $j \in [0: n]$. Concerning the revenue formula $\ap(p, \vec{F}) = p \cdot \sum_{i \in [k]} (1 - D_{i}(p))$, we notice that the $i$-th highest CDF's $\{D_{i}\}_{i \in [k]}$ are increasing functions. Given these, for any $j \in [0: n - 1]$ and any posted price $p \in (v_{j + 1}, v_{j}]$ we have $\ap(p, \vec{F}) \geq \frac{p}{v_{j}} \cdot \ap(v_{j}, \vec{F}) = \frac{p}{v_{j}} \geq \frac{v_{j + 1}}{v_{j}}$. Under our construction that $v_{j} = b - j \cdot \delta$ for all $j \in [0: n]$, where $\delta = \frac{b - a}{n}$, the minimum $\frac{v_{j + 1}}{v_{j}}$ is equal to $\frac{v_{n}}{v_{n - 1}} = \frac{a}{a + (b - a) / n} \geq \frac{1 / k}{1 / k + (b - a) / n} \geq 1 - (b - a) \cdot \frac{k}{n}$. Thus, for any $p \in [a, b]$ we have $\lim_{n \to \infty} \ap(p, \vec{F}) \geq \lim_{n \to \infty} (1 - (b - a) \cdot \frac{k}{n}) = 1$. On the other hand, we have shown that $\ap(p, \vec{F}) \leq 1$ for all $p \in [a, b]$ (see Part~1 of the claim).}
\[
    \lim_{n \to \infty} \ap(x, \vec{F}) ~ = ~ 1,
\]
for all $x \in [a, b]$. Accessing the proof of \Cref{cla:ar_ap:uniform:2}, for the limit instance $\vec{F}_{(\infty)}^*$ therein, we have the counterpart identity $\ap(x, \vec{F}_{(\infty)}^*) = 1$ for all $x \in (\frac{1}{k}, \infty)$. Recall that $[a, b] \subseteq (\frac{1}{k}, \infty)$.

Based on the above identities, we can reapply the arguments for \Cref{cla:ar_ap:uniform:2} and deduce that
\begin{align*}
    \lim_{n \to \infty} D_{k + 1}(x)
    ~ = ~ \hat{D}_{k + 1}(x),
\end{align*}
for all $x \in [a, b]$. Given this, and since both integrals $\int_{a}^{b} (1 - D_{k + 1}(x)) \cdot \d x$ and $\int_{a}^{b} (1 - \hat{D}_{k + 1}(x)) \cdot \d x$ in \Cref{eq:ar_ap:uniform:4:1} are {\em definite} integrals, and both integrands $y = 1 - D_{k + 1}(x)$ and $\hat{y} = 1 - \hat{D}_{k + 1}(x)$ are bounded between $[0, 1]$, \Cref{eq:ar_ap:uniform:4:1} must hold for any sufficiently large $n \in \NP$.

This completes the proof of \Cref{cla:ar_ap:uniform:4}.
\end{proof}

\subsection{Matroid feasibility constraints}
\label{subsec:ar_ap:matroid}

In \Cref{subsec:ar_ap:worst_case,subsec:ar_ap:gap,subsec:ar_ap:lower} we assume that any subset of up to $k \in \NP$ willing-to-pay buyers can win simultaneously, i.e.\ the winners meet a rank-$k$ {\em uniform matroid} constraint. To model some particular markets, many past works on Bayesian mechanism design also consider the general matroid constraints.

In this new scenario, the revenue gap between {\AnonymousReserve} and {\AnonymousPricing} is no longer a constant, or precisely, $\Re_{\ar/\ap} = \Omega(\log k)$. In the rest of this section, we assume basic knowledge about matroid, for which the reader can turn to \cite{O06}.

Regarding a general rank-$k$ matroid constraint, {\AnonymousPricing} runs almost in the same way: a certain buyer $i \in [n]$, upon arriving, gets a copy of the item iff (i)~he together with the past winning buyers form an independent set of the matroid; and (ii)~he is willing to pay the posted price $p \geq 0$. No ambiguity would arise throughout the conduct of {\AnonymousPricing}, due to the greedy structure of matroids.

To implement {\AnonymousReserve}, the seller should use {\VCGAuction} instead of {\sf $(k + 1)$-th Price Auction}: (i)~the seller runs {\VCGAuction} only on the buyers whose bids $\{b_{j}\}_{j \in [n]}$ are at least the reserve $r \geq 0$, by taking the matroid constraint into account; and (ii)~each winner pays the threshold bid for him to keep winning.

Our lower-bound example with the $\Omega(\log k)$ revenue gap is constructed below.

\begin{theorem}[{\ar} vs.\ {\ap} under a Matroid Constraint]
\label{thm:ar_ap:matroid}
When the seller faces $n \geq 1$ independent unit-demand buyers and the winners satisfy a rank-$k$ matroid constraint, the revenue gap $\Re_{\ar/\ap}$ between {\AnonymousReserve} and {\AnonymousPricing} is lower bounded by $\Omega(\log k)$.
\end{theorem}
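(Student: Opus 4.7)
My plan is to exhibit an explicit rank-$k$ matroid instance witnessing $\ar \geq \Omega(\log k)\cdot \ap$. The driving intuition is that on a \emph{partition} matroid, VCG with anonymous reserve decomposes into independent per-part second-price-with-reserve auctions, so the payment inside each part adapts to that part's local scale through its own second-highest bid; by contrast, a single anonymous posted price cannot simultaneously adapt to parts living at very different scales. This structural asymmetry is the source of the logarithmic gap.

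Concretely I would take a partition matroid $M$ of rank $k$ composed of $T \eqdef \lfloor \log_{2} k \rfloor$ ``active'' parts of capacity $1$, together with $k - T$ ``dummy'' parts of capacity $1$ (each holding a trivial value-$0$ bidder) whose only purpose is to bring the matroid rank up to $k$. Each active part $s\in[T]$ contains a small number (two suffices) of i.i.d.\ bidders drawn from a distribution $F_s$ supported around the scale $2^s$. The AR lower bound follows from the partition-matroid decomposition $\ar(r) = \sum_{s} \E[\text{in-part VCG revenue}]$: choosing a single reserve $r$ of order $1$ and verifying via \cref{fact:ar_revenue} (applied locally in each part) that every active part contributes at least an absolute positive constant, so that $\ar = \Omega(T) = \Omega(\log k)$. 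The AP upper bound uses the matroid-greedy formula $\ap(p) = p \cdot \sum_{s} \Pr[\text{part }s\text{ has a willing bidder}]$; a two-regime analysis (small $p$ vs.\ large $p$) combined with the fast-decaying tails of the $F_s$'s bounds this uniformly in $p$ by a constant, giving $\ap = O(1)$. Dividing yields the desired $\ar/\ap = \Omega(\log k)$ separation.

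The main obstacle is the simultaneous calibration of the per-part distributions $F_s$. There is a real tension: heavier within-part tails make the expected second-highest bid large (good for the AR lower bound) but also let a single high posted price harvest a lot from the top-scale parts (bad for the AP upper bound), while lighter tails do the reverse. The naive two-point choice $F_s = q_s\cdot \delta_{2^s} + (1-q_s)\cdot \delta_0$ with $q_s^2 \cdot 2^s = \Theta(1)$ already achieves $\ar = \Omega(\log k)$, but then the posted price $p = 2^T$ gives $\ap = \Omega(\sqrt{k})$, swamping the logarithmic AR. Resolving this requires a finer within-part construction, for instance a truncated equal-revenue-type distribution at scale $2^s$, so that at every posted price the sum $\sum_{s}\Pr[\text{willing in part }s]$ decays geometrically fast in $p$ while the expected second-highest bid in each part remains $\Omega(1)$ at the common reserve. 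Verifying the AP upper bound around the transition prices $p\asymp 2^s$, and checking that the AR bound survives at the chosen reserve under this tuned $F_s$, is the most delicate step and is where the bulk of the technical work sits.
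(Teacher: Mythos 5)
Your high-level blueprint --- a partition-structured rank-$k$ matroid in which within-part competition lets {\VCGAuction} charge each winner a part-local threshold that no single anonymous price can match --- is exactly the mechanism behind the paper's construction. However, your specific architecture, namely $T = \lfloor \log_2 k\rfloor$ active parts each contributing $\Omega(1)$ to {\ar} while $\ap = O(1)$, cannot be calibrated no matter how the per-part distributions are chosen; the step you flag as ``delicate'' is in fact an impossibility. To see this, write $u_s(p)$ for the probability that part $s$ contains a bidder with value at least $p$, and let $C$ be the global {\ap} bound, so that $p\sum_s u_s(p) \le C$ for every $p$. On a partition matroid the {\ar} revenue at reserve $r$ decomposes as $\sum_s \big(r\cdot\Pr[b^{(s)}_{(1)} \ge r] + \E\big[|b^{(s)}_{(2)} - r|_+ \cdot \mathbbm{1}\{b^{(s)}_{(1)}\ge r\}\big]\big)$, where $b^{(s)}_{(1)}, b^{(s)}_{(2)}$ are the top two bids inside part $s$. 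The first terms sum to $\ap(r) \le C$, and the second terms are at most $\E[b^{(s)}_{(2)}] = \int_0^\infty \Pr[b^{(s)}_{(2)} \ge p]\,\d p$ with $\Pr[b^{(s)}_{(2)} \ge p] \le u_s(p)^2$ for two bidders per part (and $O(u_s(p)^2)$ in general). Hence
\[
\sum_s \E[b^{(s)}_{(2)}] ~\le~ \int_0^\infty \Big(\sum_s u_s(p)\Big)\cdot \max_s u_s(p)\,\d p ~\le~ \int_0^\infty \min\Big\{T, \frac{C}{p}\Big\}\cdot\min\Big\{1, \frac{C}{p}\Big\}\,\d p ~=~ C\cdot(2 + \ln T).
\]
With $T = \Theta(\log k)$ active parts this caps your {\ar} at $O(\log\log k)$, exponentially short of the target. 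To reach $\Omega(\log k)$ you are forced to use $\Theta(k)$ active parts, each contributing $\Theta(1/i)$ rather than $\Omega(1)$.

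That is precisely what the paper does, and the distributional calibration you were fighting disappears entirely: take $m \ge k$ pairs of bidders (each pair a part of capacity $1$, together with a global cap of $k$, a laminar matroid of rank $k$), and give \emph{both} bidders of pair $i$ the \emph{deterministic} bid $\max\{\frac{1}{i}, \frac{1}{k+1}\}$. Because the two bids in a pair are equal and only one of them can win, the VCG threshold for the winner of pair $i$ is the full value $\frac{1}{i}$, so $\ar \ge \sum_{i\in[k]} \frac{1}{i} = \Omega(\log k)$; meanwhile any posted price $p$ is accepted by at most $\min\{k, \lfloor 1/p\rfloor\}$ pairs, one winner each, so $\ap \le 1$. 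Replacing your dyadic scales and randomized per-part values with this harmonic, deterministic profile closes the gap.
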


\begin{proof}[Proof of \Cref{thm:ar_ap:matroid}]
For simplicity, we assume that $n = 2m$ is an even integer and that $k \leq m$; the lower-bound instance for the general case is very similar. The buyers are divided into $m$ pairs, and each pair $i \in [m]$ involves the $(2i - 1)$-th and $2i$-th buyers. We consider a specific rank-$k$ matroid $\mathcal{M}$ in terms of the collection $\mathcal{B}$ of its bases:
\begin{quote}
    Any base $B \in \mathcal{B}$ contains exactly one buyer from each chosen pair, for some choice of $k$ pairs. In total, there are $|\mathcal{B}| = \binom{m}{k} \cdot 2^k$ bases.
\end{quote}
One can easily justify the augmentation property, thus showing $\mathcal{M}$ to be a matroid (or more precisely, a laminar matroid with the laminar family $\{[1: m], [m + 1: 2m], [1: 2m]\}$ and the capacity function $c([1: m]) = c([m + 1: 2m]) = c([1: 2m]) = k$). Further, both buyers of each $i$-th pair have a deterministic bid $\max \{\frac{1}{i},~\frac{1}{k + 1}\}$.

In {\AnonymousPricing}, when the seller posts a price $p > \frac{1}{k}$, exactly $\lfloor \frac{1}{p} \rfloor$ pairs would pay this price, hence a revenue $p \cdot \lfloor \frac{1}{p} \rfloor \leq 1$. When the price $p \leq \frac{1}{k}$, although all the $k$ copies will be sold out, the revenue $p \cdot k$ is still at most $1$. But when the seller instead employs {\VCGAuction} (even without a reserve), either buyer in each of the top-$k$ pairs will get an item by paying $\frac{1}{i}$, thus a revenue of $\sum_{j \in [k]} \frac{1}{i} = \Omega(\log k)$.

This completes the proof of \Cref{thm:ar_ap:matroid}.
\end{proof}

\begin{remark}
We notice that a deterministic bid satisfies the regularity distributional assumption, as well as the stronger {\em monotone-hazard-rate} assumption.\footnote{A distribution $F_{j}$ has {\em monotone hazard rate} if $y = \ln(1 - F_{j}(x))$ is a concave function, e.g., see \cite{JLX19}.} Thus, the $\Omega(\log k)$ lower bound still holds for the revenue gap $\Re_{\ar/\ap}$ in these restricted settings.
\end{remark}
\section{Ex-Ante Relaxation vs.\ Anonymous Pricing}
\label{sec:ear_ap}

In this section, we investigate the {\ExAnteRelaxation} (\ear) vs. {\AnonymousPricing} (\ap) problem, under the regularity assumption that $\vec{F} = \{F_{j}\}_{j \in [n]} \subseteq \reg$. Based on the revenue formulas (see \Cref{subsec:prelim:mech}), the revenue gap between both mechanisms is given by the optimal solution to the following mathematical program. Recall that  $D_{i}$ is the $i$-th highest bid distribution, and $\reg$ is the family of all regular distributions.
\begin{align}
    & \sup
    && \ear(\vec{q}', \vec{F}) ~ = ~ \sum_{j \in [n]} F_{j}^{-1}(1 - q'_{j}) \cdot q'_{j}
    \label{prog:ear_old}\tag{P2} \\
    & \mbox{\,\,s.t.}
    && \ap(p, \vec{F}) ~ = ~ p \cdot \sum_{i \in [k]} (1 - D_{i}(p)) ~ \leq ~ 1,
    && \forall p \in \RP,
    \notag \\
    &&& \sum_{j \in [n]} q'_{j} ~ \leq ~ k,
    \notag \\
    & && \vec{q}' = \{q'_{j}\}_{j \in [n]} \in [0, 1]^{n}, ~ \vec{F} = \{F_{j}\}_{j \in [n]} \subseteq \reg,
    && \forall n \in \NP.
    \notag
\end{align}

We will establish an $O(\log k)$ upper bound for the optimal solution to Program~\eqref{prog:ear_old}, which is formalized as \Cref{thm:ear_ap}. Combine this result with the matching lower bound by \cite[Example~5.4]{HR09}, then the revenue gap gets understood.

\begin{theorem}[{\ear} vs.\ {\ap}]
\label{thm:ear_ap}
Given that the seller has $k \in \NP$ homogeneous items and faces $n \geq k$ independent unit-demand buyers, who have regular value distributions $\vec{F} = \{F_{j}\}_{j \in [n]} \subseteq \reg$, the revenue gap between {\ExAnteRelaxation} and {\AnonymousPricing} is $\Re_{\ear / \ap}(k) = \Theta(\log k)$.
\end{theorem}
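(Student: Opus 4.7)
The $\Omega(\log k)$ lower bound is already supplied by \cite[Example~5.4]{HR09}, so the task reduces to establishing the matching $O(\log k)$ upper bound on the optimum of Program~\eqref{prog:ear_old}. As a first reduction, I would generalize the triangle-distribution reduction of \cite[Lemma~4.1]{AHNPY19} to the $k$-unit setting: replacing every $F_j$ by a suitable triangle distribution $\tri(v_j, q_j)$ only increases $\ear$ while preserving both the feasibility constraint $\ap(p, \vec{F}) \leq 1$ and the capacity $\sum_j q_j \leq k$. Thereafter every instance can be encoded as a list $\{(v_j, q_j)\}_{j \in [n]}$; the closed form $1 - F_j(p) = v_j q_j / ((1 - q_j)p + v_j q_j)$ (for $p \leq v_j$) makes the feasibility constraint explicit, and the objective becomes $\ear(\vec{F}) = \sum_j v_j q_j$.

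Next, partition $[n] = L \cup H_S \cup H_B$ as in the paper's own sketch, with $L = \{j : v_j < 1/k\}$ (low bang-per-buck), $H_S = \{j : v_j \geq 1/k,\ v_j q_j < 1/(2k)\}$ (high bang-per-buck, small capacity), and $H_B = \{j : v_j \geq 1/k,\ v_j q_j \geq 1/(2k)\}$ (high bang-per-buck, big capacity). For $L$ the capacity constraint alone suffices: $\sum_{j \in L} v_j q_j < (1/k) \sum_{j \in L} q_j \leq 1$. For $H_S$ the individual contributions are already below $1/(2k)$; evaluating the feasibility constraint at $p = 1/k$ and using the triangle closed form to lower bound $1 - F_j(1/k)$ by a constant multiple of $q_j$ for each $j \in H_S$ caps $\sum_{j \in H_S} q_j$, and hence $\sum_{j \in H_S} v_j q_j = O(1)$.

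The main work is to show $\sum_{j \in H_B} v_j q_j = O(\log k)$. My plan is a dyadic bucketing on the monopoly price: for each nonnegative integer $i$, set $H_B^{(i)} = \{j \in H_B : v_j \in [2^i/k,\, 2^{i+1}/k)\}$, and within each bucket pick a representative price $p_i \approx 2^i/k$. Every $j \in H_B^{(i)}$ satisfies $v_j \geq p_i$ and $v_j q_j \geq 1/(2k)$, so the triangle closed form gives a uniform constant lower bound on $1 - F_j(p_i)$. Plugging this back into the feasibility constraint $p_i \sum_{i' \in [k]}(1 - D_{i'}(p_i)) \leq 1$ and exploiting the cap at $k$ willing buyers bounds $\sum_{j \in H_B^{(i)}} v_j q_j$ by $O(1)$ per bucket; since only $O(\log k)$ buckets can intersect the nontrivial range (buckets with $v_j \gg 1$ are separately dominated via the capacity constraint $\sum q_j \leq k$, and buckets with $v_j < 1/k$ already lie in $L$), we conclude $\sum_{j \in H_B} v_j q_j = O(\log k)$.

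The main obstacle is making the $H_B$ bucketing watertight. The nonlinearity $\sum_{i \in [k]}(1 - D_i(p)) = \E[\min(k, X(p))]$, with $X(p)$ counting willing buyers, prevents one from simply summing individual acceptance probabilities against the constraint: once the expected number of willing buyers at $p_i$ exceeds $k$, the cap saturates and the per-bucket argument can lose a factor of $k$. The Knapsack-style ideas from the budget-feasible mechanism literature~\cite{S10,CGL11,GJLZ19} are tailored precisely for this tension, and the introduction already warns that the actual grouping thresholds will be more intricate than the clean $1/k$ and $1/(2k)$ thresholds above; I expect the careful separation of ``small'' ($H_S$) versus ``big'' ($H_B$) contributions to be the linchpin that turns a naive polynomial bound into the tight logarithmic one.
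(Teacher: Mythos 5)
Your outline follows the paper's high-level plan, but it stops short at exactly the two places where the real work happens, and one of the steps as stated would fail. First, the obstacle you name at the end --- that the feasibility constraint controls $p\cdot\E[\min(k,X(p))]$ rather than $p\cdot\E[X(p)]$, so one cannot directly sum individual acceptance probabilities --- is not something the grouping criteria resolve; it requires a separate probabilistic lemma. The paper proves (\cref{lem:ear_ap:relaxed_constraint}) via a Chernoff bound that $\ap(p,\vec{F})\leq 1$ forces $\sum_{j}(1-F_j(p))\leq 4/p$ for all $p\in[1/m,1/2]$ with $m=\lfloor k/2\rfloor$: if $\E[X(p)]$ were larger, then $X(p)$ would exceed $\min\{k,\E[X(p)]/2\}$ with probability more than $1/2$, already violating the constraint. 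Every per-group bound in the proof is run against this relaxed, linear constraint~\eqref{cstr:ear_ap_relaxed}, not against the original one. Without supplying this lemma (or an equivalent), both your $H_S$ bound and your per-bucket bound for $H_B$ are unsupported.

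Second, your $H_B$ argument contains a step that is false as written: for $j\in H_B^{(i)}$ with $v_j\in[2^i/k,2^{i+1}/k)$ and only the guarantee $v_jq_j\geq 1/(2k)$, the quantity $1-F_j(p_i)=v_jq_j/((1-q_j)p_i+v_jq_j)$ is \emph{not} bounded below by a uniform constant --- e.g.\ $v_j=1$, $q_j=1/(2k)$ gives $1-F_j(1)\approx 1/(2k)$. A constant lower bound requires $v_jq_j/(1-q_j)\gtrsim p_i$, i.e.\ the threshold on $v_jq_j$ must scale with the bucket's price, which is precisely the ``more intricate criteria'' the introduction alludes to. The paper's group $A$ is defined by $v_jq_j/(1-q_j)\geq 1/m$ and is refined into a nested chain $A_2\subseteq\cdots\subseteq A_m$ with $A_t=\{j: v_j\geq 1/t,\ v_jq_j/(1-q_j)\geq 1/t\}$; the relaxed constraint at $p=1/t$ then gives the cardinality bound $|A_t|\leq 8t$, and an Abel-summation over the chain (using $v_j\leq 1/(t-1)$ on $A_t\setminus A_{t-1}$, and constraint~\eqref{cstr:ear_single} on $A_2$) produces the harmonic sum $\sum_t 8/t=O(\log k)$. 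Your buyers with price-scale-small $v_jq_j$ must instead be absorbed into the analogue of group $B$ (the paper's criterion $v_jq_j/(1-q_j)<1/m$, bounded by $8$ via the relaxed constraint at $p=1/m$), not handled inside the dyadic buckets. Finally, the relaxed constraint is only available for $p\leq 1/2$ and $m\geq 2$, so the top price range and the cases $k\in\{1,2,3\}$ need separate (if routine) treatment, which the paper does via constraint~\eqref{cstr:ear_single} and a reduction to the single-item program of \cite{AHNPY19}.
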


We establish \Cref{thm:ear_ap} in three steps. First, we give a reduction from a regular instance to a triangle instance, which preserves the feasibility; then we just need to optimize $n$ pairs of monopoly price and quantile $\{(v_{j}, q_{j})\}_{j \in [n]}$ instead of $n$ regular distributions $\{F_{j}\}_{j \in [n]}$. Second, we relax the constraint $\ap(p, \vec{F}) \leq 1$ to a more tractable constraint, which avoids the correlation among the order statistics $\{D_{i}\}_{i \in [k]}$. Afterwards, we divide all buyers into three careful groups under certain criteria for $\{(v_{j}, q_{j})\}_{j \in [n]}$, and separately bound the contribution from each group to the {\ear} revenue. The total {\ear} revenue turns out to be $O(\log k)$.

\vspace{.1in}
\noindent
{\bf Reduction to triangle instances.}
For the single-item case $k = 1$, \cite{AHNPY19} show that the worst case of Program~\eqref{prog:ear_old} w.l.o.g.\ is achieved by a triangle instance. Indeed, their arguments work in the general case $k \in \NP$ as well. Formally, we have the following lemma (see \Cref{fig:triangle} for a demonstration).

\begin{lemma}[Reduction for {\ear} vs.\ {\ap} {\cite[Lemma~4.1]{AHNPY19}}]
\label{lem:ear_ap:reduction}
Given a feasible solution $(\vec{q}', \vec{F})$ to Program~\eqref{prog:ear_old}, there exists another $n$-buyer feasible instance $(\vec{q}^*, \vec{F}^*)$ such that:
\begin{enumerate}[font = {\em\bfseries}]
    \item The distributions $\vec{F}^* = \{F_{j}^*\}_{j \in [n]} \subseteq \tri$ are triangle distributions, and $\vec{q}^* = \{q_{j}^*\}_{j \in [n]} \in [0, 1]^{n}$ (such that $\sum_{j \in [n]} q_{j}^* \leq k$) are the monopoly quantiles thereof.
    
    \item The {\ExAnteRelaxation} revenue keeps the same, i.e.\ $\ear(\vec{q}^*, \vec{F}^*) = \ear(\vec{q}', \vec{F})$.
    
    \item The distributions $\vec{F}^* = \{F_{j}^*\}_{j \in [n]}$ are stochastically dominated by $\vec{F} = \{F_{j}\}_{j \in [n]}$ and thus, for any price $p \in \RP$, the {\AnonymousPricing} revenue drops, i.e.\ $\ap(p, \vec{F}^*) \leq \ap(p, \vec{F})$.
\end{enumerate}
\end{lemma}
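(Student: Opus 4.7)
The construction I would use is the natural one: for each buyer $j \in [n]$, set $v_{j} \eqdef F_{j}^{-1}(1 - q'_{j})$, $q^*_{j} \eqdef q'_{j}$, and $F^*_{j} \eqdef \tri(v_{j}, q'_{j})$. Parts~1 and~2 of the lemma are then essentially immediate. For Part~1, the capacity constraint $\sum_{j} q^*_{j} = \sum_{j} q'_{j} \leq k$ is inherited, and a one-line calculation on the triangle CDF from \Cref{subfig:prelim:triangle} (the revenue $p \cdot (1 - F^*_{j}(p))$ is strictly increasing in $p$ on $[0, v_{j}]$ and zero for $p > v_{j}$) confirms that the monopoly price-quantile pair of $\tri(v_{j}, q'_{j})$ is exactly $(v_{j}, q'_{j})$, so $q^*_{j}$ is the advertised monopoly quantile. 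For Part~2, $F^{*-1}_{j}(1 - q^*_{j}) \cdot q^*_{j} = v_{j} q'_{j} = F_{j}^{-1}(1 - q'_{j}) \cdot q'_{j}$, so each summand of $\ear(\vec{q}', \vec{F})$ is preserved in $\ear(\vec{q}^*, \vec{F}^*)$.

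The substantive content is Part~3, namely the pointwise stochastic dominance $F_{j}(x) \leq F^*_{j}(x)$ for all $x \geq 0$. The case $x \geq v_{j}$ is easy: left-continuity of $F_{j}$ combined with the inf definition of $F_{j}^{-1}(1 - q'_{j}) = v_{j}$ gives $F_{j}(v_{j}) \leq 1 - q'_{j} = F^*_{j}(v_{j})$, and for $x > v_{j}$ we have $F^*_{j}(x) = 1$. For $x < v_{j}$, my plan is to exploit the fact that regularity of $F_{j}$ makes the revenue curve $R_{j}(q) \eqdef q \cdot F_{j}^{-1}(1 - q)$ concave on $[0, 1]$, with pinning values $R_{j}(0) = R_{j}(1) = 0$ and $R_{j}(q'_{j}) = v_{j} q'_{j}$. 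Setting $q_{x} \eqdef 1 - F_{j}(x)$, the preceding case analysis yields $q_{x} \geq q'_{j}$, so concavity applied to the chord from $(q'_{j}, v_{j} q'_{j})$ to $(1, 0)$ gives the lower bound $R_{j}(q_{x}) \geq \frac{1 - q_{x}}{1 - q'_{j}} \cdot v_{j} q'_{j}$; combining this with the trivial upper bound $R_{j}(q_{x}) = q_{x} \cdot F_{j}^{-1}(F_{j}(x)) \leq q_{x} \cdot x$ and rearranging yields exactly $F_{j}(x) \leq \frac{(1 - q'_{j}) x}{(1 - q'_{j}) x + v_{j} q'_{j}} = F^*_{j}(x)$.

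Once pointwise stochastic dominance is in hand, \cref{fact:revenue_monotonicity} immediately gives $\ap(p, \vec{F}^*) \leq \ap(p, \vec{F}) \leq 1$ for every $p \in \RP$, so the new tuple $(\vec{q}^*, \vec{F}^*)$ remains feasible for Program~\eqref{prog:ear_old}, which concludes the reduction. The main obstacle in the whole argument is precisely the case $x < v_{j}$ in Part~3: the revenue-curve reformulation via $R_{j}$ is essential because it converts the geometric fact ``a concave function majorizes any chord'' into the algebraic CDF inequality we need, and it is also the only place where the regularity hypothesis is actually used. Everything else is bookkeeping with the triangle CDF and the inf-based inverse together with the left-continuity convention of \cref{subsec:prelim:distr}.
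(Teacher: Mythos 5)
Your proof is correct and follows exactly the argument the paper imports from \cite[Lemma~4.1]{AHNPY19} without reproving it: replace each $F_j$ by the triangle distribution with apex $(q'_j, F_j^{-1}(1-q'_j)\cdot q'_j)$, and derive stochastic dominance from the fact that the concave revenue-quantile curve of a regular distribution majorizes the chord to $(1,0)$, which is precisely the picture in \cref{fig:triangle}. Nothing further is needed.
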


In view of \Cref{lem:ear_ap:reduction}, to establish \Cref{thm:ear_ap} we can focus on Program~\eqref{prog:ear} in place of the previous Program~\eqref{prog:ear_old}. For a triangle distribution $\tri(v_{j}, q_{j})$, where $v_{j} = F_{j}^{-1}(1 - q_{j}) \geq 0$ is the monopoly price, we reuse $F_{j}$ to denote its CDF. Recall \Cref{subsec:prelim:distr} that $F_{j}(x) = \frac{(1 - q_{j}) \cdot x}{(1 - q_{j}) \cdot x + v_{j} q_{j}}$ for all $x \leq v_{j}$ and $F_{j}(x) = 1$ for all $x > v_{j}$.
\begin{align}
    & \sup
    && \ear(\vec{F}) ~ = ~ \sum_{j \in [n]} v_{j} q_{j}
    \label{prog:ear}\tag{P3} \\
    & \mbox{\,\,s.t.}
    && \ap(p, \vec{F}) ~ = ~ p \cdot \sum_{i \in [k]} (1 - D_{i}(p)) ~ \leq ~ 1,
    && \forall p \in \RP,
    \label{cstr:ear_ap} \tag{C2} \\
    &&& \sum_{j \in [n]} q_{j} ~ \leq ~ k,
    \label{cstr:ear_quantile} \tag{C3} \\
    & && \vec{F} = \{\tri(v_{j}, q_{j})\}_{j \in [n]} \subseteq \reg,
    && \forall n \in \NP.
    \notag
\end{align}
For a single triangle distribution $\tri(v_{j}, q_{j})$, the optimal {\AnonymousPricing} revenue from it equals $\ap(\tri(v_{j}, q_{j})) = v_{j} q_{j}$, which $\leq \ap(\vec{F}) \leq 1$ due to constraint~\eqref{cstr:ear_ap}. We thus add one more constraint
\begin{align}
    & v_{j} q_{j} ~ \leq ~ 1,
    && \forall j \in [n].
    \label{cstr:ear_single} \tag{C4}
\end{align}

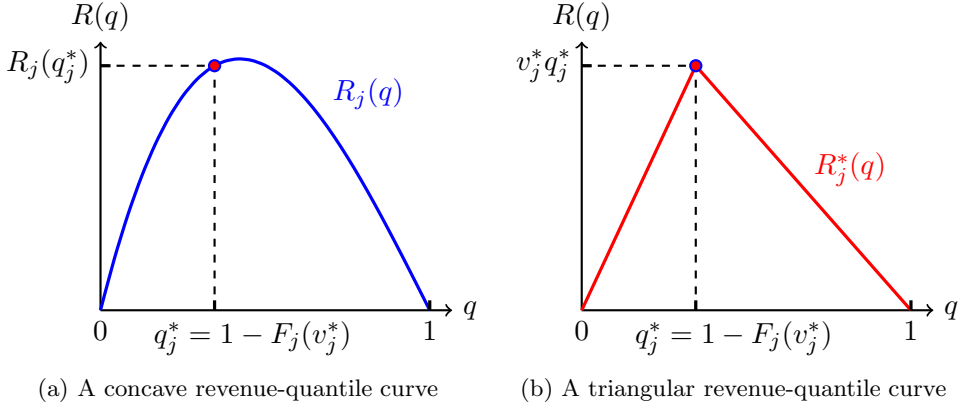
\begin{figure}[t]
    \centering
    \subfloat[A concave revenue-quantile curve
    \label{fig:triangle:1}]{
    \begin{tikzpicture}[thick, smooth, domain = 0: 1.732, scale = 2.5]
    \draw[->] (0, 0) -- (1.85, 0);
    \draw[->] (0, 0) -- (0, 1.425);
    \node[above] at (0, 1.425) {$R(q)$};
    \node[right] at (1.85, 0) {$q$};
    \node[below] at (0, 0) {$0$};
    \draw[very thick, color = blue] plot (\x, {2 * ((\x - 0.732)^3 / 3 - (\x - 0.732)^2 + 2 / 3)});
    \node[anchor = south west, color = blue] at (1.166, 1.011) {$R_{j}(q)$};
    \draw[dashed] (0.6, 0) -- (0.6, 1.297);
    \draw[dashed] (0, 1.297) -- (0.6, 1.297);
    \draw[color = blue, fill = red] (0.6, 1.297) circle(0.835pt);
    \draw[very thick] (1.732, 0) -- (1.732, 1.5pt);
    \node[below] at (1.732, 0) {$1$};
    \draw[very thick] (0.6, 0) -- (0.6, 1.5pt);
    \node[below] at (0.8, 0) {$q_{j}^* = 1 - F_{j}(v_{j}^*)$};
    \draw[very thick] (0, 1.297, 0) -- (1.5pt, 1.297);
    \node[left] at (0, 1.297) {$R_{j}(q_{j}^*)$};
    \end{tikzpicture}
    }
    \subfloat[A triangular revenue-quantile curve
    \label{fig:triangle:2}]{
    \begin{tikzpicture}[thick, smooth, domain = 0: 1.732, scale = 2.5]
    \draw[->] (0, 0) -- (1.85, 0);
    \draw[->] (0, 0) -- (0, 1.425);
    \node[above] at (0, 1.425) {$R(q)$};
    \node[right] at (1.85, 0) {$q$};
    \node[below] at (0, 0) {$0$};
    \draw[very thick, color = red] (0, 0) -- (0.6, 1.297) -- (1.732, 0);
    \node[anchor = south west, color = red] at (1.166, 0.6) {$R_{j}^*(q)$};
    \draw[dashed] (0.6, 0) -- (0.6, 1.297);
    \draw[dashed] (0, 1.297) -- (0.6, 1.297);
    \draw[color = blue, fill = red] (0.6, 1.297) circle(0.835pt);
    \draw[very thick] (1.732, 0) -- (1.732, 1.5pt);
    \node[below] at (1.732, 0) {$1$};
    \draw[very thick] (0.6, 0) -- (0.6, 1.5pt);
    \node[below] at (0.8, 0) {$q_{j}^* = 1 - F_{j}(v_{j}^*)$};
    \draw[very thick] (0, 1.297, 0) -- (1.5pt, 1.297);
    \node[left] at (0, 1.297) {$v_{j}^* q_{j}^*$};
    \end{tikzpicture}
    }
    \caption{Demonstration for the reduction in \cref{lem:ear_ap:reduction}, in terms of the {\em revenue-quantile curves}. For a distribution $F_{j}$, its revenue-quantile curve is given by $R_{j}(q) = q \cdot F_{j}^{-1}(1 - q)$ for $q \in [0, 1]$.
    This distribution $F_{j}$ is regular iff the $R_{j}$ is a concave function (as \Cref{fig:triangle:1} suggests). And the revenue-quantile curve of a triangle distribution is basically a triangle (i.e., a $2$-piecewise linear function, as \Cref{fig:triangle:2} suggests); in particular, the two base angles have the tangent values $v_{j}^*$ and $v_{j}^* q_{j}^* / (1 - q_{j}^*)$, respectively.}
    \label{fig:triangle}
\end{figure}

\noindent
{\bf Relaxing constraint~\eqref{cstr:ear_ap}.}
Given Program~\eqref{prog:ear}, both the objective function $\ear(\vec{F})$ and constraint~\eqref{cstr:ear_quantile} are easy to deal with. However, constraint~\eqref{cstr:ear_ap} is rather complicated, because it involves the correlated top-$k$ bids $\{b_{(i)}\}_{i \in [k]}$ and the corresponding order CDF's $\{D_{i}\}_{i \in [k]}$ (as formulas of the individual CDF's $\{F_{j}\}_{j \in [n]}$) are cumbersome.

The following \Cref{lem:ear_ap:relaxed_constraint} relaxes constraint~\eqref{cstr:ear_ap} to another constraint. The resulting constraint is  much easier to reason about. Namely, it avoids the correlation among the top-$k$ bids $\{b_{(i)}\}_{i \in [k]}$ and admits a clean formula of the individual CDF's $\{F_{j}\}_{j \in [n]}$. Later we will see that after this relaxation, the optimal objective value of Program~\eqref{prog:ear} blows up just by a constant multiplicative factor. Denote $m \eqdef \lfloor \frac{k}{2} \rfloor \geq 2$ for convenience.

\begin{lemma}[Relaxed Constraint]
\label{lem:ear_ap:relaxed_constraint}
The following is a necessary condition for constraint~\eqref{cstr:ear_ap}:
\begin{align*}
    & \sum_{j \in [n]} (1 - F_{j}(p)) ~ \leq ~ \frac{4}{p},
    && \forall p \in \Big[\frac{1}{m}, ~ \frac{1}{2}\Big].
\end{align*}
\end{lemma}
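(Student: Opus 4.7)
The plan is to reinterpret constraint~\eqref{cstr:ear_ap} probabilistically. Let $N_p \eqdef \sum_{j \in [n]} \mathbbm{1}\{b_{j} \geq p\}$ denote the Poisson binomial count of bids at least $p$. Then the identity $\sum_{i \in [k]}(1 - D_{i}(p)) = \E[\min(N_p, k)] = \sum_{j=1}^{k} \Pr[N_p \geq j]$ turns the constraint into $\E[\min(N_p, k)] \leq 1/p$. Writing $\mu \eqdef \E[N_p] = \sum_{j \in [n]}(1 - F_{j}(p))$, the target is $\mu \leq 4/p$, which I will prove by contradiction: assuming $\mu > 4/p$, I will derive $\E[\min(N_p, k)] > 1/p$.

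The workhorse is Chebyshev's inequality applied to $N_p$. Independence of the summands gives $\Var(N_p) = \sum_{j \in [n]} F_j(p)(1 - F_j(p)) \leq \mu$, so
\[
\Pr[|N_p - \mu| \geq \mu/2] ~ \leq ~ \frac{4 \Var(N_p)}{\mu^{2}} ~ \leq ~ \frac{4}{\mu}.
\]
The range $p \leq 1/2$ makes the contradiction hypothesis $\mu > 4/p$ imply $\mu > 8$, so the above bound is strictly less than $1/2$; hence $\Pr[N_{p} > \mu/2] > 1/2$. The analysis then splits on whether $\mu < 2k$. In the first subcase $\mu < 2k$, setting $t \eqdef \lfloor \mu/2 \rfloor + 1 \leq k$ and using monotonicity of $j \mapsto \Pr[N_p \geq j]$, one obtains $\E[\min(N_p, k)] \geq t \cdot \Pr[N_p \geq t] > (\mu/2)(1/2) = \mu/4 > 1/p$. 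In the second subcase $\mu \geq 2k$, a second Chebyshev application now exploiting $\mu - k + 1 > \mu/2$ yields $\Pr[N_p \geq k] \geq 1 - \mu/(\mu - k + 1)^2 > 1 - 4/\mu > 1/2$, and hence $\E[\min(N_p, k)] \geq k \cdot \Pr[N_p \geq k] > k/2 \geq m \geq 1/p$, where the last two inequalities use the range constraint $p \geq 1/m = 1/\lfloor k/2 \rfloor$. Either way one contradicts $\E[\min(N_p, k)] \leq 1/p$, completing the proof.

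I expect the main obstacle to be ensuring the argument survives in the second subcase, where the mean $\mu$ is comparable to or larger than the capacity $k$ and the relevant Chebyshev deviation is $\mu - k + 1$ rather than $\mu/2$. The constant $4$ is essentially what this second-moment approach produces: it arises from combining the Chebyshev deviation $\mu/2$ with the $1/2$ slack in the complementary bulk probability to convert a Poisson binomial of mean $\mu$ into at least $\mu/4$ worth of contribution to $\E[\min(N_p, k)]$. The two endpoints of the range $[1/m, 1/2]$ play complementary roles: the upper one $p \leq 1/2$ makes $\mu > 8$ so that the Chebyshev bound $4/\mu < 1/2$ is strong enough, while the lower one $p \geq 1/m$ ensures $1/p \leq k/2$, which is precisely what closes the second subcase.
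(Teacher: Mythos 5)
Your proof is correct and follows essentially the same strategy as the paper's: reinterpret the constraint as $\E[\min(N_p,k)] \leq 1/p$, assume $\mu > 4/p$ (hence $\mu > 8$ from $p \leq 1/2$), use concentration to get $\Pr[N_p > \mu/2] > 1/2$, and derive a contradiction using $2/p \leq k$ from the lower endpoint $p \geq 1/m$. The only differences are cosmetic: you use Chebyshev where the paper uses a Chernoff bound (both suffice because $\mu > 8$), and you handle the truncation at $k$ by an explicit case split on $\mu \lessgtr 2k$ where the paper uses the single threshold $\min\{k, \E[X]/2\}$.
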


\begin{proof}[Proof of \Cref{lem:ear_ap:relaxed_constraint}]
Let us consider a specific price $p \in [\frac{1}{m}, \frac{1}{2}]$ for constraint~\eqref{cstr:ear_ap}. For any $j \in [n]$, let the independent Bernoulli random variable $X_{j} \in \{0, 1\}$ denote whether the $j$-th buyer is willing to pay the price $p$, with the failure probability $\Pr[X_{j} = 0] = F_{j}(p)$. Then $X \eqdef \sum_{j \in [n]} X_{j}$ denotes how many buyers are willing to pay, and $Y \eqdef \min \{k, X\}$ denotes how many items are sold out in {\AnonymousPricing}.

We have the revenue $\ap(p, \vec{F}) = p \cdot \E[Y]$, and constraint~\eqref{cstr:ear_ap} is identical to $\E[Y] \leq \frac{1}{p}$. For the equation given in \Cref{lem:ear_ap:relaxed_constraint}, the $\lhs = \sum_{j \in [n]} \Pr[X_{j} = 1] = \sum_{j \in [n]} \E[X_{j}] = \E[X]$.

On the opposite of \Cref{lem:ear_ap:relaxed_constraint}, suppose that $\E[X] > \frac{4}{p}$. We have $\E[X] > 8$, given that the price $p \leq \frac{1}{2}$. Since $X$ is the sum of independent Bernoulli random variables, due to Chernoff bound, $\Pr[X < (1 - \delta) \cdot \E[X]] < \frac{e^{-\delta \cdot \E[X]}}{(1 - \delta)^{(1 - \delta) \cdot \E[X]}}$ for any $\delta \in (0, 1)$. In particular,
\begin{align}
    \Pr\Big[X < \frac{1}{2} \cdot \E[X]\Big]
    ~ < ~ \Big(\frac{2}{e}\Big)^{\frac{1}{2} \cdot \E[X]}
    ~ < ~ \Big(\frac{2}{e}\Big)^{4}
    ~ \approx ~ 0.2931
    ~ < ~ \frac{1}{2},
    \label{eq:ear_ap:relaxed_constraint:1}
\end{align}
where the first step follows by setting $\delta = \frac{1}{2}$; and the second step follows since $\E[X] > 8$.

And because $Y = \min\{k, X\}$, we further deduce that
\begin{align}
    \Pr\Big[Y \geq \min\Big\{k, ~ \frac{1}{2} \cdot \E[X]\Big\}\Big]
    & ~ = ~ 1 - \Pr\Big[Y < \min\Big\{k, ~ \frac{1}{2} \cdot \E[X]\Big\}\Big]
    \notag \\
    & ~ = ~ 1 - \Pr\Big[X < \min\Big\{k, ~ \frac{1}{2} \cdot \E[X]\Big\}\Big]
    \notag \\
    & ~ \geq ~ 1 - \Pr\Big[X < \frac{1}{2} \cdot \E[X]\Big]
    \notag \\
    & ~ > ~ \frac{1}{2},
    \label{eq:ear_ap:relaxed_constraint:2}
\end{align}
where the second step follows since $Y < \min\{k, \frac{1}{2} \cdot \E[X]\}$ holds only if $Y < k$, and thus only if $Y = X$; and the last step follows from Equation~\eqref{eq:ear_ap:relaxed_constraint:1}.

Based on the above arguments, we conclude a contradiction $\E[Y] > \frac{1}{p}$ as follows:
\begin{align*}
    \E[Y]
    & ~ \geq ~ \Pr\Big[Y \geq \min\Big\{k, ~ \frac{1}{2} \cdot \E[X]\Big\}\Big] ~ \cdot ~ \min\Big\{k, ~ \frac{1}{2} \cdot \E[X]\Big\} \\
    & ~ > ~ \frac{1}{2} \cdot \min\Big\{k, ~ \frac{1}{2} \cdot \E[X]\Big\} \\
    & ~ \geq ~ \frac{1}{2} \cdot \min\Big\{k, ~ \frac{2}{p}\Big\} \\
    & ~ \geq ~ \frac{1}{p},
\end{align*}
where the second step applies Equation~\eqref{eq:ear_ap:relaxed_constraint:2}; the third step applies our assumption $\E[X] > \frac{4}{p}$; and the last step follows as $\frac{2}{p} \leq 2 m \leq k$, given that $p \in [\frac{1}{m}, \frac{1}{2}]$ and $m = \lfloor \frac{k}{2} \rfloor$.

By refuting the assumption, we get $\E[X] \leq \frac{4}{p}$ for any price $p \in [\frac{1}{m}, \frac{1}{2}]$. This completes the proof of \Cref{lem:ear_ap:relaxed_constraint}.
\end{proof}

Given a triangle instance $\{\tri(v_{j}, q_{j})\}_{j \in [n]}$, by plugging the CDF formulas $\{F_{j}\}_{j \in [n]}$, we can reformulate \Cref{lem:ear_ap:relaxed_constraint} as follows:
\begin{align}
    & \sum_{j \in [n]: v_{j} \geq p} \frac{v_{j} q_{j}}{(1 - q_{j}) \cdot p + v_{j} q_{j}} ~ \leq ~ \frac{4}{p},
    && \forall p \in \Big[\frac{1}{m}, ~ \frac{1}{2}\Big].
    \label{cstr:ear_ap_relaxed} \tag{C2$'$}
\end{align}

\noindent
{\bf Grouping the buyers.}
To upper bound the objective function $\ear(\vec{F}) = \sum_{j \in [n]} v_{j} q_{j}$, let us partition all the buyers into three groups $[n] = A \sqcup B \sqcup C$, where
\begin{align*}
    A & ~ \eqdef ~ \Big\{j \in [n]: v_{j} \geq \frac{1}{m} ~~ \mbox{and} ~~ \frac{v_{j} q_{j}}{1 - q_{j}} \geq \frac{1}{m}\Big\}, \\
    B & ~ \eqdef ~ \Big\{j \in [n]: v_{j} \geq \frac{1}{m} ~~ \mbox{and} ~~ \frac{v_{j} q_{j}}{1 - q_{j}} < \frac{1}{m}\Big\}, \\
    C & ~ \eqdef ~ \Big\{j \in [n]: v_{j} < \frac{1}{m}\Big\}.
\end{align*}

Regarding the groups $A$, $B$ and $C$ given above, their individual contributions to the benchmark $\ear(\vec{F})$ actually admit the following bounds:
\begin{align*}
    & \sum_{j \in A} v_{j} q_{j} ~ = ~ O(\log k),
    && \sum_{j \in B} v_{j} q_{j} ~ \leq ~ 8,
    && \sum_{j \in C} v_{j} q_{j} ~ \leq ~ 3.
\end{align*}
Suppose these bounds to be true, then combining them together immediately gives \Cref{thm:ear_ap}. Below we explain the intuitions of our grouping criteria (\cref{rem:grouping}), give an interesting observation for the instances that are constituted by ``small'' distributions (\cref{rem:small_distributions}), and then verify the above three bounds in the reverse order.

\begin{remark}[Grouping Criteria]
\label{rem:grouping}
Recall the objective function of Program~\eqref{prog:ear}, i.e., $\ear(\vec{F}) = \sum_{j \in [n]} v_{j} q_{j}$, and constraint~\eqref{cstr:ear_quantile}, i.e., $\sum_{j \in [n]} q_{j} \leq k$. Here the monopoly revenues $\{v_{j} q_{j}\}_{j \in [n]}$ are the individual contributions by the triangle distributions $\{\tri(v_{j}, q_{j})\}_{j \in [n]}$, and (in the sense of the {\em Knapsack Problem}) the monopoly quantiles $\{q_{j}\}_{j \in [n]}$ can be regarded as the individual capacities. Thereby, the monopoly prices $\{v_{j}\}_{j \in [n]}$ somehow are the bang-per-buck ratios (i.e., the contribution to the {\ear} benchmark per unit of the capacity).

Of course we prefer those distributions with higher bang-per-buck ratios $\{v_{j}\}_{j \in [n]}$, but also need to take the capacities $\{q_{j}\}_{j \in [n]}$ into account. In particular:
\begin{itemize}
    \item The group-$C$ distributions have lower bang-per-buck ratios $v_{j} \leq 1 / m$. So conceivably, the total contribution $\sum_{j \in C} v_{j} q_{j}$ by this group to the {\ear} benchmark shall be small, and we will prove an upper bound of $3$.
    
    \item The group-$B$ distributions have high enough bang-per-buck ratios $v_{j} \geq 1 / m$ but small capacities, namely $v_{j} q_{j} / (1 - q_{j}) < 1 / m$. It turns out that the total contribution $\sum_{j \in B} v_{j} q_{j}$ by this group is also small, and we will prove an upper bound of $8$.
    
    \item The group-$A$ distributions have high enough bang-per-buck ratios as well as big enough capacities. Thus, this group should contribute the most to the {\ear} benchmark, for which we will show $\sum_{j \in A} v_{j} q_{j} = O(\log k)$.
\end{itemize}
Indeed, our grouping criteria borrow ideas from the ``budget-feasible mechanism design'' literature \cite{S10,CGL11,GJLZ19}, where the primary goal is to design approximately optimal mechanisms for the Knapsack Problem under the incentive concerns.
\end{remark}

\begin{remark}[``Small'' Distributions]
\label{rem:small_distributions}
As argued in \cref{sec:our_results}, regarding a {\em continuum of ``small'' buyers} (i.e., any single buyer has an infinitesimal contribution to the {\ear} benchmark, but there are infinitely many buyers $n \to \infty$), the {\ear} vs.\ {\ap} revenue gap would be (at most) a universal constant for whatever $k \geq 1$. This is because every ``small'' buyer belongs to either group $B$ or group $C$, and thus the {\ear} benchmark is at most $\sum_{j \in B \cup C} v_{j} q_{j} \leq 8 + 3 = 11$.
\end{remark}

\vspace{.1in}
\noindent
{\bf Revenue from group $C$.}
Since such a buyer $j \in C$ has a monopoly price $v_{j} < \frac{1}{m}$, we have
\begin{align*}
    \sum_{j \in C} v_{j} q_{j}
    & ~ \leq ~ \frac{1}{m} \cdot \sum_{j \in C} q_{j} \\
    & ~ \leq ~ \frac{1}{m} \cdot \sum_{j \in [n]} q_{j} \\
    & ~ \leq ~ \frac{1}{m} \cdot k \\
    & ~ \leq ~ 3,
\end{align*}
where the second step follows since $C \subseteq [n]$; the third step follows from constraint~\eqref{cstr:ear_quantile}; and the last step holds for $m = \lfloor \frac{k}{2} \rfloor$ and $k \geq 4$. (We will deal with the cases $k \in \{1, 2, 3\}$ separately, at the end of this section.)

\vspace{.1in}
\noindent
{\bf Revenue from group $B$.}
Setting $p = \frac{1}{m}$ for constraint~\eqref{cstr:ear_ap_relaxed}, we deduce that
\begin{align*}
    4m ~ = ~ \rhs \mbox{ of } \eqref{cstr:ear_ap_relaxed}
    ~ \geq ~ \lhs \mbox{ of } \eqref{cstr:ear_ap_relaxed}
    & ~ = ~ \sum_{j \in [n]: v_{j} \geq \frac{1}{m}} \frac{v_{j} q_{j}}{(1 - q_{j}) \cdot \frac{1}{m} + v_{j} q_{j}} \\
    & ~ \geq ~ \sum_{j \in B} \frac{v_{j} q_{j}}{(1 - q_{j}) \cdot \frac{1}{m} + v_{j} q_{j}} \\
    & ~ \geq ~ \sum_{j \in B} \frac{v_{j} q_{j}}{(1 - q_{j}) \cdot \frac{1}{m} + (1 - q_{j}) \cdot \frac{1}{m}} \\
    & ~ \geq ~ \frac{m}{2} \cdot \sum_{j \in B} v_{j} q_{j},
\end{align*}
where the second line follows since $\{j \in [n]: v_{j} \geq \frac{1}{m}\} \supseteq B$ (see the definition of $B$); the third line follows since $\frac{v_{j} q_{j}}{1 - q_{j}} < \frac{1}{m}$ for any $j \in B$; and the last line drops the $(1 - q_{j})$ terms and then rearranges the formula.

Rearranging the above equation immediately gives $\sum_{j \in B} v_{j} q_{j} \leq 8$, as desired.

\vspace{.1in}
\noindent
{\bf Revenue from group $A$.}
To verify the upper bound about this group, we shall generalize the definition of $A$, and get a chain of subgroups $A = A_{m} \supseteq A_{m - 1} \supseteq \cdots \supseteq A_{2}$:
\begin{align*}
    A_{t} & ~ \eqdef ~ \Big\{j \in [n]: v_{j} \geq \frac{1}{t} ~~ \mbox{and} ~~ \frac{v_{j} q_{j}}{1 - q_{j}} \geq \frac{1}{t}\Big\},
    && \forall t \in [2: m].
\end{align*}

Given an index $t \in [2: m]$, by setting $p = \frac{1}{t} \in [\frac{1}{m}, \frac{1}{2}]$ for constraint~\eqref{cstr:ear_ap_relaxed}, we deduce that
\begin{align*}
    4t ~ = ~ \rhs \mbox{ of } \eqref{cstr:ear_ap_relaxed}
    ~ \geq ~ \lhs \mbox{ of } \eqref{cstr:ear_ap_relaxed}
    & ~ = ~ \sum_{j \in [n]: v_{j} \geq \frac{1}{t}} \frac{v_{j} q_{j}}{(1 - q_{j}) \cdot \frac{1}{t} + v_{j} q_{j}} \\
    & ~ \geq ~ \sum_{j \in A_{t}} \frac{v_{j} q_{j}}{(1 - q_{j}) \cdot \frac{1}{t} + v_{j} q_{j}} \\
    & ~ \geq ~ \sum_{j \in A_{t}} \frac{v_{j} q_{j}}{v_{j} q_{j} + v_{j} q_{j}} \\
    & ~ = ~ \frac{1}{2} \cdot |A_{t}|,
\end{align*}
where the second step follows because $\{j \in [n]: v_{j} \geq \frac{1}{m}\} \supseteq A_{t}$ (see the definition of $A_{t}$); and the third step follows because $(1 - q_{j}) \cdot \frac{1}{t} \leq v_{j} q_{j}$ for each $j \in A_{t}$.

Based on the above equation, we easily bound the cardinality $|A_{t}| \leq 8 t$ for each $t \in [2: m]$. Combining the above arguments together gives
\begin{align}
    \sum_{j \in A} v_{j} q_{j}
    ~ = ~ \sum_{j \in A_{m}} v_{j} q_{j}
    & ~ = ~ \sum_{j \in A_{2}} v_{j} q_{j} + \sum_{t \in [3: m]} \sum_{j \in A_{t} \setminus A_{t - 1}} v_{j} q_{j}
    \notag \\
    & ~ \leq ~ \sum_{j \in A_{2}} v_{j} q_{j} + \sum_{t \in [3: m]} \sum_{j \in A_{t} \setminus A_{t - 1}} \frac{1}{t - 1} \cdot 1
    \notag \\
    & ~ = ~ \sum_{j \in A_{2}} v_{j} q_{j} + \sum_{t \in [3: m]} \frac{|A_{t}| - |A_{t - 1}|}{t - 1}
    \notag \\
    & ~ = ~ \Big(\sum_{j \in A_{2}} v_{j} q_{j} - \frac{|A_{2}|}{2}\Big) + \frac{|A_{m}|}{m - 1} + \sum_{t \in [3: m]} |A_{t}| \cdot \Big(\frac{1}{t - 1} - \frac{1}{t}\Big)
    \notag \\
    & ~ \leq ~ \Big(\sum_{j \in A_{2}} v_{j} q_{j} - \frac{|A_{2}|}{2}\Big) + \frac{8m}{m - 1} + \sum_{t \in [3: m]} 8t \cdot \Big(\frac{1}{t - 1} - \frac{1}{t}\Big)
    \notag \\
    & ~ \leq ~ \Big(\sum_{j \in A_{2}} v_{j} q_{j} - \frac{|A_{2}|}{2}\Big) + 16 + \sum_{t \in [3: m]} 8t \cdot \Big(\frac{1}{t - 1} - \frac{1}{t}\Big)
    \notag \\
    & ~ = ~ \Big(\sum_{j \in A_{2}} v_{j} q_{j} - \frac{|A_{2}|}{2}\Big) + 8 + \sum_{t \in [m - 1]} \frac{8}{t},
    \label{eq:group_A}
\end{align}
where the second line follows because the monopoly price $v_{j} \in (\frac{1}{t - 1}, \frac{1}{t}]$ for each $j \in A_{t} \setminus A_{t - 1}$ (see the definitions of $A_{t}$ and $A_{t - 1}$), and the monopoly quantiles $q_{j} \in [0, 1]$ are bounded; the fifth line applies the bounds $|A_{t}| \leq 8t$ for each $t \in [2: m]$; the sixth line holds for $m = \lfloor \frac{k}{2} \rfloor$ and $k \geq 4$; and the last line is by elementary calculation.

Because $v_{j} q_{j} \leq 1$ for all $j \in A_{2}$ (see constraint~\eqref{cstr:ear_single}) and $|A_{2}| \leq 16$, we can bound the first term in Equation~\eqref{eq:group_A}: $\sum_{j \in A_{2}} v_{j} q_{j} - \frac{|A_{2}|}{2} \leq |A_{2}| - \frac{|A_{2}|}{2} \leq 8$. Plug this into Equation~\eqref{eq:group_A}:
\begin{align*}
    \sum_{j \in A} v_{j} q_{j}
    ~ \leq ~ 16 + \sum_{t \in [m - 1]} \frac{8}{t}
    ~ = ~ O(\log k),
\end{align*}
where the last step holds for $m = \lfloor \frac{k}{2} \rfloor$.

\vspace{.1in}
\noindent
{\bf Upper bound when $k \in \{1, 2, 3\}$.}
Clearly, the optimal value $\Re_{\ear/\ap}(k)$ of Program~\eqref{prog:ear_old}, which involves $k \in \NP$ items in both mechanisms, is at most the revenue gap between the $k$-item {\ExAnteRelaxation} and the $1$-item {\AnonymousPricing}. The later revenue gap is given by the next mathematical program.
\begin{align}
    & \sup
    && \sum_{j \in [n]} F_{j}^{-1}(1 - q'_{j}) \cdot q'_{j}
    \label{prog:ear_fake} \tag{P4} \\
    & \mbox{\,\,s.t.}
    && p \cdot (1 - D_{1}(p)) ~ \leq ~ 1,
    && \forall p \in \RP,
    \notag \\
    &&& \sum_{j \in [n]} q'_{j} \leq k,
    \notag \\
    & && \vec{q}' = \{q'_{j}\}_{j \in [n]} \in [0, 1]^{n}, ~ \vec{F} = \{F_{j}\}_{j \in [n]} \subseteq \reg,
    && \forall n \in \NP.
    \notag
\end{align}

The only difference between Program~\eqref{prog:ear_fake} and the one in \cite[Section~4]{AHNPY19} is the constraint $\sum_{j \in [n]} q'_{j} \leq k$ (rather than $\leq 1$). We can resolve Program~\eqref{prog:ear_fake} by following the exactly same steps as in \cite[Section~4]{AHNPY19}. By doing so, we will get
\begin{align*}
    \Re_{\ear/\ap}(k)
    ~ \leq ~ \mbox{optimal value of } \eqref{prog:ear_fake}
    ~ = ~ 1 + \V(\Q^{-1}(k)),
\end{align*}
where the functions $\V(p) \eqdef p \cdot \ln(\frac{p^{2}}{p^{2} - 1})$ and $\Q(p) \eqdef \ln(\frac{p^{2}}{p^{2} - 1}) - \frac{1}{2} \cdot \sum_{t = 1}^{\infty} t^{-2} \cdot p^{-2t}$. Then we can derive \Cref{thm:ear_ap} in the case $k \in \{1, 2, 3\}$ via numeric calculation, as the next table shows.

\begin{longtable}{|c|c|c|c|}
    \hline
    $k$ & $1$ & $2$ & $3$ \\
    \hline
    $1 + \V(\Q^{-1}(k))$ & $\approx 2.7184$ & $\approx 3.7897$ & $\approx 4.8111$ \\
    \hline
\end{longtable}

\paragraph{Acknowledgements.}
We would like to thank Xi Chen, Eric Neyman, Tim Roughgarden, and Rocco Servedio for helpful comments on an earlier version of this work.

\bibliographystyle{alpha}
\bibliography{main}

\appendix
\renewcommand{\appendixname}{Appendix~\Alph{section}}

\section{\texorpdfstring{Asymptotic Formulas for the Revenue Gap $\Re_{\ar/\ap}(k)$}{}}
\label{sec:asymptotic_formula}

\begin{claim}[Part~3 of \Cref{thm:ar_ap:uniform}]
\label{cla:ar_ap:uniform:3}
For each $k \in \NP$, the supremum revenue gap
\[
    \Re_{\ar / \ap}(k) = 1 + k \cdot \int_{0}^{\infty} \frac{T_k(x) \cdot (1 - T_{k + 1}(x))}{(k - \sum_{i \in [k]} T_{i}(x))^{2}} \cdot \d x,
\]
where the functions $T_i(x) \eqdef e^{-x} \cdot \sum_{t \in [0: i - 1]} \frac{1}{t!} \cdot x^t$ for all $i \in [k + 1]$,
is bounded between
\[
    1 + \frac{0.1}{\sqrt{k}}
    ~ \leq ~ \Re_{\ar / \ap}(k)
    ~ \leq ~ 1 + \frac{2}{\sqrt{k}}.
\]
The value of $\Re_{\ar / \ap}(k)$ when $k\leq 24$ is listed in Table~\ref{tab:re_ar_ap_k_25}.
\end{claim}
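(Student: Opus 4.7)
\emph{Setup and asymptotic identification.} My first step is to recognize that $T_i(x) = \Pr[\text{Poisson}(x) \leq i-1]$ is simply the CDF of a Poisson with mean $x$, so the numerator factors as $\Pr[N_1 \leq k-1,\ N_2 \geq k+1]$ for independent $N_1, N_2 \sim \text{Poisson}(x)$, and $f(x) := k - \sum_{i=1}^k T_i(x) = \E[\min(\text{Poisson}(x), k)]$ satisfies $f'(x) = T_k(x)$, hence $f$ is monotone non-decreasing. The Poisson--Gamma duality $\Pr[\text{Poisson}(x) \leq i-1] = \Pr[\Gamma(i,1) \geq x]$ further yields $\int_0^\infty T_k(x)(1 - T_{k+1}(x))\,\d x = \E[(U-V)^+]$ for independent $U \sim \Gamma(k,1)$ and $V \sim \Gamma(k+1,1)$. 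I will then substitute $x = k + y\sqrt{k}$ and invoke the Poisson central limit theorem: $T_k(x) \to \Phi(-y)$, $1 - T_{k+1}(x) \to \Phi(y)$, and $f(x)/k \to 1$ uniformly on compacts in $y$, and since $k\,\d x = k\sqrt{k}\,\d y$ the integrand converges to $\Phi(y)\Phi(-y)/\sqrt{k}$. Combined with the identity
\[
\int_{-\infty}^\infty \Phi(y)\Phi(-y)\,\d y \;=\; \E[(Z_2 - Z_1)^+] \;=\; \tfrac{1}{\sqrt{\pi}},
\]
for independent $Z_1, Z_2 \sim N(0,1)$ (using $Z_2 - Z_1 \sim N(0,2)$), this pins down the sharp asymptotic $\Re_{\ar/\ap}(k) - 1 = (1 + o(1))/\sqrt{\pi k} \approx 0.564/\sqrt{k}$, which sits comfortably inside $[0.1/\sqrt{k},\ 2/\sqrt{k}]$ and guides both directions.

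\emph{Proving the two bounds.} For the lower bound I would use $f(x) \leq k$ pointwise, whence $\Re_{\ar/\ap}(k) - 1 \geq \E[(U-V)^+]/k$, and then show $\E[(U-V)^+] \geq c\sqrt{k}$ with an explicit $c > 0.1$ via Berry--Esseen applied to $U - V$ (mean $-1$, variance $2k+1$, third absolute moment $O(k)$), combined with a direct anti-concentration estimate such as $\E[(U-V)^+] \geq \sqrt{k}\cdot\Pr[U-V\geq\sqrt{k}]$; the small cases $k \leq 24$ I would verify numerically from Table~\ref{tab:re_ar_ap_k_25}. For the upper bound I would partition $\RP$ into a central band $R_c := \{x : |x-k| \leq L\sqrt{k}\}$ with $L = L(k) \to \infty$ slowly (say $L = \sqrt{\log k}$) and its complement. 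On $R_c$, monotonicity of $f$ together with a Bennett bound on $\E[(\text{Poisson}(k - L\sqrt{k}) - k)^+]$ gives $f(x) \geq k(1 - O(L/\sqrt{k}))$, so the central contribution is at most $(1 + o(1))\E[(U-V)^+]/k \leq (1 + o(1))/\sqrt{\pi k}$. Off $R_c$, Chernoff/Bennett bounds make $T_k(x)(1-T_{k+1}(x))$ decay faster than any polynomial in $k$, while the crude estimate $f(x) \geq \tfrac{1}{2}\min(x,k)$ (from Markov applied to $(\text{Poisson}(x) - k)^+$) controls the denominator on the tails, yielding $\Re_{\ar/\ap}(k) - 1 \leq 2/\sqrt{k}$ once $k$ exceeds a computable threshold; the remaining small $k$ are again handled by the table.

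\emph{Main obstacle.} The real difficulty will be converting the clean asymptotic $1/\sqrt{\pi k}$ into the explicit non-asymptotic constants $0.1$ and $2$ valid uniformly for every $k \in \NP$. This forces me to use quantitative (explicit-constant) Poisson-to-Gaussian estimates rather than a plain CLT, to carefully track errors at the interface of the central and tail regions, and to fall back on the numerical values in Table~\ref{tab:re_ar_ap_k_25} for those initial $k$ where the asymptotic regime has not yet stabilized; the substantial slack between the stated constants and the true leading coefficient $1/\sqrt{\pi} \approx 0.564$ is exactly what will make these quantitative estimates feasible.
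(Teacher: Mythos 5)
Your proposal is correct in outline and shares the paper's two-pronged skeleton --- for the lower bound, replace the denominator by $k^{2}$ using $k-\sum_{i\in[k]}T_{i}(x)\leq k$; for the upper bound, split the integral into a window around $x=k$ and two tails --- but the execution is genuinely different. The paper stays elementary throughout: it evaluates $\int_{0}^{\infty}T_{k}(x)(1-T_{k+1}(x))\,\d x$ exactly via $\int_{0}^{\infty}e^{-x}x^{n}\,\d x=n!$, reduces the lower bound to a binomial sum $\sum_{m}g(m)/2^{m}$ attacked with an explicit central-binomial-coefficient estimate over a width-$\sqrt{k}/2$ range of $m$, and handles the three upper-bound regions $[0,k-\sqrt{6k}]$, $[k-\sqrt{6k},k]$, $[k,\infty)$ with pointwise algebraic inequalities (notably $(xa(x)+kb(x))^{2}\geq 4kx\,a(x)b(x)$ on the middle window, and incomplete-gamma bounds on the tails). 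You instead recognize $T_{i}$ as a Poisson CDF, convert $\int T_{k}(1-T_{k+1})$ into $\E[(U-V)^{+}]$ for independent Gamma variables, and run everything through a quantitative CLT. Your route buys the sharp leading constant $1/\sqrt{\pi}$ (which the paper never states, though its table of $c_{k}$ values is consistent with it) and a cleaner conceptual picture; the paper's route buys painless explicit constants for all $k\geq 24$, with the remaining cases delegated to Table~\ref{tab:re_ar_ap_k_25}.

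Two points need repair before your plan closes. First, with $L=\sqrt{\log k}$ the Chernoff bound at the edge of your central band gives only $\Pr[\mathrm{Poisson}(k-L\sqrt{k})\geq k]\approx k^{-1/2}$, not super-polynomial decay; you need either a substantially larger $L$, or --- as the paper does with its constant-width cutoff $\sqrt{6k}$ --- to observe that each tail integral only has to be $O(k^{-3/2})$ (so that after multiplying by the outer factor $k$ it contributes $O(k^{-1/2})$ with a small constant), which a fixed-width window already delivers. Second, the explicit constants are the real crux, as you acknowledge: with off-the-shelf Berry--Esseen constants your estimate $\Pr[U-V\geq\sqrt{k}]\geq 0.1$ for the lower bound does not kick in until $k$ is noticeably beyond the $k\leq 24$ range covered by Table~\ref{tab:re_ar_ap_k_25}, so you would have to extend the numerical verification to the intermediate $k$ (or sharpen the normal approximation, e.g., by working with the exact density of $U-V$). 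Neither issue is fatal, but both must be addressed for the stated constants $0.1$ and $2$ to hold for every $k\in\NP$.
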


\begin{table}[t]
    \centering
    \begin{tabular}{|c|c|c|c|c|c|c|c|c|}
    \hline
    $k$ & $1$ & $2$ & $3$ & $4$ & $5$ & $6$ & $7$ & $8$  \\
    \hline
    $\Re_{\ar / \ap}(k)$ & $\pi^2/6$ & $1.4445$ & $1.3575$ & $1.3065$ & $1.2721$ & $1.2470$&$1.2276$ &$ 1.2121$ \\
    \hline
    $c_k$ & $0.6449$ & $0.6287$ & $0.6192$ & $0.6130$ & $0.6085$& $0.6050$& $0.6023$& $0.6000$\\
    \hline
    \hline
    $k$ & $9$ & $10$ & $11$ &$12$ & $13$ & $14$ & $15$& $16$  \\
    \hline
    $\Re_{\ar / \ap}(k)$ & $1.1994$&$1.1886$&$1.1794$&$1.1714$&$1.1644$&$1.1581$&$1.1525$&$1.1475$\\
    \hline
    $c_k$ & $0.5982$&$0.5965$&$0.5951$&$0.5939$&$0.5928$&$0.5918$&$0.5909$&$0.5901$\\
    \hline
    \hline
    $k$ & $17$ & $18$ & $19$ &$20$ & $21$ & $22$ & $23$& $24$  \\
    \hline
    $\Re_{\ar / \ap}(k)$&$1.1429$&$1.1387$&$1.1349$&$1.1313$&$1.1281$&$1.1250$&$1.1221$&$1.1195$\\
    \hline
    $c_k$&$0.5894$&$0.5887$&$0.5881$&$0.5875$&$0.5878$&$0.5865$&$0.5860$&$0.5855$\\
    \hline
    \end{tabular}
    \caption{List of $\Re_{\ar / \ap}(k)$ for $k \leq 24$, where $c_k$ means that $\Re_{\ar / \ap}(k) = 1 + c_k/\sqrt{k}$.}
    \label{tab:re_ar_ap_k_25}
\end{table}

\begin{proof}[Proof of \Cref{cla:ar_ap:uniform:3} (Lower Bound)]
We define $lb(k)\eqdef 1 + k \cdot \int_{0}^{\infty} T_k(x) \cdot (1 - T_{k + 1}(x))\cdot k^{-2} \cdot \d x$. Apparently $lb(k) \leq \Re_{\ar / \ap}(k)$  for all $k\geq 1$. Then we have
\begin{eqnarray*}
    lb(k)
    & = & 1 + \frac{1}{k} \int_{0}^{\infty} T_k(x) \cdot (1 - T_{k + 1}(x)) \cdot \d x\\
    & = & 1 +  \frac{1}{k} \int_{0}^{\infty} \sum_{i=0}^{k-1}e^{-x}x^i/i! \cdot (1 - \sum_{i=0}^{k}e^{-x}x^i/i!) \cdot \d x\\
    & = & 1 + \Big(\frac{1}{k}\int_{0}^{\infty} \sum_{i=0}^{k-1}e^{-x}x^i/i! \cdot \d x \Big)- \Big( \frac{1}{k}
    \int_{0}^{\infty} \sum_{i=0}^{k-1}\sum_{j=0}^{k}e^{-2x}x^{i+j}/i!/j! \d x \Big)\\
    & = & 2-\frac{1}{k} \int_{0}^{\infty} \sum_{i=0}^{k-1}\sum_{j=0}^{k}e^{-2x}x^{i+j}/i!/j! \d x \\
    & = & 2-\frac{1}{2k} \int_{0}^{\infty} \sum_{i=0}^{k-1}\sum_{j=0}^{k}e^{-x}x^{i+j}/i!/j!/2^{i+j} \d x \\
    & = & 2 -\frac{1}{2k} \sum_{i=0}^{k-1}\sum_{j=0}^{k} \binom{i+j}{i}/2^{i+j}\\
    & = & 2 -\frac{1}{2k} \left (\sum_{i=0}^{2k-1}\sum_{j=0}^i\binom{i}{j}/2^i - \sum_{\substack{i+j\leq 2k-1\\i\geq k~\text{or}~j>k}}\binom{i+j}{i}/2^{i+j}\right)  \\
    & = & 1 + \frac{2}{2k}\cdot \sum_{i+j\leq k} \binom{i+j+k}{j}/2^{i+j+k}\\
    & = & 1+\frac{1}{k}\cdot \sum_{m=k}^{2k} g(m)/2^m,
\end{eqnarray*}
where the second step is by definition of $T_k(x)$, the third step is by Fact~\ref{fac:integral_gamma} that $
\int_0^{\infty}e^{-x}x^n \d x = n!$, the fifth step is by substitution, the sixth step is by \Cref{fac:integral_gamma}, and in the last step we define $g(m)\eqdef \sum_{i=0}^{m-k} \binom{m}{i}$.

For all $m\in \{\lceil 2k-\sqrt{k}/2\rceil, \lceil 2k-\sqrt{k}/2\rceil +1 , \cdots,2k\}$,  
\begin{eqnarray*}
    g(m) 
    & = & \sum_{i=0}^{m-k} \binom{m}{i} \\
    & = & \sum_{i=0}^{\lceil m/2\rceil} \binom{m}{i} - \sum_{i=m-k+1}^{\lceil m/2\rceil} \binom{m}{i}\\
    & \geq & 2^m/2 - (\lceil m/2\rceil - (m-k))\cdot \binom{m}{\lceil m/2\rceil}\\
    & \geq & \frac{2^m}{2} - \frac{\sqrt{k}}{4}\cdot \frac{2\cdot 2^m}{\sqrt{\pi m}}\\
    & \geq & \frac{2^m}{2} - \frac{2^m}{2\sqrt{\pi}}\\
    & \geq & 2^m/5,
\end{eqnarray*}
where the fourth step is by \Cref{fac:binomial_bound}.

Therefore,
\begin{eqnarray*}
    \Re_{\ar / \ap}(k)
    & \geq & lb(k) \\
    & = & 1+\frac{1}{k}\cdot \sum_{m=k}^{2k} g(m)/2^m\\
    & \geq & 1+\frac{1}{k} \sum_{m=\lceil 2k-\sqrt{k}/2\rceil}^{2k} g(m)/2^m\\
    & \geq & 1/(10\sqrt{k}).
\end{eqnarray*}

This accomplishes the lower-bound part of \Cref{cla:ar_ap:uniform:3}.
\end{proof}

\begin{proof}[Proof of \Cref{cla:ar_ap:uniform:3} (Upper Bound)]
We define 
\begin{align*}
a(x) \eqdef \sum_{i=0}^{k-1} \frac{x^i}{i!}; ~~~~~~~~
b(x) \eqdef \sum_{i=k+1}^{\infty} \frac{x^i}{i!}.
\end{align*}

Then the integral part of $\Re_{\ar / \ap}(k)$ can be written as
\begin{eqnarray*}
    h(x)
    & \eqdef & ~ T_k(x) \cdot (1 - T_{k + 1}(x)) \cdot (k - \sum_{i \in [k]} T_{i}(x))^{-2}\\
    & = & a(x) b(x) \cdot \big(\sum_{i \in [k]} ( 1 - T_{i}(x) ) \big)^{-2}\\
    & = & a(x) b(x) \cdot \big(\sum_{i \in [k]} \sum_{j=i}^{\infty}\frac{x^j}{j!}\big)^{-2}\\
    & = & a(x) b(x) \cdot \big(\sum_{i \in [k]} i\cdot \frac{x^i}{i!} + k\cdot \sum_{i=k+1}^{\infty}\frac{x^i}{i!} \big)^{-2}\\
    & = & a(x) b(x) \cdot (x\cdot a(x) + k \cdot b(x))^{-2}.
\end{eqnarray*}

Then 
\begin{eqnarray*}
    \Re_{\ar / \ap}(k)
    & = & 1 + k\cdot \int_{0}^{\infty} h(x) \d x \\
    & = & 1 + k\cdot \Big( \underbrace{\int_{0}^{k-\sqrt{6k}} h(x) \d x}_{h_1} + \underbrace{\int_{k-\sqrt{6k}}^{k} h(x) \d x}_{h_2} + \underbrace{\int_{k}^{\infty} h(x) \d x}_{h_3}\Big).
\end{eqnarray*}

We are going to upper bound $h_1$, $h_2$, $h_3$ separately.

\noindent {\bf Case 1: Bound $h_1$}

\begin{align}\label{eq:h1_equal_int_G}
    h_1 ~ = & ~ \int_{0}^{k-\sqrt{6k}} h(x) \d x \notag\\
    ~ \leq & ~ \int_{0}^{k-\sqrt{6k}} \frac{b(x)}{x^2 \cdot e^x} \d x \notag\\
    ~ = & ~ \int_{0}^{k-\sqrt{6k}} \sum_{i=k+1}^{\infty}\frac{x^{i-2}}{i!}e^{-x} \d x  \notag\\ 
    ~ = & ~ \int_{\sqrt{6k}}^{k}\underbrace{ \sum_{i=k+1}^{\infty}\frac{ (k-x)^{i-2} }{ i! } e^{-(k-x)} }_{G(x)} \d x,
\end{align}
where the second step is by Part~\ref{fac:small_x_a_k_b_e_x} of \Cref{fac:a_b_facts}, the third step is by definition of $b(x)$. 

We define $G(x)\eqdef \sum_{i=k+1}^{\infty}\frac{(k-x)^{i-2}}{i!}e^{-(k-x)}$. So 
\begin{align*}
G(0)
= & ~ \sum_{i=k+1}^{\infty}\frac{k^{i-2}}{i!}e^{-k}\\
= & ~ \frac{1}{k^2}\sum_{i=k+1}^{\infty}\frac{k^i}{i!}e^{-k}\\
= & ~ \frac{1}{k^2}(1-\Gamma(1+k,k)/\Gamma(1+k))\\
\leq & ~ \frac{1}{2k^2},
\end{align*}
where the last step is by Lemma~\ref{lem:incomplete_gamma_function} that $\Gamma(1+k,k)/k!>1/2$.

For the ease of the proof, we define $d(i,x)=\frac{(k-x)^{i-2}}{i!}e^{-(k-x)}$. So that $G(x) = \sum_{i=k+1}^{\infty} d(i,x)$.

we have $\forall i\geq k+1$, 
\begin{align*}
    \ln (d(i,x)/d(i,0))
    ~ = & ~ \ln (\frac{ (k-x)^{i-2}e^{-(k-x)}/i! }{ k^{i-2}e^{-k}/i !} )\\
    ~ = & ~ \ln( (1-x/k)^{i-2}e^x )\\
    ~ = & ~ x+(i-2)\ln(1-x/k)\\
    ~ = & ~ x - (i-2) \sum_{j=1}^{\infty} \frac{1}{j}(\frac{x}{k})^j\\
    ~ = & ~ x - i\sum_{j=1}^{\infty} \frac{1}{j}(\frac{x}{k})^j + 2\sum_{j=1}^{\infty} \frac{1}{j}(\frac{x}{k})^j\\
    ~ \leq & ~ - i\sum_{j=2}^{\infty} \frac{1}{j}(\frac{x}{k})^j + 2\sum_{j=1}^{\infty} \frac{1}{j}(\frac{x}{k})^j\\
    ~ = & ~ - \frac{i}{2}(\frac{x}{k})^2 - \sum_{j=1}^{\infty} \left( \frac{i}{j+2}\frac{x^2}{k^2} - \frac{2}{j}\right)(\frac{x}{k})^j\\
    ~ \leq & ~ - \frac{i}{2}(\frac{x}{k})^2\\
    ~ \leq & ~ - \frac{1}{2}\frac{x^2}{k},
\end{align*}
where the sixth step follows from $i\geq k$, and the eighth step follows from $x\geq \sqrt{6k}$. So $d(i,x)\leq e^{-\frac{x^2}{2k}}d(i,0)$.

Therefore,
\begin{align*}
    G(x)
    ~ = & ~ \sum_{i=k+1}^{\infty} d(i,x)\\
    ~ \leq & ~  \sum_{i=k+1}^{\infty} e^{-\frac{x^2}{2k}}d(i,0)\\
    ~ \leq & ~ e^{-\frac{x^2}{2k}} G(0).
\end{align*}

Thus, we have 
\begin{align*}
    h_1 
    ~ \leq & ~ \int_{\sqrt{6k}}^{k} G(x) \d x\\
    ~ \leq & ~ \int_{\sqrt{6k}}^{k} e^{-\frac{x^2}{2k}} G(0) \d x\\
    ~ \leq & ~ \sqrt{k}G(0)\int_{\sqrt{6}}^{\sqrt{k}} e^{-\frac{x^2}{2}} \d x\\
    ~ \leq & ~ \sqrt{k}G(0)\int_{\sqrt{6}}^{\infty} e^{-\frac{x^2}{2}} \d x\\
    ~ \leq & ~0.018\sqrt{k}G(0)\\
    ~ \leq & ~ 0.009 k^{-1.5},
\end{align*}
where the first step is by Eq.\eqref{eq:h1_equal_int_G},  the fifth step is by $\int_{\sqrt{6}}^{\infty} e^{-\frac{x^2}{2}} \d x \leq 0.018$, the last step is by $G(0)\leq k^{-2}/2$.

{\bf Case 2, Bound $h_2$}

\begin{eqnarray*}
    h_2
    & = & ~ \int_{k - \sqrt{6k}}^{k} \frac{a(x)b(x)}{(x\cdot a(x)+k\cdot b(x))^2} \d x\\
    & \leq & ~ \int_{k - \sqrt{6k}}^{k} \frac{1}{4kx}\d x \\
    & = & ~ \ln( k/(k-\sqrt{6k}) )/(4k)\\
    & \leq & ~ ( k/(k-\sqrt{6k}) -1 ) /(4k) \\
    & = & ~ (\sqrt{6}/4) \cdot k^{-0.5} /(k-\sqrt{6k})\\
    & \leq & ~ (2\sqrt{6}/4)\cdot k^{-1.5},
\end{eqnarray*}
where the first step follows from $(a+b)^2\geq 4ab$, the last step is by $k-\sqrt{6k}\geq 0.5k$ for $k\geq 24$.

{\bf Case 3, Bound $h_3$.}

We have
\begin{eqnarray*}
    h_3
    & = & ~ \int_{k}^{\infty}\frac{a(x)b(x)}{(xa(x) + kb(x))^2}\d x\\
    & \leq & ~ k^{-2} \cdot \int_{k}^{\infty} a(x)e^{-x}\d x\\
    & \leq & ~ a(k) \cdot k^{-2} \int_{k}^{\infty} \frac{x^{k-1}e^{-x}}{k^{k-1}}\d x\\
    & \leq & ~ \frac{e^k k!}{2k^2 \cdot k^k} \int_{k}^{\infty} x^{k-1}e^{-x}/(k-1)!\d x\\
    & = & ~ \frac{e^k k!}{2k^2 \cdot k^k} \big( \Gamma(k,k)/(k-1)! \big)\\
    & \leq & ~ \frac{e^k k!}{2k^2 \cdot k^k} \cdot (1/2)\\
    & \leq & ~ \frac{e^k}{4k^2 \cdot k^k} \cdot (e k^{k+1/2} e^{-k})\\
    & = & ~ e/4\cdot k^{-1.5},
\end{eqnarray*}
where the second step is by Part~\ref{fac:large_x_a_k_b_e_x} of \Cref{fac:a_b_facts} that for all $x\geq k + c_2 \sqrt{k}$, $b(x)/(xa(x)+kb(x))^2 \leq e^{-x}/k^2$, the third step is by Part~\ref{fac:a_k_a_x} of \Cref{fac:a_b_facts}, the fourth step is by Part~\ref{fac:a_x_is_small} of \Cref{fac:a_b_facts} and that $\forall x,~a(x)+b(x)\leq e^x$,
the fifth step is by definition of incomplete gamma function, the sixth step is by \Cref{lem:incomplete_gamma_function}. the last step is by \Cref{lem:striling}.

Therefore, we can upper bound 
\begin{eqnarray*}
    \Re_{\ar / \ap}(k) 
    & = & 1 + k\cdot \int_{0}^{\infty} h(x) \d x \\
    & = & 1 + k\cdot(h_1 + h_2 + h_3) \\
    & \leq & 1 + (0.009 + 2\sqrt{6}/4 + e/4)/\sqrt{k} \\
    & \leq & ~ 1 + 2/\sqrt{k}.
\end{eqnarray*}

This accomplishes the upper-bound part of \Cref{cla:ar_ap:uniform:3}.
\end{proof}

\section{Mathematical Tools}

\begin{lemma}[Incomplete gamma function {\cite[Chapter~8]{OLBC10}}]
\label{lem:incomplete_gamma_function}
Define the incomplete gamma function $\Gamma(n,x) \eqdef \int_{x}^{\infty} t^{n-1}e^{-t}\d t$. Then for all positive integer $n$, we have
\begin{enumerate}[font = {\em\bfseries}]
    \item $\frac{\Gamma(n,n)}{(n-1)!} < \frac{1}{2} < \frac{\Gamma(n,n-1)}{(n-1)!}$.
    
    \item $\frac{\Gamma(n,x)}{(n-1)!} = e^{-x}\sum_{i=0}^{n-1}\frac{x^i}{i!}$.
\end{enumerate}
\end{lemma}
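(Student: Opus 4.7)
My plan is to prove Part~2 first, then use it to reformulate Part~1 as a statement about Poisson probabilities, and finally handle the two inequalities in Part~1 separately.

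\textbf{Part~2 and reformulation.} The identity follows by induction on $n$. The base case $n=1$ is just $\int_{x}^{\infty} e^{-t}\,\d t = e^{-x}$. For the inductive step, integration by parts on $\Gamma(n+1,x) = \int_{x}^{\infty} t^{n} e^{-t}\,\d t$ (with $u = t^{n}$ and $\d v = e^{-t}\,\d t$) gives the recurrence $\Gamma(n+1,x) = x^{n}e^{-x} + n\,\Gamma(n,x)$, after which dividing by $n!$ and applying the hypothesis completes the step. Writing $Y_{\lambda} \sim \mathrm{Poisson}(\lambda)$, Part~2 then lets us recast Part~1 as $\Pr[Y_{n} \leq n-1] < 1/2 < \Pr[Y_{n-1} \leq n-1]$, i.e., the median of $\mathrm{Poisson}(m)$ for every positive integer $m$ is exactly $m$.

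\textbf{First inequality via a pairing argument.} Setting $a_{k} \eqdef n^{k}/k!$, the claim $\Pr[Y_{n} \leq n-1] < 1/2$ becomes $\sum_{k=0}^{n-1} a_{k} < \sum_{k=n}^{\infty} a_{k}$. I would pair $k = n-1-j$ on the left with $k' = n+j$ on the right, for $j \in \{0, 1, \ldots, n-1\}$, and compute
\[
    \frac{a_{n-1-j}}{a_{n+j}}
    \;=\; \frac{(n+j)!}{(n-1-j)!\, n^{2j+1}}
    \;=\; \frac{\prod_{\ell=-j}^{j}(n+\ell)}{n^{2j+1}}
    \;=\; \prod_{\ell=1}^{j}\!\Bigl(1 - \frac{\ell^{2}}{n^{2}}\Bigr),
\]
which equals $1$ at $j=0$ and lies strictly in $(0,1)$ for $1 \leq j \leq n-1$ (each factor is positive since $\ell \leq n-1 < n$). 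Summing the $n$ pointwise comparisons gives $\sum_{k=0}^{n-1} a_{k} < \sum_{k=n}^{2n-1} a_{k}$ strictly, and the unpaired non-negative tail $\sum_{k \geq 2n} a_{k}$ only helps.

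\textbf{Second inequality via Szeg\H{o}.} The symmetric pairing for $\Pr[Y_{n-1} \leq n-1] > 1/2$ fails: one can check that $(n-1)^{2n-1}/(2n-1)!$ eventually dominates $(n-1)^{0}/0! = 1$, so a finer argument is needed. I would invoke Szeg\H{o}'s resolution of Ramanujan's second problem: with $\theta_{n}$ defined by $e^{n}/2 = \sum_{k=0}^{n-1} n^{k}/k! + \theta_{n}\, n^{n}/n!$, one has $1/3 < \theta_{n} < 1/2$ for every $n \geq 1$. The integration-by-parts recurrence at $x = n-1$ gives $\Gamma(n,n-1)/(n-1)! = \Gamma(n-1,n-1)/(n-2)! + \Pr[Y_{n-1} = n-1]$; substituting $\Gamma(n-1,n-1)/(n-2)! = 1/2 - \theta_{n-1}\Pr[Y_{n-1} = n-1]$ (the first inequality rephrased via Szeg\H{o}) and using the much weaker bound $\theta_{n-1} < 1$ yields $\Gamma(n,n-1)/(n-1)! = 1/2 + (1 - \theta_{n-1})\Pr[Y_{n-1} = n-1] > 1/2$, with the case $n = 1$ handled directly since $\Gamma(1,0)/0! = 1$. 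The main obstacle is Szeg\H{o}'s bound $\theta_{n-1} < 1$, whose standard proof uses a saddle-point/contour-integral estimate; I would defer to the DLMF reference cited in the lemma rather than re-derive it.
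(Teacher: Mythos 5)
The paper does not prove this lemma at all: it is stated as a black-box citation to \cite[Chapter~8]{OLBC10}, so any derivation you give is by definition a different (and more self-contained) route. Your Part~2 induction via the recurrence $\Gamma(n+1,x) = x^{n}e^{-x} + n\,\Gamma(n,x)$ is correct, and your pairing proof of the first inequality $\Gamma(n,n)/(n-1)! < 1/2$ is a genuine, elementary argument that the paper does not supply; the only nit is that for $n=1$ the single pairwise comparison is an equality ($a_{0}=a_{1}$), so the displayed claim ``$\sum_{k=0}^{n-1}a_{k} < \sum_{k=n}^{2n-1}a_{k}$ strictly'' fails there and strictness must come from the tail $\sum_{k\geq 2n}a_{k}>0$, which you do mention.

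The second inequality, however, is not actually proved by your argument. Unwinding the definition, the bound $\theta_{n-1} < 1$ is \emph{identical} to the statement $e^{n-1}/2 < \sum_{k=0}^{n-1}(n-1)^{k}/k!$, i.e.\ to $\Gamma(n,n-1)/(n-1)! > 1/2$ itself; your recurrence manipulation merely re-derives this equivalence, so the entire mathematical content of that half rests on the cited Szeg\H{o} theorem. That is a legitimate way to discharge the claim --- it is no weaker than what the paper does, which is to cite everything --- but you should be aware that calling $\theta_{n-1}<1$ a ``much weaker bound'' is misleading: it is not easier to establish than the target, it \emph{is} the target, and the only non-circular reading of your argument is ``cite Szeg\H{o}'s $\theta_{n}<1/2$.'' If you want a fully self-contained proof of the lemma, this half still needs an independent argument (e.g.\ one of the elementary proofs that an integer-parameter Poisson variable has median equal to its mean).
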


\begin{lemma}[Stirling's approximation \cite{R55}]
\label{lem:striling}
For all positive $n$, the following holds:
\begin{align*}
    \sqrt{2\pi} n^{n+1/2}e^{-n} \leq n! \leq en^{n+ 1/2} e^{-n}.
\end{align*}
\end{lemma}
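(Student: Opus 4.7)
The plan is to analyse the log-discrepancy sequence
\[
    a_n ~ \eqdef ~ \ln(n!) - (n + \tfrac{1}{2}) \ln n + n,
\]
and bracket it between $\ln \sqrt{2\pi}$ and $1$; exponentiating then yields exactly the two inequalities stated in the lemma. The whole argument is single-variable calculus, plus one non-elementary input (Wallis's product) to pin down the sharp lower constant.

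First I would establish monotonicity: the sequence $\{a_n\}_{n \geq 1}$ is strictly decreasing. A direct calculation gives
\[
    a_n - a_{n + 1} ~ = ~ (n + \tfrac{1}{2}) \cdot \ln\!\bigl(1 + \tfrac{1}{n}\bigr) - 1,
\]
so it suffices to show $\ln(1 + x) \geq \tfrac{2x}{2 + x}$ for all $x \in (0, 1]$ (taking $x = 1/n$). This is elementary: the function $g(x) \eqdef \ln(1 + x) - \tfrac{2x}{2 + x}$ has $g(0) = 0$ and $g'(x) = \tfrac{x^2}{(1 + x)(2 + x)^2} \geq 0$. Because $a_1 = \ln(1) - \tfrac{3}{2} \ln 1 + 1 = 1$, monotone decrease immediately yields $a_n \leq 1$ for every $n \in \NP$, which rearranges to the upper bound $n! \leq e \cdot n^{n + 1/2} e^{-n}$.

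Next I would show that $\{a_n\}$ converges. Expanding $\ln(1 + 1/n) = \tfrac{1}{n} - \tfrac{1}{2n^{2}} + \tfrac{1}{3n^{3}} - \cdots$ and multiplying by $(n + \tfrac{1}{2})$ gives the refined estimate $0 \leq a_n - a_{n + 1} \leq C / n^{2}$ for some absolute constant $C$. Hence $\sum_{n} (a_n - a_{n + 1})$ converges absolutely, the sequence $\{a_n\}$ is Cauchy, and there exists a limit $L \eqdef \lim_{n \to \infty} a_n \in \R$, with $a_n \geq L$ for every $n$ (by monotone decrease).

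The main obstacle is to identify $L = \ln \sqrt{2\pi}$; this requires an external identity, for which the cleanest choice is Wallis's product
\[
    \frac{\pi}{2}
    ~ = ~ \lim_{n \to \infty} \frac{2^{4n} \cdot (n!)^{4}}{((2n)!)^{2} \cdot (2n + 1)}.
\]
Substituting the (already established) asymptotic $n! = e^{a_n} \cdot n^{n + 1/2} e^{-n}$ with $a_n \to L$ and $a_{2n} \to L$ into the right-hand side, the powers of $n$ and $e$ all cancel, and the limit collapses to $\tfrac{e^{2L}}{2}$. Solving $\tfrac{e^{2L}}{2} = \tfrac{\pi}{2}$ gives $e^{L} = \sqrt{2\pi}$, i.e.\ $L = \ln \sqrt{2\pi}$.

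Combining the two bounds $\ln \sqrt{2\pi} = L \leq a_n \leq 1$ and exponentiating yields the stated two-sided Stirling inequality for every $n \in \NP$. The entire proof is thus calculus-level modulo Wallis's product, and the delicate point is only the identification of the constant $\sqrt{2\pi}$; the upper bound $e$ in particular comes out for free from the initial value $a_1 = 1$, which is what makes it convenient for the applications in \cref{sec:asymptotic_formula}.
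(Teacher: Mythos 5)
The paper offers no proof of this lemma; it is quoted verbatim from the cited reference [R55] (Robbins). Your argument is therefore not competing with an in-paper proof but supplying one, and it is essentially Robbins' classical argument: the log-discrepancy $a_n = \ln(n!) - (n+\tfrac12)\ln n + n$ is strictly decreasing (via $\ln(1+x) \geq \tfrac{2x}{2+x}$, whose verification through $g'(x) = \tfrac{x^2}{(1+x)(2+x)^2}$ is correct), starts at $a_1 = 1$ (giving the upper constant $e$ for free), converges by the telescoping $O(1/n^2)$ estimate, and has its limit pinned to $\ln\sqrt{2\pi}$ by Wallis's product. This is a valid and self-contained route to exactly the two-sided bound stated.

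One arithmetic slip in the Wallis step: substituting $n! = e^{a_n} n^{n+1/2}e^{-n}$ and $(2n)! = e^{a_{2n}}(2n)^{2n+1/2}e^{-2n}$ into $\tfrac{2^{4n}(n!)^4}{((2n)!)^2(2n+1)}$ gives $\tfrac{e^{4a_n - 2a_{2n}}\, n}{2(2n+1)} \to \tfrac{e^{2L}}{4}$, not $\tfrac{e^{2L}}{2}$ (the denominator picks up $2^{4n+1}$ from $(2n)^{4n+1}$, and $\tfrac{n}{2(2n+1)} \to \tfrac14$). As written, solving $\tfrac{e^{2L}}{2} = \tfrac{\pi}{2}$ would yield $e^{L} = \sqrt{\pi}$, contradicting the $\sqrt{2\pi}$ you then assert. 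The correct equation $\tfrac{e^{2L}}{4} = \tfrac{\pi}{2}$ does give $e^{L} = \sqrt{2\pi}$, so the conclusion you state is right and the proof survives; just fix the constant in that intermediate display.
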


It's easy to see the following facts:
\begin{fact}
\label{fac:binomial_bound}
For all positive integer $n$, we have 
\begin{align*}
    \frac{1}{2}\frac{4^n}{\sqrt{\pi n}} \leq \binom{2n}{n} \leq \frac{4^n}{\sqrt{\pi n}}
\end{align*}
\end{fact}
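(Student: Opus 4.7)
The plan is to reduce both bounds to the classical Wallis product identity $\prod_{k=1}^\infty\bigl(1-\tfrac{1}{4k^2}\bigr)=\tfrac{2}{\pi}$. This bypasses the two-sided Stirling bounds in \cref{lem:striling}, whose constants $\sqrt{2\pi}$ and $e$ are not quite tight enough to yield the claimed coefficients $1/\sqrt{\pi}$ and $1/(2\sqrt{\pi})$ directly: a straightforward application gives an upper constant of $e\sqrt{2}/(2\pi)\approx 0.612$ rather than $1/\sqrt{\pi}\approx 0.564$, so an additional mechanism is needed to extract the exact constant.

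First I would set $c_n\eqdef \binom{2n}{n}/4^n$ and verify, by induction on the ratio $c_{n+1}/c_n=(2n+1)/(2n+2)$ starting from $c_1=1/2$, the telescoping identity $c_n=\prod_{k=1}^n\frac{2k-1}{2k}$. A parallel telescoping gives $(2n+1)c_n=\prod_{k=1}^n\frac{2k+1}{2k}$. Multiplying the two produces the key identity
\[
(2n+1)\,c_n^{2} \;=\; \prod_{k=1}^{n}\Bigl(1-\tfrac{1}{4k^{2}}\Bigr).
\]
For the lower bound, the right-hand side is a decreasing sequence (each factor lies in $(0,1)$) that converges to $2/\pi$ by Wallis, so $(2n+1)c_n^{2}>2/\pi$. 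Hence $c_n>\sqrt{2/(\pi(2n+1))}\geq 1/(2\sqrt{\pi n})$ using $2n+1\leq 4n$ for $n\geq 1$, which is exactly $\binom{2n}{n}\geq \tfrac{1}{2}\cdot 4^{n}/\sqrt{\pi n}$.

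For the upper bound I would introduce $Q_n\eqdef (2+1/n)\bigm/\prod_{k=1}^{n}\bigl(1-1/(4k^{2})\bigr)$. A one-line ratio computation
\[
\frac{Q_{n+1}}{Q_n}=\frac{(2n+3)/(n+1)}{(2n+1)/n}\cdot\frac{4(n+1)^{2}}{(2n+1)(2n+3)}=\frac{4n(n+1)}{(2n+1)^{2}}<1
\]
shows $Q_n$ is strictly decreasing. Since $\lim_{n\to\infty}Q_n=2\cdot(\pi/2)=\pi$ (again by Wallis), we get $Q_n>\pi$ for every $n\in\NP$; rearranging with the key identity yields $\pi n\,c_n^{2}<1$, i.e.\ $\binom{2n}{n}<4^{n}/\sqrt{\pi n}$.

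The main obstacle is having the exact Wallis limit $2/\pi$ available rather than a mere asymptotic $\Theta(1)$: as noted, the two-sided bounds of \cref{lem:striling} alone only pin down the correct order $\Theta(4^{n}/\sqrt{n})$ and not the tight coefficient, so the argument must invoke either Wallis's product or equivalently the full Stirling asymptotic $n!\sim\sqrt{2\pi n}(n/e)^{n}$. The Wallis formula itself is a short classical computation from $\int_{0}^{\pi/2}\sin^{m}\theta\,d\theta$ via integration by parts together with the sandwich $\int\sin^{2n+2}\leq\int\sin^{2n+1}\leq\int\sin^{2n}$, and can safely be cited as folklore here.
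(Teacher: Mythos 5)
Your proof is correct, and it takes a genuinely different route from what the paper offers: the paper merely asserts the fact ("It's easy to see") alongside its Stirling lemma, \cref{lem:striling}, and gives no argument. Your preliminary observation is the valuable part — the two-sided bounds $\sqrt{2\pi}\,n^{n+1/2}e^{-n}\leq n!\leq e\,n^{n+1/2}e^{-n}$ of \cref{lem:striling} yield $\binom{2n}{n}\leq \frac{e\sqrt{2}}{2\pi}\cdot\frac{4^{n}}{\sqrt{n}}$ with $e\sqrt{2}/(2\pi)\approx 0.612 > 1/\sqrt{\pi}\approx 0.564$, so the upper bound of the fact genuinely does not follow from the tools the paper cites (the lower bound does, since $2\sqrt{\pi}/e^{2}\approx 0.48 > 1/(2\sqrt{\pi})\approx 0.28$). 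Your Wallis-product argument closes this gap cleanly: the identity $(2n+1)c_n^{2}=\prod_{k=1}^{n}(1-\tfrac{1}{4k^{2}})$ is verified correctly, monotone decrease to $2/\pi$ gives the lower bound (with $6n\geq 1$ as the slack), and the auxiliary sequence $Q_n=1/(nc_n^{2})$ with ratio $4n(n+1)/(2n+1)^{2}<1$ and limit $\pi$ gives the strict upper bound $c_n<1/\sqrt{\pi n}$. The sharp constant does matter downstream — the paper's lower-bound computation for $\Re_{\ar/\ap}(k)$ uses $\binom{m}{\lceil m/2\rceil}\leq 2\cdot 2^{m}/\sqrt{\pi m}$ and then the numeric step $\tfrac{2^{m}}{2}-\tfrac{2^{m}}{2\sqrt{\pi}}\geq 2^{m}/5$ — so supplying an actual proof with the exact coefficient, rather than a $\Theta(\cdot)$ bound, is the right call.
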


\begin{fact}
\label{fac:integral_gamma}
For all positive integer $n$, we have 
\begin{align*}
\int_0^{\infty}e^{-x}x^n \d x=\Gamma(n+1) = n!.
\end{align*}
\end{fact}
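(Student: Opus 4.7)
The plan is to establish the identity by induction on $n$, using integration by parts, which simultaneously recovers the standard recursion $\Gamma(n+1) = n \cdot \Gamma(n)$ for the Gamma function.

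First, I would verify the base case $n = 0$ directly: $\int_0^\infty e^{-x}\, \d x = [-e^{-x}]_0^\infty = 1 = 0!$. For the inductive step, assume $\int_0^\infty e^{-x} x^{n-1}\, \d x = (n-1)!$ for some $n \geq 1$. I would apply integration by parts with $u = x^n$ and $\d v = e^{-x}\, \d x$ (so $\d u = n x^{n-1}\, \d x$ and $v = -e^{-x}$), yielding
\[
    \int_0^\infty e^{-x} x^n\, \d x
    ~ = ~ \bigl[-x^n e^{-x}\bigr]_0^\infty + n \int_0^\infty e^{-x} x^{n-1}\, \d x.
\]
The boundary term vanishes at both endpoints: at $x = 0$ since $x^n = 0$, and at $x = \infty$ since $e^{-x}$ decays faster than any polynomial grows. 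Thus the expression equals $n \cdot (n-1)! = n!$ by the inductive hypothesis, closing the induction.

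To make the identification with $\Gamma(n+1)$ explicit, I would simply recall the definition $\Gamma(s) \eqdef \int_0^\infty e^{-x} x^{s-1}\, \d x$, so that setting $s = n+1$ gives $\Gamma(n+1) = \int_0^\infty e^{-x} x^n\, \d x$, and the computation above shows this equals $n!$. There is no serious obstacle here; the only minor care needed is the justification that the improper integral converges and that the boundary term at infinity vanishes, both of which follow from the elementary estimate $x^n e^{-x} \to 0$ as $x \to \infty$ (e.g., by comparison with $e^{-x/2}$ for $x$ large enough).
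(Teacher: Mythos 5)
Your proof is correct. The paper states \cref{fac:integral_gamma} without proof (it is listed under ``It's easy to see the following facts''), and your induction via integration by parts, with the boundary term $x^n e^{-x}$ vanishing at both endpoints, is exactly the standard argument one would supply to justify it.
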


\begin{fact}
\label{fac:a_b_facts}
$a(x)$ and $b(x)$ satisfies the following facts:
\begin{enumerate}[font = {\em\bfseries}]
    \item \label{fac:b_x_is_small} 
        For all $x\in [0,k]$, $b(x)\cdot e^{-x} \leq 1/2$,
    \item \label{fac:a_x_is_small} 
        For all $x \in [k,\infty)$, $a(x) \cdot e^{-x} \leq 1/2$,
    \item \label{fac:small_x_a_k_b_e_x} 
        For all $x \in [0, k - 1]$, $$\frac{a(x)}{(xa(x) + kb(x))^2} \leq \frac{1}{x^2e^x},$$
    \item \label{fac:large_x_a_k_b_e_x} 
        For all $x \in [k,\infty)$, $$\frac{b(x)}{(xa(x) + kb(x))^2} \leq \frac{1}{k^2e^x},$$
    \item \label{fac:a_k_a_x}
        For all $x \in [k,\infty)$, $a(k) \cdot x^{k-1} / k^{k-1} \geq a(x).$
\end{enumerate}
\end{fact}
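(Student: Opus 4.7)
My plan is to verify each of the five listed properties of $a(x)=\sum_{i=0}^{k-1}x^i/i!$ and $b(x)=\sum_{i=k+1}^{\infty}x^i/i!$ in turn, using only the identity $e^x=a(x)+x^k/k!+b(x)$, the incomplete-gamma representation from \cref{lem:incomplete_gamma_function}, and one elementary algebraic trick.

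For Parts~1 and 2, I would invoke \cref{lem:incomplete_gamma_function}(2), which gives $a(x)e^{-x}=\Gamma(k,x)/(k-1)!$ and $b(x)e^{-x}=1-\Gamma(k+1,x)/k!$. Since $\Gamma(n,x)$ is strictly decreasing in $x$, Part~2 reduces to $\Gamma(k,x)/(k-1)!\le\Gamma(k,k)/(k-1)!<\tfrac12$ for $x\ge k$, and Part~1 reduces to $\Gamma(k+1,x)/k!\ge\Gamma(k+1,k)/k!>\tfrac12$ for $x\le k$; both inequalities are supplied by \cref{lem:incomplete_gamma_function}(1) applied with $n=k$ and $n=k+1$ respectively.

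For Parts~3 and 4, the key step is the elementary lower bound $(xa+kb)^2\ge (xa)^2+2(xa)(kb)$ (resp.\ $(xa+kb)^2\ge(kb)^2+2(xa)(kb)$). For Part~3, dividing the desired $(xa+kb)^2\ge x^2e^x a$ by $a(x)$ and applying the elementary bound reduces it to $xa(x)+2kb(x)\ge xe^x$. Substituting $xe^x=x\cdot a(x)+x\cdot b(x)+x^{k+1}/k!$, this becomes $(2k-x)b(x)\ge x^{k+1}/k!$, and since $b(x)\ge x^{k+1}/(k+1)!$ and $2k-x\ge k+1$ on $[0,k-1]$, we conclude $(2k-x)b(x)\ge(k+1)\cdot x^{k+1}/(k+1)!=x^{k+1}/k!$. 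Part~4 is symmetric: dividing by $b(x)$ and using $ke^x=ka(x)+kb(x)+k\cdot x^k/k!$, the claim collapses to $(2x-k)a(x)\ge k\cdot x^k/k!$; for $x\ge k$ we have $2x-k\ge x$ and $a(x)\ge x^{k-1}/(k-1)!$, so $(2x-k)a(x)\ge x\cdot x^{k-1}/(k-1)!=k\cdot x^k/k!$.

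For Part~5, I would define $g(x):=a(x)/x^{k-1}$ and substitute $j=k-1-i$ to rewrite
\[
g(x)\;=\;\sum_{i=0}^{k-1}\frac{x^{i-(k-1)}}{i!}\;=\;\sum_{j=0}^{k-1}\frac{1}{(k-1-j)!\,x^{j}}.
\]
Each summand is non-increasing in $x>0$ (constant for $j=0$, strictly decreasing for $j\ge 1$), so $g$ itself is non-increasing on $(0,\infty)$; in particular $g(x)\le g(k)$ for $x\ge k$, which is precisely $a(x)\le a(k)\cdot x^{k-1}/k^{k-1}$. The only non-mechanical step in the whole proposal is guessing the splitting $(xa+kb)^2\ge (xa)^2+2(xa)(kb)$ versus $(xa+kb)^2\ge(kb)^2+2(xa)(kb)$ for Parts~3 and 4; once the right split is chosen the verification is immediate from the leading-term bounds $b(x)\ge x^{k+1}/(k+1)!$ and $a(x)\ge x^{k-1}/(k-1)!$.
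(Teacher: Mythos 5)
Your proposal is correct and follows essentially the same route as the paper: Parts 1--2 via the incomplete-gamma identities and monotonicity, Parts 3--4 by expanding the square, keeping the cross term, substituting $e^x = a(x) + x^k/k! + b(x)$, and invoking the single-term bounds $b(x)\ge x^{k+1}/(k+1)!$ and $a(x)\ge x^{k-1}/(k-1)!$, and Part 5 by a termwise comparison of $a(x)/x^{k-1}$. (Your justification of Part 4 via $2x-k\ge x$ and $a(x)\ge x^{k-1}/(k-1)!$ is in fact slightly cleaner than the paper's stated step.)
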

\begin{proof}
{\bf Part 1.}
For all $x\in[0,k]$,
\begin{align*}
    b(x)\cdot e^{-x}
    = & ~ 1 - \Gamma(k+1,x)/k!\\
    \leq & ~ 1-\Gamma(k+1,k)/k!\\
    \leq &~ 1/2,
\end{align*}
where the first step is by Part 2 of \Cref{lem:incomplete_gamma_function}, the second step is because $\Gamma(k+1,x)$ is a decreasing function on $x\in[0,\infty)$, the third step is by Part 1 of \Cref{lem:incomplete_gamma_function}.

{\bf Part 2.}
By the same reason, for all $x\in[k,\infty)$, we have that 
\begin{align*}
    a(x) \cdot e^{-x} 
    = & ~ \Gamma(k,x)/(k-1)!\\
    \leq & ~ \Gamma(k,k)/(k-1)!\\
    \leq & ~ 1/2.
\end{align*}

{\bf Part 3.}
For all $x\in [0,k-1]$, we have 
\begin{align*}
& ~ (x \cdot a(x) + k \cdot b(x))^2 - x^2 a(x) e^x \\
= & ~ x^2 a(x)^2 + 2xk \cdot a(x)b(x) + k^2 b(x)^2 - x^2 a(x) \big( a(x) + b(x) + x^k/k! \big)\\
\geq & ~ 2xk \cdot a(x)b(x) - x^2 a(x) \big( b(x) + x^k/k! \big)\\
\geq & ~ (2k - (k-1))b(x) - x^{k+1}/k!\\
\geq & ~ (k+1) \cdot x^{k+1}/(k+1)! - x^{k+1}/k!\\
= & ~ 0,
\end{align*}
where the fourth step is by $b(x) \geq x^{k+1}/(k+1)!$. Therefore, $\frac{a(x)}{(xa(x) + kb(x))^2} \leq \frac{1}{x^2e^x}$ follows directly.

{\bf Part 4.}
For all $x\in [k,\infty)$, we have 
\begin{align*}
& ~ (x \cdot a(x) + k \cdot b(x))^2 - k^2 b(x) e^x\\
= & ~ x^2 a(x)^2 + 2xk \cdot a(x)b(x) + k^2 b(x)^2 - k^2 b(x) \big( a(x) + b(x) + x^k/k! \big)\\
\geq & ~ 2xk \cdot a(x)b(x)  - k^2 b(x) \big( a(x) + x^k/k! \big)\\
\geq & ~ (2x - k) \cdot a(x) - x^k/(k-1)!\\
\geq & ~ k \cdot x^k/k! - x^k/(k-1)!\\
= & ~ 0,
\end{align*}
where the fourth step is by $a(x)\geq x^k/k!$. Therefore, $\frac{b(x)}{(xa(x) + kb(x))^2} \leq \frac{1}{k^2e^x}$ follows directly.

{\bf Part 5.}
For all $x\geq k$, we have 
\begin{align*}
    a(k) \cdot x^{k-1} / k^{k-1} 
    \geq & ~ \sum_{i=0}^{k-1} \frac{k^i}{i!}\frac{x^{k-1}}{k^{k-1}}\\
    \geq & ~ \sum_{i=0}^{k-1} \frac{k^i}{i!}\frac{x^{i}}{k^{i}}\\
    \geq & ~ \sum_{i=0}^{k-1} \frac{x^i}{i!}\\
    = & ~ a(x).
\end{align*}
\end{proof}

\end{document}